\let\norm\undefined
\DeclarePairedDelimiter\norm{\lVert}{\rVert}
\newcommand{\beq}{\begin{equation}}
\newcommand{\eeq}{\end{equation}}
\newcommand{\beqa}{\begin{eqnarray}}
\newcommand{\eeqa}{\end{eqnarray}}
\newcommand{\ea}{\end{array}}
\def\eea{\end{eqnarray}}
\def\<{\langle}
\def\>{\rangle}
\newtheorem{thm}{Theorem}
\newtheorem{lemma}{Lemma}
\newtheorem{prop}{Proposition}
\newtheorem{cor}{Corollary}
\theoremstyle{definition}
\newtheorem{defin}{Definition}
\def\[#1\]{%
  \begin{equation}\begin{gathered}#1\end{gathered}\end{equation}%
}
\newcommand{\eneq}{\end{equation}}
\def\ea{{\it et al.}}
\newcommand{\tot}{\text{tot}}
\newcommand{\loc}{\text{loc}}
\newcommand{\nn}{\nonumber}
\newcommand{\mZ}{\mathcal{Z}}
\newcommand{\mP}{\mathcal{P}}
\newcommand{\mA}{\mathcal{A}}
\newcommand{\mC}{\mathcal{C}}
\newcommand{\mV}{\mathcal{V}}
\newcommand{\defn}{:=}
\newcommand{\btp}{\begin{tikzpicture}}
\newcommand{\etp}{\end{tikzpicture}}
\newcolumntype{L}[1]{>{\raggedright\arraybackslash}p{#1}}
\newcolumntype{C}[1]{>{\centering\arraybackslash}p{#1}}
\newcolumntype{R}[1]{>{\raggedleft\arraybackslash}p{#1}}
\newsavebox{\@brx}
\newcommand{\llangle}[1][]{\savebox{\@brx}{\(\m@th{#1\langle}\)}%
  \mathopen{\copy\@brx\kern-0.5\wd\@brx\usebox{\@brx}}}
\newcommand{\rrangle}[1][]{\savebox{\@brx}{\(\m@th{#1\rangle}\)}%
  \mathclose{\copy\@brx\kern-0.5\wd\@brx\usebox{\@brx}}}
\newcommand{\lgen}{\llangle}
\newcommand{\rgen}{\rrangle}
\newcommand{\ktO}{\ket{\overline{0}}}
\newcommand{\ktW}{\ket{W}}
\newcommand{\HImHopLambda}{H_{\text{ImHop},\, \Lambda}}
\newcommand{\hNtotLambda}{\hat{N}_{\text{tot},\, \Lambda}}
\def\l@subsubsection#1#2{}
\renewcommand\onecolumngrid{%
  \do@columngrid{one}{\@ne}%
  \def\set@footnotewidth{\onecolumngrid}%
  \def\footnoterule{\kern-6pt\hrule width 1.5in\kern6pt}%
}
\renewcommand\twocolumngrid{%
  \def\footnoterule{%
    \dimen@\skip\footins\divide\dimen@\thr@@
    \kern-\dimen@\hrule width .5in\kern\dimen@}%
  \do@columngrid{mlt}{\tw@}%
}
\begin{document}
\title{Distinct Types of Parent Hamiltonians for Quantum States:\\ Insights from the $W$ State as a Quantum Many-Body Scar}
\author{Lei Gioia}
\email{lei.gioia.y@gmail.com}
\affiliation{Walter Burke Institute for Theoretical Physics, Caltech, Pasadena, CA, USA}
\affiliation{Department of Physics, Caltech, Pasadena, CA, USA}
\author{Sanjay Moudgalya}
\email{sanjay.moudgalya@gmail.com}
\affiliation{School of Natural Sciences, Technische Universit\"{a}t M\"{u}nchen (TUM), James-Franck-Str. 1, 85748 Garching, Germany}
\affiliation{Munich Center for Quantum Science and Technology (MCQST), Schellingstr. 4, 80799 M\"{u}nchen, Germany}
\author{Olexei I.  Motrunich}
\email{motrunch@caltech.edu}
\affiliation{Walter Burke Institute for Theoretical Physics, Caltech, Pasadena, CA, USA}
\affiliation{Department of Physics, Caltech, Pasadena, CA, USA}
\begin{abstract}
The construction of parent Hamiltonians that possess a given state as their ground state is a well-studied problem.
In this work, we generalize this notion by considering simple quantum states and examining local Hamiltonians that have these states as exact eigenstates.
These states often correspond to Quantum Many-Body Scars (QMBS) of their respective parent Hamiltonians.
Motivated by earlier works on Hamiltonians with QMBS, in this work we formalize the differences between \textit{three} distinct types of parent Hamiltonians, which differ in their decompositions into strictly local terms with the same eigenstates.
We illustrate this classification using the $W$ state as the primary example, for which we rigorously derive the complete set of local parent Hamiltonians, which also allows us to establish general results such as the existence of asymptotic QMBS, and distinct dynamical signatures associated with the different parent Hamiltonian types.
Finally, we derive more general results on the parent Hamiltonian types that allow us to obtain some immediate results for simple quantum states such as product states, where only a single type exists, and short-range-entangled states, for which we identify constraints on the admissible types.
Altogether, our work opens the door to classifying the rich structures and dynamical properties of parent Hamiltonians that arise from the interplay between locality and QMBS.
\end{abstract}

\maketitle
\tableofcontents
%
%
\section{Introduction}
The construction of Hamiltonians that possess certain special quantum states as ground states has been a central question of interest in quantum many-body physics. 
While such Hamiltonians, referred to as parent Hamiltonians, are toy models that might differ from naturally occurring microscopic Hamiltonians, they nevertheless are helpful in understanding general conditions that might be necessary to realize particular kinds of ground states, e.g., they have led to rigorous results on the existence of gapped phases of matter~\cite{AffleckKennedyLiebTasaki1988, FannesNachtergaeleWerner1992,  Perez-Garcia2007, PerezGarciaVerstraeteCiracWolf2008,PhysRevB.72.045141}.
Moreover, a better understanding of parent Hamiltonians can also lead to new ways of realizing or approximating Hamiltonians using other realizable interactions, eventually leading to engineering of particular ground states~\cite{chertkovclark2018}.  
Much of the traditional literature on parent Hamiltonians focused on constructing local Hamiltonians that realize particular states as \textit{ground states}, where parent Hamiltonians that realize various exotic states such as fractional quantum Hall states~\cite{haldane1983fractional, bernevig2008model, kapit2010exact, lee2015geometric} and spin liquids~\cite{thomale2009parent, greiter2014parent, zhou2017qsl} have been proposed.
Nevertheless, the exploration of parent Hamiltonians in the context of many other kinds of states as ground states, such as the $W$ state, has only been recently initiated~\cite{gioia2024wstateuniqueground}.
In addition, the discovery of Quantum Many-Body Scars (QMBS)~\cite{Serbyn_2021, Moudgalya_2022, Papic_2022, Chandran_2023, Bernien_2017} has ignited interest in Hamiltonians that realize interesting quantum states not just as ground states, but as arbitrary eigenstates in the spectrum.
In particular, it was quickly noticed that these kinds of generalized parent Hamiltonians, obtained when states from a given set are allowed to be arbitrary eigenstates, are richer than the usual types of parent Hamiltonians studied when the states are restricted to be ground states.
For example, in the context of Matrix-Product States (MPS), the constructed parent Hamiltonians are frustration-free with respect to their ground states~\cite{Perez-Garcia2007,fernándezgonzalez2015}, i.e., the state minimizes the energy of each individual local Hamiltonian term.
However, as is known from the QMBS literature, this is often not the case for the generalized parent Hamiltonians, i.e., when the states can be excited states, where there are parent Hamiltonians for which the states are not eigenstates of the individual terms.
This motivates us to consider the general question of the classification of parent Hamiltonians for a set of quantum states, where the states are allowed to be arbitrary eigenstates rather than just ground states.
While there have been works that introduce numerical methods to find such generalized parent Hamiltonians for a given state~\cite{qiranard2017, chertkovclark2018, greiter2018parent}, the structures in the obtained Hamiltonians have not been studied systematically, which we initiate in this work.
Some partial classifications have been proposed previously in the QMBS literature, which we now review.
QMBS are non-thermal eigenstates embedded in the middle of the spectra of otherwise generic non-integrable Hamiltonians, which can lead to interesting non-thermal dynamics of certain special initial states under time-evolution with such Hamiltonians.
Numerous examples are known by now~\cite{Serbyn_2021, Moudgalya_2022, Papic_2022, Chandran_2023}, and there have also been many different systematic constructions of parent Hamiltonians that exhibit a given QMBS set as eigenstates~\cite{Shiraishi_2017, mark2020unified, moudgalya2020eta, Mark2020Eta, odea2020from, ren2020quasisymmetry, pakrouski2020many, ren2021deformed, pakrouski2021group, rozon2023broken}. 
Motivated from the understanding that QMBS can be viewed as a consequence of an unconventional symmetry, Ref.~\cite{moudgalya2023exhaustive} further unified these systematic constructions, and showed that the exhaustive set of parent Hamiltonians that possesses a particular set of QMBS forms a certain associative algebra of operators on the Hilbert space, for which local generators can be written down.
This algebraic understanding of the space of parent Hamiltonians, which we review in Sec.~\ref{subsec:qmbsconn}, led to a precise two-fold classification---type I and type II---of qualitatively different kinds of local Hamiltonians that exhibit certain QMBS.
Roughly speaking, type I Hamiltonians are those that are easily understood in terms of the generators of the said algebra and are closely related to frustration-free Hamiltonians in the context of ground states, whereas type II were those that have more complicated relations to these generators. 
However, coming from the algebraic framework brings its own baggage, and the classification proposed there had many technical constraints, e.g., for it to be mathematically precise, in many examples one needs to consider QMBS to be \textit{degenerate} eigenstates rather than arbitrary eigenstates.
Although these technicalities can be circumvented in specific cases~\cite{moudgalya2023exhaustive}, it nevertheless prohibits the wide application of this classification. 
In the meantime, there have been recent works~\cite{Omiya_2023fractionalization, wang2024embedding} that have noticed interesting simple structures in type II Hamiltonians, which are complicated to understand in the algebraic framework. 
Hence, in this work, we take a step back and study the exhaustive set of local parent Hamiltonians for a simple but non-trivial quantum state, the $W$ state.
The many-body $W$ state on an $N$ qubit system is given by
\begin{align}
    \ket{W}=\frac{1}{\sqrt{N}}\sum_{j=1}^N s_j^\dag\ket{\bar{0}}\quad,
\end{align}
where $\ket{\bar{0}}:=\ket{0}^{\otimes N}$ and $s_j^\dag\ket{0}_j=\ket{1}_j$ creates a particle on the $j$th site, and describes an equal superposition of a single-particle state at each point in space.
It is arguably the simplest state that exhibits multipartite entanglement and immunity to qubit-loss~\cite{PhysRevA.98.062335}, while maintaining a low bipartite entanglement entropy as well as a finite-bond dimension MPS representation~\cite{RevModPhys.93.045003} (albeit not a translationally invariant MPS).
This gives the state analytical tractability, allowing for the study and classification of its intrinsic quantum entanglement properties~\cite{PhysRevA.62.062314}.
The $W$ state appears in a variety of theoretical quantum many-body contexts such as in bosonic and fermionic lattice systems~\cite{PhysRevA.75.052109,gioia2024wstateuniqueground}, and quantum information theory~\cite{kieferova2024logdepth}.
Alongside its theoretical appeal and simplicity, it also appears ubiquitously in a myriad of quantum platforms, such as recent experiments in photonic systems~\cite{eibl2004experimental}, trapped-ion systems~\cite{haffner2005scalable}, Rydberg atom arrays~\cite{catalano2025experimentalpreparationwstates}, and superconducting circuits~\cite{https://doi.org/10.1002/qute.201900015}.
In this paper, we derive the most general form of a finite-range Hamiltonian for which the $W$ state is an eigenstate, which we then use to illuminate some general characteristics of QMBS systems.
Its form motivates a natural classification scheme that extends the two types in \cite{moudgalya2023exhaustive} into three types, while preserving the meaning in the cases where the algebraic understanding overlaps. 
This is motivated more naturally without any reference to underlying algebras, and is applicable to arbitrary sets of quantum states without any extraneous constraints such as the requirement of degeneracies.
This paper is organized as follows.
In Sec.~\ref{sec:QMBSgeneral}, we introduce the basic technical concepts of parent Hamiltonians, the type classification we propose, and connections to previous classifications in the QMBS literature.
Then, in Sec.~\ref{sec:WstateQMBS}, we demonstrate the full class of parent Hamiltonians for the $W$ state, and consequences that follow due to that structure, such as the existence of other eigenstates, and asymptotic QMBS.
In Sec.~\ref{sec:dynamical}, we demonstrate some dynamical signatures that differentiate between the type I and type II Hamiltonians.
Motivated by these results, in Sec.~\ref{sec:generalQMBS}, we present some general developments that characterize the different types of parent Hamiltonians, which we illustrate using examples from the $W$ state.
In Sec.~\ref{sec:SREQMBS}, we use these to derive constraints on the parent Hamiltonians for other simple eigenstates such as product states and MPS.
We conclude in Sec.~\ref{sec:discussion} with a summary and a discussion of open problems.
We relegate many technical details of proofs and other computations in the main text to the appendices.
%

%
\section{Types of Parent Hamiltonians}
\label{sec:QMBSgeneral}
We now introduce the fundamental problem that we are interested in this work. 
Given a set of states $\{\ket{\psi_{n}}\}$, we are interested in understanding the structure of Hermitian operators $H$ that have these states as eigenstates, i.e.,
\begin{equation}
    H\ket{\psi_{n}} = E_n \ket{\psi_{n}},
\label{eq:Heigenstates}
\end{equation}
where we do not impose any constraint on the eigenvalues $E_n$. 
We further restrict the operator $H$ to be \textit{extensive local}, which means that it is of the form
\begin{equation}
    H = \sum_{j}{h_{[j]}}, 
\label{eq:localHam}
\end{equation}
where $h_{[j]}$ is a \textit{strictly local}\footnote{In this paper we will use the terms \textit{strictly local} and \textit{finite-range} interchangeably since we are dealing with geometrically-local systems.} 
Hermitian operator with a bounded norm and support of at most $R$ that acts in the vicinity of a site $j$ on some underlying lattice (and is not necessarily the same for each $j$).
Whenever the Hamiltonian $H$ satisfies the conditions of Eqs.~(\ref{eq:Heigenstates}) and (\ref{eq:localHam}), we refer to it as a \textit{Parent Hamiltonian} for the states $\{\ket{\psi_n}\}$.
Note that unlike many studies of parent Hamiltonians~\cite{FannesNachtergaeleWerner1992,Perez-Garcia2007,RevModPhys.93.045003,PerezGarciaVerstraeteCiracWolf2008}, we are \textit{not} imposing the condition that the states $\{\ket{\psi_{n}}\}$ are the ground states of the Hamiltonian.
These states could lie anywhere in the spectrum of $H$, and hence these states are typically examples of \textit{Quantum Many-Body Scars} (QMBS)~\cite{Bernien_2017, Turner_2018weak, Turner_2018quantum, Khemani_2019, Moudgalya_2018exact, Moudgalya_2018entanglement, Lin_2019, Surace_2020, Iadecola_2019, Ivanov_2025exact, Serbyn_2021, Moudgalya_2022, Papic_2022, Chandran_2023} of the Hamiltonian $H$.\footnote{Note that traditionally in the literature there are additional checks one needs to perform on $H$ and $\{\ket{\psi_{n}}\}$ for them to qualify as QMBS Hamiltonians and states respectively, such as the non-integrability of the Hamiltonian and the non-thermality of the states~\cite{Serbyn_2021, Moudgalya_2022, Papic_2022, Chandran_2023}.
However, we omit such tests, since we are working with evidently non-thermal states with low entanglement, and we are interested in generic parent Hamiltonians rather than a specific one, and a typical Hamiltonian will be non-integrable.}
\subsection{Definitions of Types}\label{subsec:classes}
Given an extensive local operator of the form of Eq.~(\ref{eq:localHam}), we find that they can be classified according to their locality properties. 
In particular we find that there are three distinct classes of operators that satisfy the conditions of Eq.~(\ref{eq:Heigenstates}):
\begin{enumerate}
    \item Type I: $H$ can be written as a linear combination of Hermitian strictly local terms that individually have the $\{\ket{\psi_{n}}\}$ as eigenstates.
    If the states happen to be ground states, these Hamiltonians are simply frustration-free parent Hamiltonians for the states.
    \item Type II: $H$ can be written as a linear combination of non-Hermitian strictly local terms that individually have $\{\ket{\psi_{n}}\}$ as eigenstates, but not as a linear combination of Hermitian strictly local terms that have those states as eigenstates.
    \item Type III: $H$ is not type I or type II, i.e., it cannot be written as a sum of strictly local terms (Hermitian or non-Hermitian) that have $\{\ket{\psi_{n}}\}$ as eigenstates. 
\end{enumerate}
According to these definitions, adding a type I Hamiltonian to a type II Hamiltonian gives a type II Hamiltonian, and adding a type I or type II to a type III Hamiltonian gives a type III Hamiltonian.
This naturally leads us to the definition of \textit{equivalence classes} of operators.\footnote{Strictly speaking, one should be more careful about ranges of operators involved while defining these classes, see App.~\ref{app:numerical} for a careful discussion.
However, for simplicity, in the main text we continue to use the loose definition.}
For example, two type II (resp. type III) operators are equivalent if some linear combination of them is a type I (resp. type I or type II) operator.
Such equivalence classes have showed up in the study of Quantum Many-Body Scars~\cite{moudgalya2023exhaustive}, and we discuss the connections in the next subsection.
Further, as we discuss later, we believe that these equivalence classes can dictate certain dynamical signatures of these Hamiltonians, and we show examples from the $W$ state. 
Before we move on, we mention some technical subtleties with the above definitions.
Implicit there are notions of ``bounded range'' and ``strict locality'' of operators appearing as terms in the Hamiltonian, and their precise meanings can get somewhat complicated when we think about possible rewritings of the Hamiltonian, i.e., regroupings of the terms.
For example, for $H$ to be type I, the original writing of the Hamiltonian of Eq.~(\ref{eq:localHam}) need not by itself have the property that $h_{[j]}$ themselves have the $\{\ket{\psi_n}\}$ as eigenstates, but perhaps some other rewriting $H = \sum_X h'_X$ does, however at the expense that the regroupings of terms can lead to a somewhat larger range of the terms.
A natural way to make the typeage more precise is to fix the range of terms in the set of Hamiltonian rewritings, e.g., by fixing the range of strictly local terms to be bounded by a fixed number $R_{\text{max}}$, while taking the system size $N$ to be much larger than $R_{\text{max}}$.
Then the structure of the equivalence classes of operators can also depend on this $R_{\text{max}}$, as was found in certain examples of QMBS~\cite{moudgalya2023exhaustive, moudgalya2023numerical}.
However, for the $W$ state illustrated in this work, we find that these complications are surpassed since we are able to characterize the set of parent Hamiltonians well-enough that we can make definite statements when $R_{\text{max}}$ is any finite number in the thermodynamic limit. 
\subsection{Connections to Quantum Many-Body Scars}\label{subsec:qmbsconn}
We now briefly discuss the technical connections of these different classes of operators to Quantum Many-Body Scars (QMBS), as some of these concepts have showed up naturally in that context.
Readers not interested in these connections can safely skip to the next section. 
Given a set of states $\{\ket{\psi_n}\}$, we can write down the full algebra of Hamiltonians that have these states as eigenstates.
As mentioned earlier, we refer to these eigenstates as QMBS since they are generically not ground states, and are non-thermal states in the middle of the spectrum. 
Since any Hermitian Hamiltonian with $\ket{\psi_n}$ as an eigenstate commutes with $\ketbra{\psi_n}$, we consider the algebra of conserved quantities, which is of the form
\begin{equation}
    \mC_{\text{scar}} = \lgen \ketbra{\psi_n} \rgen,
\label{eq:Cscar}
\end{equation}
where the symbol $\lgen \cdots \rgen$ denotes the associative algebra generated by all products and sums of the operators in it, and implicitly includes the identity operator and closure under Hermitian conjugation $\dagger$.
As previous works have demonstrated~\cite{moudgalya2023exhaustive}, this is analogous to algebras of conventional conserved quantities~\cite{moudgalya2022from}, with the difference being that the latter are generated by on-site unitary operators (equivalently, extensive local conserved quantities that are sums of on-site terms), whereas here in Eq.~(\ref{eq:Cscar}) the projectors are highly non-local operators. 
The algebra of all operators that commute with this set of conserved quantities is referred to as the \textit{bond algebra} $\mA_{\rm scar}$. 
In numerous examples of QMBS that are of interest, $\mA_{\rm scar}$ is \textit{locally generated}, i.e., is of the form
\begin{equation}
    \mA_{\rm scar} = \lgen \{\hat{H}_\gamma\}\rgen,
\label{eq:bondalgebras}
\end{equation}
where $\hat{H}_\gamma$ are Hermitian operators that are strictly local or extensive local.\footnote{Note that in \cite{moudgalya2023exhaustive} we made the distinction between \textit{local} and \textit{bond} algebras depending on whether the generators contain extensive local terms or not. For simplicity, we will not make that distinction here, and we refer to both of them as bond algebras.}
Both $\mA_{\rm scar}$ and $\mC_{\rm scar}$ are subalgebras of the algebra $\mathcal{A}_{\mathcal{H}}$ of all operators over the given Hilbert space $\mathcal{H}$.
Furthermore, they are closed under Hermitian conjugation, and contain the identity operator, making them examples of finite-dimensional von Neumann algebras or $\dagger$-algebras~\cite{landsman1998lecture, harlow2017}.
$\dagger$-algebras are special since they satisfy a powerful result known as the Double-Commutant Theorem (DCT)~\cite{landsman1998lecture, harlow2017, moudgalya2022from}.
When applied to $\mA_{\rm scar}$, the DCT says that  $\mC_{\rm scar}$ and $\mA_{\rm scar}$ are \textit{commutants}\footnote{The commutant of a subalgebra $\mathcal{A}\subseteq \mathcal{A}_{\mathcal{H}}$ is defined as $\mathcal{C}:=\{c\in \mathcal{A}_{\mathcal{H}}|ac=ca,\forall a\in \mathcal{A}\}$.} of each other, and this allows one to exhaustively determine the complete set of Hamiltonians that have a particular set of QMBS as eigenstates~\cite{moudgalya2023exhaustive}. 
The DCT, along with the locality of Hamiltonians, motivated the division extensive-local Hamiltonians in $\mA_{\rm scar}$ into two qualitatively distinct types in \cite{moudgalya2023exhaustive}.
There the focus was on bond algebras $\mA_{\rm scar}$ generated by \textit{strictly local} $\hat{H}_\gamma$, and on characterizing the extensive local Hamiltonians in terms of its structure w.r.t.\ these generators.
Hence Ref.~\cite{moudgalya2023exhaustive} defined type I extensive local QMBS Hamiltonians as those that can be written as a linear combination of strictly local terms in $\mA_{\rm scar}$ and referred to all other Hamiltonians as type II, with the same implicit notions of ``strictly local'' and ``bounded range'' as discussed in Sec.~\ref{subsec:classes}.
This definition of type I operators is equivalent to the one defined earlier, even though it did not explicitly impose the condition of Hermiticity on the strictly local operators.\footnote{A proof of this can be as follows. Suppose that there is a Hermitian Hamiltonian $H$ that can be written as $H = \sum_j{g_{[j]}}$, where $g_{[j]}$ is a range $R$ strictly local non-Hermitian operator in $\mA_{\rm scar}$.
Since $\mA_{\rm scar}$ is a $\dagger$-algebra, we have $g_{[j]}^\dagger \in \mA_{\rm scar}$.
$H$ can then be rewritten as $H = (H + H^\dagger)/2 = \sum_j{(g_{[j]} + g_{[j]}^\dagger)/2}$, which is a sum of strictly local Hermitian terms in $\mA_{\rm scar}$, i.e., with the same set of eigenstates.}
This definition of type I and type II Hamiltonians also naturally leads to the concept of \textit{equivalence classes} of type II Hamiltonians~\cite{moudgalya2023exhaustive}, similar to the discussion in Sec.~\ref{subsec:classes}.
These equivalence classes help conjecture the complete structure of Hamiltonians~\cite{moudgalya2023exhaustive, moudgalya2023numerical} that support a given set of QMBS.
This also connects to earlier QMBS literature that noticed the existence of distinct types of Hamiltonians for a given set of QMBS~\cite{Moudgalya2020Large, mark2020unified, Mark2020Eta, odea2020from, ren2020quasisymmetry, ren2021deformed, pakrouski2020many, pakrouski2021group}, i.e., roughly speaking, those that can be understood using the Shiraishi-Mori embedding framework~\cite{Shiraishi_2017}, and those that cannot.
However, this classification of Hamiltonians coming from the algebra perspective is quite restrictive, or incomplete, due to several reasons. 
First, Ref.~\cite{moudgalya2023exhaustive} focused on bond algebras generated by strictly local terms, since it primarily considered the relation of the extensive local Hamiltonians to the generators of the algebra. 
For many standard examples of QMBS, such as the ferromagnetic spin-1/2~\cite{Mark2020Eta}, spin-1 XY~\cite{Iadecola_2019}, $\eta$-pairing~\cite{Mark2020Eta, moudgalya2020eta}, or Affleck–Kennedy–Lieb–Tasaki (AKLT) towers of states~\cite{Moudgalya_2018exact, Moudgalya_2018entanglement}, the bond algebra is generated by strictly local terms only if the entire QMBS manifold is \textit{degenerate}, i.e., if the commutant is of the form 
\begin{equation}
    \mC^{\rm deg}_{\rm scar} = \lgen \{\ketbra{\psi_{n}}{\psi_{m}} \}\rgen.
\label{eq:degeneratescar}
\end{equation}
Bond algebras for non-degenerate QMBS, on the other hand, require the addition of certain extensive local operators to the set of generators of the algebra, and Ref.~\cite{moudgalya2023exhaustive} referred to these operators as \textit{lifting operators}.
However, this addition meant that there was no ``natural'' set of generators unlike in the degenerate case, since the Hamiltonian could itself be a valid generator of the algebra.
Hence in the non-degenerate case there was no sense in which one can ``classify" relations between the extensive local Hamiltonian and the algebra generators.
In this work, we attempt to circumvent these issues by proposing the classification of Sec.~\ref{subsec:classes} that is independent of the algebra and its generators.
Second, Ref.~\cite{moudgalya2023exhaustive}  referred to all operators that are not type I as type II, but it did not attempt to establish a finer characterization of type II operators.
It has since been found that certain type II QMBS Hamiltonians can be expressed as sums of strictly local non-Hermitian operators that have the same QMBS as eigenstates~\cite{Mark2020Eta, Omiya_2023fractionalization, wang2024generalizedspinhelixstates}, hence there appears to be more structure than was identified in \cite{moudgalya2023exhaustive}.
Moreover, extensive local lifting operators were neither considered to be type I nor type II in \cite{moudgalya2023exhaustive} due to the algebra motivation discussed above. 
Yet, they are valid extensive local Hermitian operators that can have the same set of eigenstates, and these need to be treated on the same footing as extensive local Hamiltonians. 
As we find in the case of the $W$ state, there are lifting operators that do not have a decomposition in terms of any set of strictly local operators with the same set of eigenstates, hence we choose to refer to such operators as type III in Sec.~\ref{subsec:classes}.
\subsection{General structure of Parent Hamiltonians}
With all locality constraints imposed, the most general extensive local Hamiltonians with a particular set of eigenstates is of the form
\begin{equation}
    H = c_{\rm I} H_{\rm type\ I} + c_{\rm II} H_{\rm type\ II} + c_{\rm III} H_{\rm type\ III}.
\label{eq:mostgeneral}
\end{equation}
In the above equation, $H_{\rm type\ II}$ and $H_{\rm type\ III}$ may contain several inequivalent type I and type II operators, with potentially finer distinctions stemming from such varieties, which we leave unspecified in the above schematic writing.
Locality restrictions on $H$ also imposes several constraints on the allowed eigenstates $\{\ket{\psi_n} \}$. 
For example, it is known that enforcing a certain state to be an eigenstate can also lead to some other states to be eigenstates.
In the next section, we show an example of this in the context of the $\ket{W}$ state, where locality and the condition that the $\ket{W}$ state is an eigenstate also forces the $\ket{\overline{0}}$ state to be an eigenstate.
In the rest of this work, we will illustrate how all these concepts play out in systems where the $W$ state is an eigenstate/QMBS, where all the algebras and comprehensive statements about Hamiltonian varieties can be rigorously proven.
%

%
\section{$W$ state as an eigenstate/QMBS}\label{sec:WstateQMBS}
\subsection{Summary of main results}
We consider a one-dimensional system of $N$ qubits with the Hilbert space $\mathcal{H} = \left(\mathbb{C}^2\right)^{\otimes N}$, and extensive local Hermitian Hamiltonians $H$ that have the $W$ state as an eigenstate. 
For such Hamiltonians, we show the following theorem (with the proof in App.~\ref{sec:proof}, and summary of main ideas below):
\begin{thm}
\label{thm:HW}
Any extensive local Hermitian Hamiltonian $H$ that satisfies $H \ket{W} = E \ket{W}$ for some $E \in \mathbb{R}$, and is of the form of Eq.~(\ref{eq:localHam}) with $h_{[j]}$ being at most a range-$R$ strictly local,\footnote{We assume that $N \gg R$, although for the proof shown in App.~\ref{sec:proof}, $N > 2R$ suffices.} can be rewritten as
\begin{align}
    H =  \Omega \mathds{1} + \omega \hat{N}_{\rm tot} + t H_{\rm ImHop} + \sum_{X,\,|X|\leq R_{\text{max}}} h_X \quad,
    \label{eq:HW}
\end{align}
where $\Omega,\omega,t\in \mathbb{R}$ with $\omega,t$ having bounded absolute value, $\Omega\leq O(N)$, and $\Omega + \omega = E$.
Further, $h_X$ are Hermitian operators such that $h_X \ket{W} = 0$ and supported on a contiguous region $X$ of range $|X|\leq R_{\text{max}}$, where it suffices to choose $R_{\text{max}} = 2R$, with the sum ranging over $O(N)$ terms labeled by $X$.\footnote{W.l.o.g.\ we can take each $X$ to be a contiguous segment, colloquially referring to $|X|$ also as ``range" of $h_X$ or as ``diameter" of the region $X$.
All possible forms of $h_X$ terms are given in Tab.~\ref{tab:HIterms}.}
$H_{\rm ImHop}$ is given by
\begin{align}
    H_{\rm ImHop}=\frac{i}{2}\sum_{j=1}^N \left(s_j^\dag s_{j+1}-s_{j+1}^\dag s_j\right)\quad,
    \label{eq:HII}
\end{align}
with $s_j$ ($s_j^\dag$) being the hard-core boson annihilation (creation) operators with the convention that $s_j\ket{0}_j=0$,\footnote{Here the subscript ``ImHop'' refers to the hoppings being pure imaginary (the Hamiltonian is still Hermitian).} and $\hat{N}_{\rm tot}=\sum_j s_j^\dag s_j$ is the total number operator with $\hat{N}_{\rm tot}\ket{W}=\ket{W}$.
\end{thm}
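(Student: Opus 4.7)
The plan is to peel off three universal contributions, $\Omega\,\mathds{1}$, $\omega\,\hat{N}_{\rm tot}$, and $t\,H_{\rm ImHop}$, so that what remains annihilates $\ket{W}$ and can be regrouped into strictly local terms of range at most $2R$. The analysis is organized sector by sector in particle number, exploiting the decomposition $\ket{W} = \tfrac{1}{\sqrt{N}}(\ket{W_{I_j}}\otimes\ket{\bar{0}}_{\bar I_j} + \ket{\bar{0}}_{I_j}\otimes\ket{W_{\bar I_j}})$ relative to the support $I_j$ of each term $h_{[j]}$, which cleanly separates $h_{[j]}\ket{W}$ into an ``inside'' contribution (action on the local single-particle state) and an ``outside'' contribution (action on the local vacuum while the external particle sits elsewhere).

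The first milestone is to show that $\ket{\bar{0}}$ must also be an eigenstate of $H$. For each $k\geq 1$, I would consider a configuration of $k+1$ distinct bulk sites $a_1,\ldots,a_k,c$ in which the first $k$ are clustered within a common $I_j$ while $c$ is separated from all of them by more than $R$. Projecting $\sum_j h_{[j]}\ket{W}=E\ket{W}$ onto the basis state $s_{a_1}^\dagger\cdots s_{a_k}^\dagger s_c^\dagger\ket{\bar{0}}$ and using strict locality, the only non-vanishing contribution arises from those $j$ with $\{a_1,\ldots,a_k\}\subset I_j$ coupled with the external particle coming from $c$, which yields the constraint $\langle a_1\cdots a_k|H|\bar{0}\rangle=0$. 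Since this holds for every clustered bulk configuration and every $k\geq 1$, all non-vacuum components of $H\ket{\bar{0}}$ vanish, so $H\ket{\bar{0}}=\Omega\ket{\bar{0}}$ with $\Omega:=\langle\bar{0}|H|\bar{0}\rangle$ bounded by $O(N)$ as a sum of $N$ uniformly bounded local expectations.

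Setting $\omega:=E-\Omega$, the operator $H-\Omega\,\mathds{1}-\omega\,\hat{N}_{\rm tot}$ annihilates both $\ket{\bar{0}}$ and $\ket{W}$. Its restriction to the single-particle sector is a finite-range Hermitian hopping matrix $\tilde h_{ij}$ with bulk row sums $\sum_j \tilde h_{ij}=0$. Splitting $\tilde h$ into real-symmetric and imaginary-antisymmetric parts, the real-symmetric piece absorbs into strictly local operators with vanishing row sums (for instance $(s_j^\dagger s_{j+1}+s_{j+1}^\dagger s_j)-(n_j+n_{j+1})$), while any longer-range imaginary-antisymmetric piece telescopes into combinations of nearest-neighbor imaginary hopping plus local annihilators, leaving the bulk-averaged nearest-neighbor imaginary amplitude as the only truly extensive invariant, which I would identify as $t$. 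Both $\omega$ and $t$ are bounded independently of $N$ since each is determined by matrix elements of individual $h_{[j]}$, uniformly bounded by $\|h_{[j]}\|$ times a constant depending only on $R$.

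The technical heart of the proof is the final regrouping. One must show that the residual $\tilde H:=H-\Omega\,\mathds{1}-\omega\,\hat{N}_{\rm tot}-t\,H_{\rm ImHop}$, although naturally a sum of range-$R$ terms $\tilde h_{[j]}$ that need not annihilate $\ket{W}$ individually, admits a rewriting $\tilde H=\sum_X h_X$ with each $h_X$ Hermitian, supported on a contiguous region of range $\leq 2R$, and satisfying $h_X\ket{W}=0$. My plan is a telescoping construction: for each $j$, the residual action $\tilde h_{[j]}\ket{W}$ is supported at the endpoints of $I_j$ and can be cancelled by a range-$2R$ correction shared between $\tilde h_{[j]}$ and $\tilde h_{[j+1]}$, added to one and subtracted from the other so that $\tilde H$ is unchanged. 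Verifying that such corrections can be chosen Hermitian, close up consistently along the chain (including near boundaries), and reproduce the explicit catalog of generators recorded in Table~\ref{tab:HIterms} is where the main obstacle lies; once done, applying the final decomposition to $\ket{W}$ reads off $\Omega+\omega=E$.
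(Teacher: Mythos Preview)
Your overall plan is close in spirit to the paper's, and your first two milestones (that $\ket{\bar{0}}$ is forced to be an eigenstate, and that the single-particle restriction reduces to $\omega\hat{N}_{\rm tot}$ plus $tH_{\rm ImHop}$ plus local real/imaginary hopping annihilators) are essentially what the paper does. Two issues are worth flagging.

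First, a minor gap in Milestone 1: your claim that ``the only non-vanishing contribution arises from those $j$ with $\{a_1,\ldots,a_k\}\subset I_j$'' fails when $k=1$. In that case there is an equal contribution from terms $h_{[j']}$ with $c\in I_{j'}$ (and $a_1\notin I_{j'}$), where the $W$ particle sits at $a_1$ and $h_{[j']}$ creates a particle at $c$ from the local vacuum. What you actually obtain for $k=1$ is $\langle a_1|H|\bar{0}\rangle+\langle c|H|\bar{0}\rangle=0$, and one then needs a \emph{third} far-separated site to force each term to vanish individually. The paper makes exactly this three-site move in its treatment of the $n=1,\,m=0$ sector.

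Second, and more substantively: the ``technical heart'' you leave open is where the paper's approach genuinely differs from yours and is significantly cleaner. You propose to telescope corrections between neighboring $\tilde h_{[j]}$'s, asserting that ``the residual action $\tilde h_{[j]}\ket{W}$ is supported at the endpoints of $I_j$.'' This is not justified: after subtracting the global pieces, an individual $\tilde h_{[j]}$ need not be close to a $W$-annihilator, and its action on $\ket{W}$ can be supported throughout $I_j$ and in arbitrary particle-number sectors. The paper avoids this entirely by expanding $H$ in the \emph{normal-ordered operator basis} $s_{j_1}^\dagger\cdots s_{j_n}^\dagger s_{k_1}\cdots s_{k_m}$ and classifying terms by $(n,m)$. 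In that basis the regrouping is almost automatic: every term with $m\geq 2$ already annihilates $\ket{W}$ (there is only one particle to destroy), so pairing with Hermitian conjugates gives local Hermitian annihilators of range $\leq R$; terms with $m=0$ are ruled out by your Milestone~1; terms with $m=1,\,n\geq 2$ satisfy a constraint $\sum_k c^{j_1\ldots j_n}_k=0$ that is local within range $2R$ and hence are local annihilators; and only the $n=m=1$ hopping sector requires the real/imaginary reduction you describe. Working with operators rather than their actions on $\ket{W}$ is what makes the $R_{\max}=2R$ bound and the Hermitian regrouping fall out without any telescoping bookkeeping.
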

This result illustrates the key structures of Hamiltonians with the $W$ state as an eigenstate, and is of the general form shown in Eq.~(\ref{eq:mostgeneral}).
The different components can be summarized as follows:
\begin{itemize}
    \item $\ket{\bar{0}}$ is always an eigenstate of any such $H$.
    Given Eq.~(\ref{eq:HW}), we can see that $\ket{\bar{0}}$ is an eigenstate of $\mathds{1}$, $\hat{N}_{\rm tot}$, and $H_{\rm ImHop}$, and we can also show that $h_X \ket{W} = 0 \implies h_X \ket{\bar{0}} = 0$ for any finite-range operator $h_X$ with $|X| < N$.
    \item $\hat{N}_{\rm tot}$ is the only operator in Eq.~(\ref{eq:HW}) that energetically separates $\ket{W}$ and $\ket{\bar{0}}$ states in the spectrum of $H$, and we use this to show that it is a type III operator from Eq.~(\ref{eq:mostgeneral}).
    We will show that there is only one such equivalence class of type III operators in Prop.~\ref{prop:HIIIequive}.
    \item Since $h_X$ are finite-range Hermitian terms that annihilate the $\ket{\overline{0}}$ and $\ket{W}$ states, the last term in Eq.~(\ref{eq:HW}) is a type I Hamiltonian, the first term in Eq.~(\ref{eq:mostgeneral}).
    Their exact forms in the operator string basis will be given in Lemma~\ref{lem:bondalg} and Tab.~\ref{tab:HIterms}.
    \item As we will show, $H_{\rm ImHop}$  cannot be written as a sum of $h_X$-type terms, but can be written in terms of strictly local non-Hermitian terms, hence it is a type-II Hamiltonian.\footnote{Notice that $H_{\rm ImHop}$ is only finite-range if the system is periodic, i.e., site $j+N := j$, with the property that $H_{\rm ImHop}\ket{W}=0$, and $H\ket{W}=(\Omega + \omega)\ket{W}$. On a chain with OBC we must have $t=0$ for the Hamiltonian to stay extensive-local (since $s_N^\dagger s_1$ term has range $N$ in this case).}
    \item Tied to the presence of the exact eigenstates $\ket{\overline{0}}$ and $\ket{W}$, $H$ has a variety of long-lived low-entanglement states that have vanishing energy variance in the thermormodynamic limit, also known as asymptotic scars in the QMBS literature~\cite{Gotta2023AsymptoticScars, Kunimi2023AsymptoticScarsDMI, ren2024quasi, kunimi2025systematic, Lin_2020slow}.
    We will show their existence in Sec.~\ref{sec:asymptoticscars} and the associated qualitative dynamics is different depending on whether $t = 0$ or $t \neq 0$, i.e., if the Hamiltonian is type I or type II.
\end{itemize}
Bolstered by the intuition gained from the $W$ state and Thm.~\ref{thm:HW}, in Sec.~\ref{sec:generalQMBS} we derive statements which govern all QMBS systems.
In particular, we demonstrate how one can distinguish the distinct Hamiltonian types I, II, and III via a truncation procedure, summarized in Thm.~\ref{thm:Hermitiancut}.
This result allows us to make statements regarding short range entangled states as QMBS in Sec.~\ref{sec:SREQMBS}.
Finally, in Sec.~\ref{sec:dynamical} we explore further dynamical signatures of type I versus type II Hamiltonians from the $W$ state perspective.
We start by presenting an outline of the proof for Thm.~\ref{thm:HW} in the following section.
\subsection{Operator basis and constraints}
We now provide an overview of the operator basis and technique used to show Thm.~\ref{thm:HW}, while the full proof can be found in App.~\ref{sec:proof}. 
Let us first define a convenient basis for strictly local and extensive local operators that are composed of finite-range terms. 
\begin{defin}
\label{def:opbasis}
On a tensor-product Hilbert space of $N$ qubits with PBC ($j+N \equiv j$) with $\mathcal{H}=\left(\mathbb{C}^2\right)^{\otimes N}$, we define a set $\mathcal{L}=\{L\}$ of local operators $L$ that act non-trivially on at most a finite-range $R$ of neighboring qubits.
$L$ are chosen to be
\begin{equation}
L^{n^+_{\ell} \dots n^+_{\ell+R-1}}_{n^-_{\ell} \dots n^-_{\ell+R-1}} = \prod_{j=\ell}^{\ell+R-1}(s^\dag_{j})^{n^+_j}(s_{j})^{n^-_j}\quad,
\label{eq:Lbasis}
\end{equation}
where $\ell\in \{1,\dots, N\}$, $s^\dag_j$ ($s_j$) is the hard-core boson creation (annihilation) operator on $\mathbb{C}^2 = \text{span}\{\ket{0}, \ket{1}\}$ at site $j$, and $n_j^\pm \in \{0, 1\}$.
Note that we are essentially using operator strings with the following on-site operators:
\begin{align}
    \mathds{1}_j &= \ketbra{0}_j + \ketbra{1}_j \,, \;\;(n_j^+ \!=\! n_j^- \!=\! 0) \,, \label{eq:opbasis1} \\
    s_j^\dagger &= \ketbra{1}{0}_j \,, \;\; (n_j^+ \!=\! 1, n_j^{-} \!=\! 0) \,, \\
    s_j &= \ketbra{0}{1}_j \,,\;\; (n_j^+ \!=\! 0, n_j^- \!=\! 1) \,, \\
    s_j^\dagger s_j &= \ketbra{1}{1}_j \,, \;\; (n_j^+ \!=\! 1, n_j^- \!=\! 1) \,. \label{eq:opbasis4}
\end{align}
This is different from the more familiar Pauli operator basis, but it is a valid full basis whose ``normal-ordered'' character with respect to the $\ket{\bar{0}}$ state is particularly convenient for our arguments.\footnote{For a complete specification of the linearly independent string operator basis we could require, e.g., for operators of range precisely $R$: $(n_\ell^+, n_\ell^-) \neq (0,0)$, $(n_{\ell+R-1}^+, n_{\ell+R-1}^-) \neq (0,0)$---i.e., non-trivial start and end of the exhibited string---and $\ell$ running over all sites (if we want operators of range up to $R_{\text{max}}$, we simply combine $R = 0, 1, \dots, R_{\text{max}}$, including the identity operator as $R = 0$).
On the periodic chain, the string can ``pass'' through the ``PBC link'' $N \to N+1 \equiv 1$, and the identification of the start and end of the string is unambiguous as long as $L > 2R$.
Conditions like these, assuming $L$ is sufficiently large compared to some fixed range $R_{\text{max}}$, are usually implicit in our discussion of operator ranges in the PBC system.}
For convenience, we often classify the basis elements by their total $n$ and $m$  numbers, where $n := \sum_j n_j^+ \leq R$ counts the number of creation operators, and $m := \sum_j n_j^- \leq R$ counts the number of annihilation operators.
Correspondingly, it is often convenient to simply exhibit the non-trivially acting operators, writing
\begin{equation}
s_{j_1}^\dagger \dots s_{j_n}^\dagger s_{k_1} \dots s_{k_m} ~,
\end{equation}
with $j_1, \dots, j_n, k_1, \dots, k_m \in A$, where $A$ is a contiguous region of size $|A|\leq R$ (note that $\{j_1, \dots, j_n\}$ and $\{k_1, \dots, k_m\}$ can have non-zero overlap corresponding to cases $n_j^+ \!=\! n_j^- \!=\! 1$).
This set forms an additive basis for the usual strictly local operators (i.e., those with finite support), and, more generally, all extensive local operators (i.e., those that are a sum of strictly local operators).
Note that without loss of generality we exclude the identity operator on $\mathcal{H}$ from $\mathcal{L}$.
\end{defin}
Consider an extensive-local operator $G$ that obeys the following relation
\begin{align}
    G\ket{W} = \lambda \ket{W} \quad,
    \label{eq:conditionW}
\end{align}
where $\lambda \in \mathbb{C}$ depends on $G$, and we initially do not impose Hermiticity.
We can then consider the expansion:
\begin{equation}
G = \sum  c_{j_1, \dots, j_n}^{k_1, \dots, k_m} s_{j_1}^\dagger \dots s_{j_n}^\dagger s_{k_1} \dots s_{k_m} ~,
\label{eq:Gexpand}
\end{equation}
where the sum is over all distinct operator basis strings.
In App.~\ref{sec:proof}, we derive the constraints on the coefficients in this expansion that arise from Eq.~\eqref{eq:conditionW}, which are given in Tab.~\ref{tab:Lterms}.
We then see that the conditions can be exhaustively classified by the operator's $n$ and $m$ numbers.
\begin{table}[t]
    \centering
    \begin{tabular}{c|c|c|c}
        Operators in $G$, Eq.~(\ref{eq:Gexpand}) & $n$ & $m$ & Conditions\\\hline
        $c^{j_1...j_n} s_{j_1}^\dag ...s_{j_n}^\dag$ & $\geq 1$ & 0 & $c^{j_1...j_n}=0$\\
        $c^{j_1...j_n}_{k_1...k_m} s_{j_1}^\dag ...s_{j_n}^\dag s_{k_1}...s_{k_m} $ & $\geq 0$ & $\geq 2$ & None\\
        $\sum_k c^{j_1...j_n}_{k} s_{j_1}^\dag ... s_{j_n}^\dag s_{k}$ & $\geq 2$ & $1$ & $\sum_k c^{j_1...j_n}_k=0$ \\
        $\sum_k c_{k}  s_{k} $ & $0$ & $1$ & $\sum_k c_k=0$\\
        $\sum_k c^j_{k}s_j^\dag s_k$ & 1 & 1 & $\sum_{k}c^j_{k}= \lambda$
    \end{tabular}
    \caption{
    Characterization of an extensive local operator $G$ satisfying $G \ket{W} = \lambda \ket{W}$, via conditions on the expansion coefficients in the specific operator basis employing the hard-core boson creation/annihilation operator language.
    Additionally, there is also a general finite-range condition which implies that $j_1, \dots, j_n, k_1, \dots, k_m \in A$ where $A$ is a contiguous region with $|A| \leq R$.
    This means that $n \leq R$ and $m \leq R$.
    $G$ does not need to be Hermitian up to this point.
   }
    \label{tab:Lterms}
\end{table}

To further reduce the problem, we study a specific set of terms in $G$ that are Hermitian, strictly local, i.e., requiring operators to have support on a contiguous region $X$ with $|X|\leq R_{\rm max}$, and possess $\ket{W}$ as an eigenstate.
These conditions further constrains the form of the operators in Tab.~\ref{tab:Lterms} to be of the forms shown in Tab.~\ref{tab:HIterms}.
This table can be viewed as the basis for type I parent Hamiltonians of $\ket{W}$.
We show that for all values of $n, m$, except when $n = m = 1$, Hermitian operators are spannable by these strictly local Hermitian operators.
The case $n = m = 1$ then reduces to a non-interacting (i.e., hopping) problem, which as we show in App.~\ref{sec:proof}, allows an explicit rewriting in terms of local operators in the last row in Tab.~\ref{tab:HIterms} up to $H_{\rm ImHop}$ and $\hat{N}_{\rm tot}$.
This procedure gives us $H$ in Eq.~\eqref{eq:HW}.
In the following sections, we will show that it is in fact impossible to rewrite $H_{\rm ImHop}$ and $\hat{N}_{\rm tot}$ in terms of strictly local Hermitian annihilators of $\ket{W}$, in particular in terms of the elements in Tab.~\ref{tab:HIterms}, which alludes to the fact that they are type II and III Hamiltonians, respectively.

\begin{table}[t]
    \centering
    \begin{tabular}{c|c|c|c}
        Hermitian Terms $h_X$ & $n$ & $m$ & Conditions\\\hline
        $c^{j_1...j_n}_{k_1...k_m} s_{j_1}^\dag ... s_{j_n}^\dag s_{k_1} ... s_{k_m} + \text{H.c.}$  & $\geq 2$ & $\geq 2$ & None\\
        $\sum_k c^{j_1...j_n}_{k} s_{j_1}^\dag ... s_{j_n}^\dag s_{k} + \text{H.c.}$ & $\geq 2$ & $1$ & $\sum_k c^{j_1...j_n}_{k}=0$ \\
        $ \sum_k c^j_k s_j^\dag s_k + \text{H.c.}$ & 1 & 1 & $\sum_{k}c^j_{k}=\lambda,\lambda=0$ \\
    \end{tabular}
    \caption{
    Continuing from Tab.~\ref{tab:Lterms}, adding Hermiticity conditions on $G$ such that $\lambda\in\mathbb{R}$, we present the forms of all strictly local Hermitian operators $h_X$, that have $\ket{W}$ as an eigenstate, and can form components of $G$ in Eq.~\eqref{eq:Gexpand}.
    Here, strictly local refers to $|X|\leq R_{\rm max}$ for some large enough system-size independent $R_{\rm max}$, where w.l.o.g. $R_{\rm max}=2R$ in this case.
    The conditions arise from the requirement that the operators possess $\ket{W}$ as an eigenstate, which when combined with the strict locality forces $\lambda=0$, as is shown in the App.~\ref{sec:proof} and Prop.~\ref{prop:HXdegW0}.
    Recall that there is also a general finite-range condition which implies that $j_1,...,j_n,k_1,...,k_m\in A$ where $A$ is a contiguous region with $|A|\leq R$.
    Alternatively, we can also view such entries as giving a basis of Hermitian type I operators in the hard-core boson creation/annihilation operator language.
    }
    \label{tab:HIterms}
\end{table}
\subsection{$\ket{\bar{0}}$ must also be an eigenstate}
It is known that $\ket{W}$ cannot be the unique ground state of any local Hamiltonian, and if it is a ground state then so is $\ket{\bar{0}}$~\cite{gioia2024wstateuniqueground}.
In cases where $\ket{W}$ is not a ground state, but just some eigenstate of \textit{some} extensive-local operator (in fact, it may be non-Hermitian), the relationship between $\ket{W}$ and $\ket{\bar{0}}$ can be extended further to the following statement:
\begin{cor}
\label{cor:W0deg}
If $\ket{W}$ is an eigenstate of an extensive-local operator, then $\ket{\bar{0}}$ is also an eigenstate.
\end{cor}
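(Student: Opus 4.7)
The plan is to exploit the specific structure of the operator basis introduced in Def.~\ref{def:opbasis}, whose ``normal-ordered'' character with respect to $\ket{\bar{0}}$ makes the argument almost immediate once Tab.~\ref{tab:Lterms} has been established. Concretely, I would reduce the corollary to reading off two features of that table in tandem with the trivial action of annihilation operators on $\ket{\bar{0}}$.

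First, I would expand the extensive-local operator $G$ with $G\ket{W}=\lambda\ket{W}$ in the operator basis of Eq.~(\ref{eq:Gexpand}), allowing for an overall identity piece with some coefficient $c_0$ (which generically scales extensively with $N$, but that is harmless for the argument). Each basis string is labeled by its creation number $n$ and annihilation number $m$. Then I would partition the sum according to $(n,m)$ and apply the classification in Tab.~\ref{tab:Lterms}.

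Next, I would observe two elementary facts acting in concert: (i) any basis string with $m\geq 1$ contains at least one $s_k$, and since $s_k\ket{0}_k=0$, all such strings annihilate $\ket{\bar{0}}$; (ii) the constraints from $G\ket{W}=\lambda\ket{W}$ encoded in the first row of Tab.~\ref{tab:Lterms} force every coefficient $c^{j_1\dots j_n}$ with $n\geq 1, m=0$ to vanish identically, so these purely creation-type strings drop out of $G$ altogether. Combining (i) and (ii), when $G$ acts on $\ket{\bar{0}}$ the only surviving contribution comes from the $n=m=0$ piece, giving
\[ G\ket{\bar{0}} = c_0\ket{\bar{0}}, \]
so $\ket{\bar{0}}$ is an eigenstate with eigenvalue $c_0$, as claimed. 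Note Hermiticity of $G$ is never invoked, consistent with the parenthetical remark preceding the statement.

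There is no substantive obstacle specific to this corollary: the real work lies in establishing Tab.~\ref{tab:Lterms} in App.~\ref{sec:proof}, in particular the vanishing of the purely creation-type coefficients, which is the nontrivial content singling out the $\ket{W}$ state among superpositions of one-particle configurations. Once that table is in hand, the corollary is essentially a one-line consequence of the fact that $\ket{\bar{0}}$ is the vacuum of the $s_j$'s.
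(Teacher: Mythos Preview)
Your proposal is correct and matches the paper's own argument essentially line for line: the paper likewise invokes Tab.~\ref{tab:Lterms} to eliminate the purely creation-type strings ($n\geq 1$, $m=0$), notes that every remaining non-identity basis string contains at least one annihilation operator and hence kills $\ket{\bar{0}}$, and concludes that $\ket{\bar{0}}$ is an eigenstate (with eigenvalue $0$ in their convention where the identity is excluded from $\mathcal{L}$, equivalently your $c_0$). Your explicit remark that Hermiticity is never used is also made in the paper.
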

This corollary follows from the proof in App.~\ref{sec:proof} and Tab.~\ref{tab:Lterms}, where it is shown that $\ket{W}$ is never an eigenstate of an operator containing nontrivially (i.e., with nonzero amplitudes) terms that are purely creation operators, e.g., $s_j^\dag$ or $s_{j_1}^\dag \dots s_{j_n}^\dag$, which implies that the normal-ordered basis operators that contribute to $G$ must contain at least one annihilation operator.
This means that $\ket{\bar{0}}$ is also an eigenstate of $G$ with eigenvalue $0$, and therefore an eigenstate of $H$ in Eq.~\eqref{eq:HW} with eigenvalue $\Omega$ (solely determined by the identity operator).
Notice that the statement is based purely on locality, and does not rely on Hermiticity of the operator and holds even when the extensive-local operator is non-Hermitian.
For Hermitian operators where we may be interested in ground states, it also follows from the form of $H$ in Eq.~(\ref{eq:HW}) that if $\ket{W}$ is a ground state, so is $\ket{\bar{0}}$, as was shown in \cite{gioia2024wstateuniqueground}.
\begin{prop}
\label{prop:Wgroundstate}
    For an extensive-local Hamiltonian with bounded local terms and an $N$-independent $\omega$ in Eq.~(\ref{eq:HW})~\footnote{Such a requirement is fulfilled for `regularly-constructed' Hamiltonians, where adding sites does not change the Hamiltonian terms far away from the insertion point.
    Such a condition forbids a chemical potential term that depends on the length of the system, e.g., $\omega = 1/N$.}, if $\ket{W}$ is a ground state, then so is $\ket{\bar{0}}$.
\end{prop}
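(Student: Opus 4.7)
The plan is to reduce the claim to showing $\omega = 0$ in the canonical form of Eq.~(\ref{eq:HW}). By Cor.~\ref{cor:W0deg} and the form of $H$, $\ket{\bar{0}}$ is an eigenstate with eigenvalue $\Omega$ (since $\hat{N}_{\rm tot}$, $H_{\rm ImHop}$, and each $h_X$ annihilate $\ket{\bar{0}}$), while $\ket{W}$ has eigenvalue $\Omega+\omega$. Therefore $\ket{\bar{0}}$ is a ground state if and only if $\omega=0$, and the ground-state hypothesis on $\ket{W}$ immediately yields $\omega\le 0$, so the substantive step is to rule out $\omega<0$.

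To do this I use a variational argument with the uniform two-particle state
\[
\ket{\Phi_\Lambda} = \binom{|\Lambda|}{2}^{-1/2} \sum_{\{i,j\}\subset\Lambda} s_i^\dag s_j^\dag \ket{\bar{0}}
\]
on a contiguous subregion $\Lambda\subset\{1,\dots,N\}$. Term by term: the identity contributes $\Omega$; $\omega \hat{N}_{\rm tot}$ contributes $2\omega$; and $tH_{\rm ImHop}$ vanishes because $\ket{\Phi_\Lambda}$ has real coefficients in the computational basis while $H_{\rm ImHop}$ has purely imaginary matrix elements there, forcing the (Hermitian) expectation to be zero. For each $h_X$, I decompose $\bra{\Phi_\Lambda} h_X \ket{\Phi_\Lambda}$ by the intersection of the particle pair $\{i,j\}$ with $X$. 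Pairs disjoint from $X$ give zero since $h_X$ commutes with the corresponding $s^\dag$'s and annihilates $\ket{\bar{0}}$. Pairs with exactly one site in $X$ vanish upon summing that site over $X$, by virtue of the single-particle identity $\sum_{i\in X} h_X s_i^\dag \ket{\bar{0}}=0$ that is equivalent to $h_X\ket{W}=0$; this cancellation is exact for $X$ fully in the bulk of $\Lambda$, and is spoiled only by the $O(1)$ sites of $X\setminus\Lambda$ for $X$ straddling $\partial\Lambda$. Pairs with both sites in $X$ yield an $O(1)$ numerator against the normalization $\binom{|\Lambda|}{2}=O(|\Lambda|^2)$. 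Summing contributions over $O(|\Lambda|)$ bulk and $O(1)$ boundary values of $X$ gives a total of $O(1/|\Lambda|)$, so $\bra{\Phi_\Lambda} H \ket{\Phi_\Lambda}=\Omega+2\omega+O(1/|\Lambda|)$.

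If $\omega<0$, taking $|\Lambda|$ (and hence $N$) large enough---permitted by the $N$-independence of $\omega$ and the bounded norm of the $h_X$---yields $\bra{\Phi_\Lambda}H\ket{\Phi_\Lambda}<\Omega+\omega=E_W$, contradicting that $\ket{W}$ is a ground state. Hence $\omega=0$, so $\ket{\bar{0}}$ is degenerate in energy with $\ket{W}$ and is also a ground state. The main technical hurdle is the boundary-term estimate for $\sum_X h_X$: one must carefully verify that the single-particle cancellation fails only by $O(1)$ per $X$ near $\partial\Lambda$, so that the residual boundary contribution is $O(1/|\Lambda|)$ and cannot mask the sign of $\omega$ in the thermodynamic limit.
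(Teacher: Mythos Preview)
Your proposal is correct and follows the same core strategy as the paper: both use a uniform two-particle trial state to show its energy is $\Omega+2\omega+O(1/N)$, contradicting the ground-state hypothesis when $\omega<0$. The paper takes $\Lambda$ to be the \emph{full} system, i.e.\ uses $\ket{W^2}$ itself, which eliminates your boundary-$X$ analysis entirely (every $X$ is then ``bulk''); the paper also packages your case-by-case intersection count into a single Schmidt decomposition of $\ket{W^2}$ over $X$ and $X^c$, immediately yielding $|\langle h_X\rangle|\le \|h_X\|\binom{|X|}{2}/\binom{N}{2}$. Your restriction to a proper subregion $\Lambda$ is thus an unnecessary complication. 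On the other hand, your real-coefficients/imaginary-matrix-elements argument for $\langle H_{\rm ImHop}\rangle=0$ is more elementary than the paper's, which invokes the nontrivial fact that $\ket{W^2}$ is an exact zero-eigenstate of $H_{\rm ImHop}$ (the DMI result).
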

The full proof is given in Appendix~\ref{app:nonuniqueGS}.
However, the main idea is as follows.
Observe that the only way for $\ket{W}$ to have lower energy than $\ket{\bar{0}}$ is to have $\omega < 0$ in Eq.~(\ref{eq:HW}), since all $h_X$ terms also annihilate $\ket{\bar{0}}$.
However one can show that the state
\begin{equation}
    \ket{W^2} \defn \frac{1}{\sqrt{\binom{N}{2}}} \sum_{j < k} s_j^\dagger s_k^\dagger \ket{\bar{0}}\quad,
    \label{eq:W2}
\end{equation}
has energy expectation value
\begin{equation}
    \bra{W^2}H\ket{W^2}=2\omega+O(N^{-1})\quad
    \label{eq:W2energy}
\end{equation}
lower than $\omega$ for large enough $N$.
It follows that, although $\ket{W^2}$ is not necessarily an eigenstate of $H$, since it is orthogonal to $\ket{W}$ ($\braket{W^2}{W} = 0$), there would exist some other eigenstate that must have an even lower energy than $\ket{W}$ in the thermodynamic limit.
    
We will also encounter the $\ket{W^2}$ later (in Sec.~\ref{sec:asymptoticscars}) as an example of an asymptotic scar state.
\subsection{Different types of parent Hamiltonians}
\label{subsec:Wdifferenttypes}
In this section, we will explore some illustrative examples of possible Hamiltonians, all of which are of the form of $H$ in Eq.~\eqref{eq:HW}.
Note that the full bond algebra of operators that have $\ket{\bar{0}}$ and $\ket{W}$ as eigenstates can be explicitly written down using ideas from the Shiraishi-Mori construction~\cite{Shiraishi_2017,  moudgalya2023exhaustive}, which we discuss in App.~\ref{app:algebras}.  
While the algebra perspective can in principle be used to illustrate the different types of parent Hamiltonians,  it turns out to be simpler to use alternate techniques to show these,  as we will illustrate. 
All type I Hamiltonians can be written as a sum of terms in Tab.~\ref{tab:HIterms}.
Let us note a specific type I Hamiltonian of interest:
\begin{align}
    H_{\rm ReHop}&=\frac{1}{2}\sum_{j=1}^N \left(s_j^\dag s_j+s_{j+1}^\dag s_{j+1}-s_j^\dag s_{j+1}-s_{j+1}^\dag s_j\right),
    \label{eq:HI}
\end{align}
where the subscript `ReHop' refers to the jumping amplitudes being all real, and the term in the parentheses is a range-2 Hermitian operator that annihilates $\ket{W}$ and satisfies the condition in Tab.~\ref{tab:HIterms}.
It follows that $H_{\rm ReHop}$ is a type I Hamiltonian.
On the other hand, recall that type II parent Hamiltonians for a set of states $\{\ket{\psi_n}\}$ cannot be written as a sum of strictly local Hermitian terms that have the same set of eigenstates $\{\ket{\psi_n}\}$, but can be written as a sum of strictly local non-Hermitian terms with those eigenstates.
In fact, $H_{\rm ImHop}$ in Eq.~\eqref{eq:HII} is a type II Hamiltonian (only with PBC, since it is not a local operator with OBC) in relation to the $\ket{W}$ state, as we will show below.
\begin{prop}
\label{prop:HII}
On a periodic system, $H_{\rm ImHop}$ is a type II Hamiltonian w.r.t.\ the $W$ state.
\end{prop}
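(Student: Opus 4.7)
The plan is to verify both halves of the type II definition: (a) $H_{\rm ImHop}$ admits a decomposition into strictly local non-Hermitian terms each annihilating $\ket{W}$, and (b) no such decomposition exists using strictly local Hermitian terms. Part (a) is a short explicit construction; part (b) requires an obstruction---a linear functional that vanishes on every local Hermitian $\ket{W}$-annihilator yet is nonzero on $H_{\rm ImHop}$.

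For (a), I would take the range-2 non-Hermitian term
\[
g_j \defn \tfrac{i}{2}\of{s_j^\dag s_{j+1} - s_{j+1}^\dag s_j} + \tfrac{i}{2}\of{s_{j+1}^\dag s_{j+1} - s_j^\dag s_j}
\]
and verify $g_j \ket{W}=0$ using the $n=m=1$ row-sum condition of Tab.~\ref{tab:Lterms}: at both creation sites $j$ and $j+1$, the sum collapses as $i/2 - i/2 = 0$. The added diagonal piece is anti-Hermitian (its diagonal matrix elements are pure imaginary), so $g_j \ne g_j^\dag$. Summing over $j$ on the periodic chain, the diagonal piece telescopes to zero, leaving $\sum_j g_j = H_{\rm ImHop}$.

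For (b), I would argue by contradiction. Suppose $H_{\rm ImHop}=\sum_X h_X$ with strictly local Hermitian $h_X$ satisfying $h_X \ket{W}=0$; then also $h_X \ket{\bar 0}=0$ by Cor.~\ref{cor:W0deg}, so each $h_X$ has the structure of Tab.~\ref{tab:HIterms}. Only the $n=m=1$ (hopping) coefficients $C^{(X)}_{ik}$ of $s_i^\dag s_k$ in each $h_X$ can produce the nearest-neighbor hopping structure of $H_{\rm ImHop}$, and Tab.~\ref{tab:HIterms} forces $C^{(X)}$ to be Hermitian with $C^{(X)}\mathbf{1}|_X=0$ on the non-wrapping arc $X$ (whence also column sums vanish by Hermiticity). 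I would then introduce the linear pairing
\[
\Phi(h) \defn \sum_{i,k} \omega_{ik}\, C^{(h)}_{ik}, \quad \omega_{ik} \defn (k-i)_{\rm ring},
\]
with the signed shortest ring distance $\omega_{ik}$ (well-defined once $N>2R_{\max}$). On any non-wrapping $X$, $\omega|_X$ equals the coboundary of the linear arc-potential, so $\Phi(h_X)=\sum_{i,k\in X}(k-i)\, C^{(X)}_{ik}$ splits into row- and column-sum contributions and vanishes. Hence $\Phi\of{\sum_X h_X}=0$. Direct computation, however, yields $\Phi(H_{\rm ImHop})=\sum_j \sqof{(1)(i/2)+(-1)(-i/2)}=iN \ne 0$ (including the PBC bond, whose ring distance is $+1$, not $1-N$), contradicting the assumption and ruling out any Hermitian decomposition.

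The main obstacle is identifying the right invariant. Naive candidates---the current through a single bond, or the flux across a ring cut---vanish identically on both $H_{\rm ImHop}$ and every candidate local piece, since $H_{\rm ImHop}$ is itself divergence-free in the one-particle sector. The essential feature of $\omega$ is that it is \emph{locally} exact (a coboundary of the linear potential on any non-wrapping arc) yet \emph{globally} non-exact on the ring---precisely what is needed to detect the nontrivial winding of the imaginary current of $H_{\rm ImHop}$ around the ring, which no sum of local Hermitian pieces can reproduce.
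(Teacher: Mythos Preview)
Your proof is correct and takes a genuinely different route from the paper. Part (a) is the same non-Hermitian decomposition the paper gives in Eq.~(\ref{eq:HImHop_nonherm_annih}). For part (b), the paper offers two arguments: an energetics one showing $|\bra{W_q} H_{\rm I} \ket{W_q}| = O(q^2)$ for any type I $H_{\rm I}$ with bounded local terms, versus $\bra{W_q} H_{\rm ImHop} \ket{W_q} = \sin q$ (App.~\ref{app:HImtypeII}); and a purely algebraic boundary-action argument via Thm.~\ref{thm:Hermitiancut} (Sec.~\ref{sec:CuttingHImHop}). Your obstruction functional $\Phi$ is a third, independent, purely algebraic proof: it directly extracts the winding invariant that the paper implicitly isolates when reducing $\mathcal{O}_{\rm Im}$ to $t\,H_{\rm ImHop} + (\text{type I})$ in App.~\ref{sec:proof}, but packages it cleanly as a linear obstruction that vanishes on every local Hermitian annihilator. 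Your argument is more elementary than the boundary-action proof (no Thm.~\ref{thm:Hermitiancut} machinery) and, unlike the $W_q$ argument, needs no norm bounds on the $h_X$. Conversely, the paper's $W_q$ route buys physical content (dispersions, asymptotic scars, the dynamics of Sec.~\ref{sec:dynamical}), and the boundary-action framework generalizes uniformly to other target states (Sec.~\ref{sec:SREQMBS}). One small point: strictly, type I only requires $h_X \ket{W} = \lambda_X \ket{W}$, not $\lambda_X = 0$; but your computation $\Phi(h_X)=0$ still goes through since with row and column sums both equal to $\lambda_X$ one gets $\lambda_X\bigl(\sum_{k\in X} f(k) - \sum_{i\in X} f(i)\bigr)=0$.
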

The detailed proof of this statement is given in App.~\ref{app:typeIIness}.
However, here we will present the idea behind the proof since it illustrates notable differences between type I and the other types of Hamiltonians. 
The key observation is that for any physically reasonable type I Hamiltonian $H_\text{I}$, i.e., one that has bounded norm $O(N)$, the variational state
\begin{equation}
    \ket{W_q}:= e^{-i q \sum_j j \hat{n}_j} \ktW = \frac{1}{\sqrt{N}} \sum_j e^{-i q j} s_j^\dagger \ket{\bar{0}}\quad,
    \label{eq:Wqboost}
\end{equation}
where $q=\frac{2\pi m}{N}$ with $m\in\mathbb{Z}$, has an energy expectation value that is bounded by
\begin{equation}
    \big|\bra{W_q}H_{\rm I} \ket{W_q} \big| \leq O(q^2)\quad.
\end{equation}
On the other hand, $\ket{W_q}$ is an eigenstate of $H_{\rm ImHop}$ with
\begin{align}
    H_{\rm ImHop}\ket{W_q}= \sin q\ket{W_q}\quad,
\end{align}
such that there is a contradiction with $H_{\rm ImHop}$ being a type I Hamiltonian if $q\ll 1$ with $\bra{W_q}H_{\rm ImHop}\ket{W_q}=q+O(q^2)$.
We can see this difference explicitly when considering the dispersion of $H_{\rm ReHop}$ (a type I Hamiltonian) versus $H_{\rm ImHop}$. $\ket{W_q}$ is an exact eigenstate of $H_{\rm ReHop}$, with relationship
\begin{align}
    H_{\rm ReHop}\ket{W_q}= \left(1-\cos q\right)\ket{W_q}\quad,
\end{align}
such that $\bra{W_q}H_{\rm ReHop}\ket{W_q}=O(q^2)$ when $q\ll 1$, as expected of a type I Hamiltonian.
We show a pictorial representation of this difference in Fig.~\ref{fig:HIvsHII}.

Finally, we notice that $H_{\rm ImHop}$ can be rewritten as
\begin{equation}
    H_{\rm ImHop}= \frac{i}{2} \sum_j (s_j^\dag s_{j+1}- s_{j+1}^\dag s_j-s_j^\dag s_j+ s_{j+1}^\dag s_{j+1})\,,
\label{eq:HImHop_nonherm_annih}
\end{equation}
where the term inside the parentheses is a strictly local non-Hermitian operator that possesses $\ket{W}$ as an eigenstate with eigenvalue 0, i.e., annihilates $\ket{W}$. This shows that $H_{\rm ImHop}$ is a type II Hamiltonian, and not a type III Hamiltonian.

\begin{figure}
    \centering
    \includegraphics[width=\linewidth]{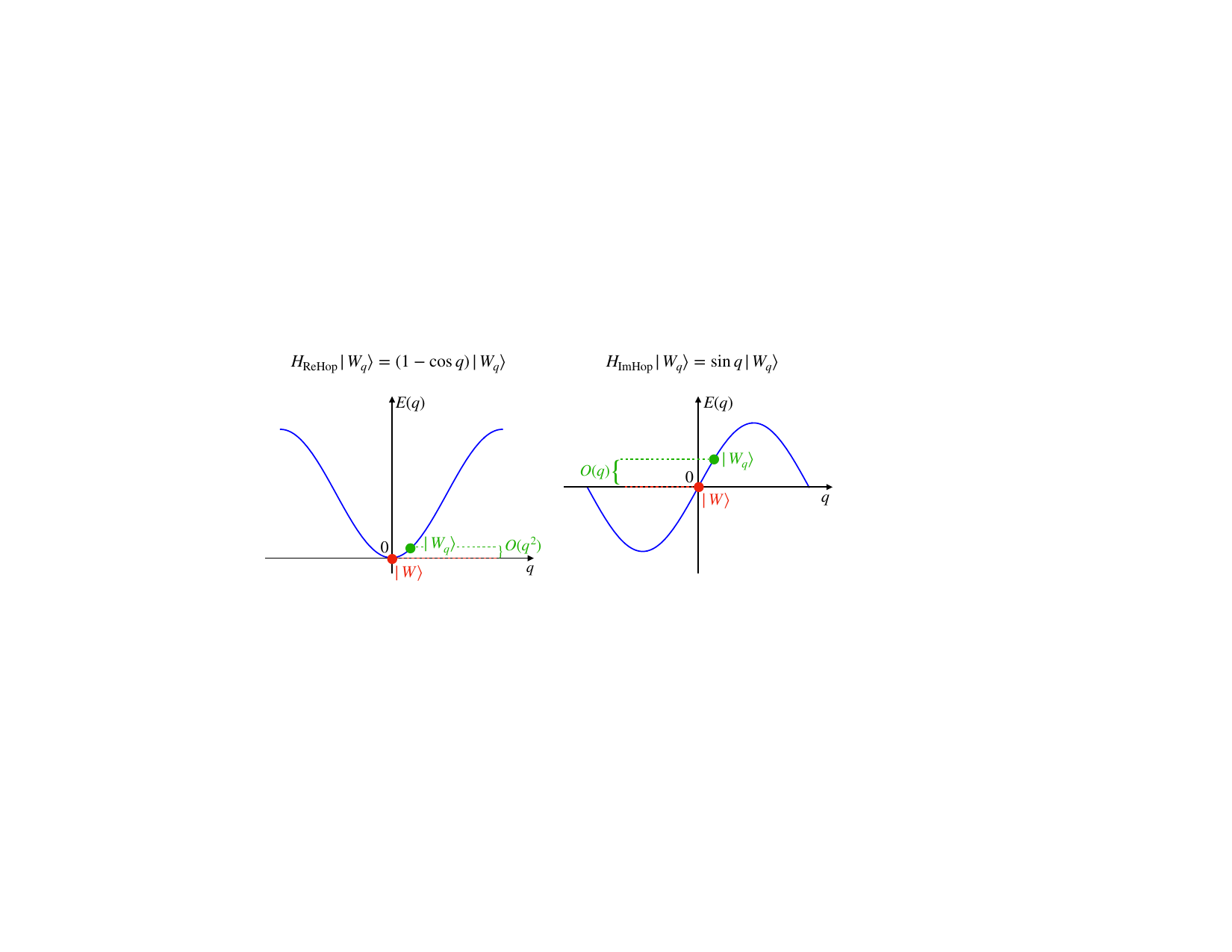}
    \caption{Dispersion of a type I Hamiltonian $H_{\rm ReHop}$ [Eq.~(\ref{eq:HI})] versus a type II Hamiltonian $H_{\rm ImHop}$ [Eq.~(\ref{eq:HII})].
    Here, we have diagonalized the respective Hamiltonians for the single-particle states $\ket{W_q}$. 
    We see that the variational state $\ket{W_q}$ with $q\ll1$ has energy $O(q^2)$ for $H_{\rm ReHop}$, while it is linear for $H_{\rm ImHop}$, exemplifying a key difference between the two types of Hamiltonians.
    While the figure is an illustration for free-particle Hamiltonians, the qualitative difference between ``dispersions'' $q^2$ vs $q$ holds for trial energies of generic type I vs type II Hamiltonians, see text for details.
    }
    \label{fig:HIvsHII}
\end{figure}
One may wonder how many equivalence classes of type II parent Hamiltonians there are when we demand that $\ket{W}$ is an eigenstate.
Recall that the equivalence relation is the addition of strictly local type I operators, given in Tab.~\ref{tab:HIterms}.
We have two candidates: $H_{\rm ImHop}$ and $\hat{N}_{\rm tot}$.
Here, we will show that $\hat{N}_{\rm tot}$ cannot be spanned by strictly local non-Hermitian operators that possess $\ket{W}$ as an eigenstate, leaving $H_{\rm ImHop}$ the representative member of the only type II Hamiltonian equivalence class.
To show this, let us first derive the following proposition.
\begin{prop}
    \label{prop:HXdegW0}
    For some operator $g_X$ that has support on a patch $X$ for $|X| < N$, if $g_X \ket{W} = \lambda \ket{W}$, then $g_X \ket{\bar{0}}=\lambda\ket{\bar{0}}$. 
    Note that $g_X$ is not necessarily Hermitian.
\end{prop}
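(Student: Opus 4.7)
The plan is to exploit the natural tensor-product factorization $\mathcal{H} = \mathcal{H}_X \otimes \mathcal{H}_{X^c}$ associated with the support of $g_X$, writing $g_X = g \otimes \mathds{1}_{X^c}$ where $g$ acts on $\mathcal{H}_X$. The key observation is that $\ket{W}$ decomposes cleanly under this bipartition as a sum of only \emph{two} product terms, distinguished by whether the single excitation lives inside $X$ or outside $X$.

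Explicitly, I would write
\begin{equation}
\sqrt{N}\,\ket{W} = \ket{\alpha} \otimes \ket{\bar{0}}_{X^c} + \ket{\bar{0}}_X \otimes \ket{\beta},
\end{equation}
where $\ket{\alpha} := \sum_{j\in X} s_j^\dagger \ket{\bar{0}}_X$ and $\ket{\beta} := \sum_{j\in X^c} s_j^\dagger \ket{\bar{0}}_{X^c}$ are unnormalized $W$-like vectors on the respective subsystems. Applying $g_X = g \otimes \mathds{1}_{X^c}$ to both sides of the hypothesis $g_X \ket{W} = \lambda \ket{W}$ gives
\begin{equation}
(g\ket{\alpha}) \otimes \ket{\bar{0}}_{X^c} + (g\ket{\bar{0}}_X) \otimes \ket{\beta} = \lambda\ket{\alpha} \otimes \ket{\bar{0}}_{X^c} + \lambda\ket{\bar{0}}_X \otimes \ket{\beta}.
\end{equation}

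To finish, I would invoke the linear independence of $\ket{\bar{0}}_{X^c}$ and $\ket{\beta}$ inside $\mathcal{H}_{X^c}$: they lie in different particle-number sectors (zero vs.\ one excitation on $X^c$), and $\ket{\beta}$ is guaranteed nonzero precisely because $|X^c| \geq 1$, i.e.\ $|X| < N$. Matching the coefficients of $\ket{\beta}$ on both sides yields $g \ket{\bar{0}}_X = \lambda \ket{\bar{0}}_X$, and tensoring back with the identity on $X^c$ gives the desired $g_X \ket{\bar{0}} = \lambda \ket{\bar{0}}$.

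The argument is essentially a one-line bipartition calculation, so there is no serious obstacle. The only place the hypothesis $|X| < N$ enters is in guaranteeing $\ket{\beta} \neq 0$, so that its coefficient can be extracted; if $|X| = N$ this term is absent and the eigenvalue equation gives no information about $g_X \ket{\bar{0}}$. It is also worth noting that Hermiticity of $g_X$ is never used, consistent with the stated generality, and that matching the $\ket{\bar{0}}_{X^c}$ coefficient additionally yields $g \ket{\alpha} = \lambda \ket{\alpha}$, a localized companion statement on the $W$-like state restricted to $X$ that may prove useful elsewhere.
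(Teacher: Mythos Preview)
Your proof is correct and is essentially identical to the paper's: both use the two-term bipartition of $\ket{W}$ across $X$ and $X^c$, then invoke the linear independence of $\ket{\bar{0}}_{X^c}$ and the one-particle vector on $X^c$ to read off $g_X\ket{\bar{0}}_X = \lambda\ket{\bar{0}}_X$. The only cosmetic difference is that the paper writes the decomposition with normalized Schmidt vectors $\ket{W}_X$, $\ket{W}_{X^c}$ and explicit Schmidt coefficients, whereas you use unnormalized $\ket{\alpha}$, $\ket{\beta}$; your observations about where $|X|<N$ enters and the companion statement $g\ket{\alpha}=\lambda\ket{\alpha}$ match the paper exactly.
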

\begin{proof}
    This essentially follows from our proof of Thm.~\ref{thm:HW} in Sec.~\ref{sec:proof} since we did not use Hermiticity to derive Tab.~\ref{tab:Lterms}.
    However, here we present an independent and simpler argument for our purpose.
    If $g_X\ket{W} = \lambda \ket{W}$ and $|X| < N$, we can perform a Schmidt decomposition of $\ket{W}$ on regions $X$ and $X^c$, where $X^c$ is the complement of $X$:
    \begin{equation}
    \ket{W}=\sqrt{\frac{|X|}{N}}\ket{W}_X\otimes\ket{\bar{0}}_{X^c}+\sqrt{\frac{|X^c|}{N}}\ket{\bar{0}}_X\otimes\ket{W}_{X^c}\quad,
    \label{eq:WSchmidt}
    \end{equation}
    where by assumption $|X^c|>0$, and $\ket{\bar{0}}_X:=\otimes_{j\in X}\ket{0}$, $\ket{W}_X:=\frac{1}{\sqrt{|X|}}\sum_{j\in X} s_j^\dag\ket{\bar{0}}_X$, and similarly for states on $X^c$.
    Writing down the eigenvalue equation for $g_X$, we obtain
    \begin{align}
        &\sqrt{\frac{|X|}{N}} (g_X - \lambda)\ket{W}_X \otimes \ket{\bar{0}}_{X^c} \nonumber\\
        &+ \sqrt{\frac{|X^c|}{N}} (g_X - \lambda)\ket{\bar{0}}_X \otimes \ket{W}_{X^c}= 0\quad.
    \end{align}
    Since $\ket{\bar{0}}_{X^c}$ and $\ket{W}_{X^c}$ are linearly independent, we obtain that $g_X \ket{W}_X = \lambda\ket{W}_X$ and $g_X\ket{\bar{0}}_X = \lambda\ket{\bar{0}}_X$.
    The latter condition directly implies that $g_X \ket{\bar{0}} = \lambda \ket{\bar{0}}$. 
\end{proof}
The following two corollaries follow directly from this proposition.
\begin{cor}
    $\hat{N}_{\rm tot}$ cannot be written as a sum of strictly local operators (Hermitian or non-Hermitian) that have $\ket{W}$ as an eigenstate.
\label{cor:Ntotnotnonherm}
\end{cor}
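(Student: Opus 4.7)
The plan is to argue by contradiction using Proposition~\ref{prop:HXdegW0} as the essential input, which applies to strictly local operators without requiring Hermiticity and is therefore perfectly suited to the corollary's hypothesis. First I would suppose that $\hat{N}_{\rm tot}$ admits a decomposition
\begin{equation}
\hat{N}_{\rm tot} = \sum_X g_X,
\end{equation}
where each $g_X$ is a strictly local (possibly non-Hermitian) operator supported on a patch $X$ of size $|X| \leq R_{\rm max} < N$, and each satisfies $g_X \ket{W} = \lambda_X \ket{W}$ for some $\lambda_X \in \mathbb{C}$. Note that any overall coefficients that would appear in a ``superposition'' can be absorbed into the $g_X$'s without changing the eigenvalue property, so this form is fully general.

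Next I would apply Proposition~\ref{prop:HXdegW0} to each term, which forces $g_X \ket{\bar{0}} = \lambda_X \ket{\bar{0}}$ with the \emph{same} eigenvalue $\lambda_X$. Summing over $X$ then yields
\begin{equation}
\hat{N}_{\rm tot} \ket{W} = \Big(\sum_X \lambda_X\Big) \ket{W}, \qquad \hat{N}_{\rm tot} \ket{\bar{0}} = \Big(\sum_X \lambda_X\Big) \ket{\bar{0}}.
\end{equation}
Comparing with the known actions $\hat{N}_{\rm tot} \ket{W} = \ket{W}$ and $\hat{N}_{\rm tot} \ket{\bar{0}} = 0$ demands that $\sum_X \lambda_X$ equal both $1$ and $0$ simultaneously, the desired contradiction.

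Because all the substantive work is already encapsulated in Proposition~\ref{prop:HXdegW0}, I do not expect any genuine obstacle; the corollary is a clean two-line consequence. The only subtlety worth checking is that the hypothesis $|X| < N$ needed by Proposition~\ref{prop:HXdegW0} is automatic for the strictly local terms in the hypothesized decomposition, i.e., whenever $R_{\rm max} < N$, which is the regime of interest in the thermodynamic limit. Intuitively, the result captures the fact that any strictly local ``eigen-annihilator'' of $\ket{W}$ cannot distinguish $\ket{W}$ from $\ket{\bar{0}}$ in its eigenvalue, whereas $\hat{N}_{\rm tot}$ is precisely the operator in Eq.~(\ref{eq:HW}) that \emph{does} separate these two states, marking it as genuinely type III.
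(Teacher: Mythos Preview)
Your proposal is correct and follows essentially the same approach as the paper: the paper's proof also invokes Proposition~\ref{prop:HXdegW0} and observes that $\hat{N}_{\rm tot}$ splits the degeneracy between $\ket{W}$ and $\ket{\bar{0}}$ whereas any sum of strictly local eigen-operators must keep them degenerate. Your write-up simply spells out explicitly the contradiction $\sum_X \lambda_X = 1 = 0$ that the paper states in one sentence.
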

\begin{proof}
    Using Prop.~\ref{prop:HXdegW0}, it immediately follows that $\hat{N}_{\rm tot}$ cannot be spanned by general (i.e., including non-Hermitian) strictly local operators since the former splits the degeneracy between $\ket{W}$ and $\ket{\bar{0}}$ whereas the latter keeps them degenerate.
\end{proof}
(As an aside, we note that the above arguments generalize easily to the higher-particle-number Dicke states:
If $g_X \ket{W^p} = \lambda \ket{W^p}$, assuming $|X|^c \geq p$, it follows that $g_X \ket{W^{p'}} = \lambda \ket{W^{p'}}$, $\forall p' \leq p$.
It then follows that $N_{\text{tot}}$ is type III with respect to $\ket{W^p}$.)

From the above proposition and corollary, it follows that there is only one equivalence class of type II Hamiltonians, as we will show in the following corollary.
\begin{cor}
    There exists only one equivalence class of type II parent Hamiltonians of $\ket{W}$, 
    with $H_{\rm ImHop}$ being a representative member. All other type II Hamiltonians can be written as additions of type I Hamiltonians upon $H_{\rm ImHop}$.
    \label{cor:equivalenceHII}
\end{cor}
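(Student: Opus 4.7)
The plan is to combine Theorem~\ref{thm:HW} with Corollary~\ref{cor:Ntotnotnonherm} to pin every type II parent Hamiltonian down to $H_{\rm ImHop}$ modulo type I pieces. First, I would take an arbitrary extensive-local Hermitian $H$ that is type II and has $\ket{W}$ as an eigenstate. By Theorem~\ref{thm:HW}, $H$ admits the canonical decomposition $H = \Omega\mathds{1} + \omega\hat{N}_{\rm tot} + tH_{\rm ImHop} + \sum_{|X|\leq R_{\text{max}}} h_X$, where each $h_X$ is a Hermitian strictly local annihilator of $\ket{W}$ (Table~\ref{tab:HIterms}). The entire task then collapses to proving $\omega = 0$, because once that is established, $H - tH_{\rm ImHop} = \Omega\mathds{1} + \sum_X h_X$ is manifestly a sum of strictly local Hermitian operators with $\ket{W}$ as eigenstate, i.e., type I, so $H$ is equivalent to $tH_{\rm ImHop}$.

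The crux of the argument is ruling out $\omega \neq 0$. By the type II hypothesis, $H$ admits \emph{some} decomposition $H = \sum_j g_j$ with each $g_j$ strictly local and $g_j\ket{W}\propto\ket{W}$. Independently, Eq.~(\ref{eq:HImHop_nonherm_annih}) provides a strictly local decomposition of $H_{\rm ImHop}$ into non-Hermitian annihilators of $\ket{W}$, while each $h_X$ is already strictly local with $\ket{W}$ as eigenstate. Rearranging the canonical decomposition then yields
\begin{equation*}
\omega\hat{N}_{\rm tot} \;=\; \sum_j g_j - \Omega\mathds{1} - tH_{\rm ImHop} - \sum_X h_X,
\end{equation*}
whose right-hand side, after substituting the strictly local form of $tH_{\rm ImHop}$, is a linear combination of strictly local operators each possessing $\ket{W}$ as an eigenstate. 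If $\omega \neq 0$, dividing through would exhibit exactly such a decomposition of $\hat{N}_{\rm tot}$, contradicting Corollary~\ref{cor:Ntotnotnonherm}. Hence $\omega = 0$, and $H = tH_{\rm ImHop} + (\Omega\mathds{1} + \sum_X h_X)$ differs from $tH_{\rm ImHop}$ by a type I operator.

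Since $H$ was an arbitrary type II parent Hamiltonian of $\ket{W}$, this establishes that there is a single equivalence class, represented by $H_{\rm ImHop}$. The only bookkeeping issue I anticipate is ensuring that the three strictly local decompositions—of $H$, of each $h_X$, and of $tH_{\rm ImHop}$—can be merged into one finite-range sum at a common $R_{\text{max}}$; this is immediate because Theorem~\ref{thm:HW} already bounds the ranges of the $h_X$ by $R_{\text{max}} = 2R$ and the decomposition in Eq.~(\ref{eq:HImHop_nonherm_annih}) is only range $2$, so no range-inflation issue arises. The main conceptual point to state cleanly is simply that the type II assumption is exactly what lets us repurpose $H$'s own strictly local decomposition to produce the forbidden strictly local decomposition of $\hat{N}_{\rm tot}$; once that observation is in place, Corollary~\ref{cor:Ntotnotnonherm} does all the work.
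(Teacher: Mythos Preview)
Your proposal is correct and takes essentially the same approach as the paper: invoke the decomposition of Theorem~\ref{thm:HW}, then use Corollary~\ref{cor:Ntotnotnonherm} to force $\omega = 0$ for any type II $H$, leaving $H - tH_{\rm ImHop}$ manifestly type I. The paper's own proof is a terse two-sentence version of exactly this argument; your contradiction step (rearranging to exhibit a forbidden strictly local decomposition of $\hat{N}_{\rm tot}$) is just the explicit content behind the paper's phrase ``follows immediately.''
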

\begin{proof}
    This follows immediately from the ability to write a generic extensive-local Hamiltonian $H$ in the form of Eq.~\eqref{eq:HW}, where we have shown in Corollary~\ref{cor:Ntotnotnonherm} that $\hat{N}_{\rm tot}$ is not a type II Hamiltonian.
    We see that all other terms are encapsulated in $h_X$, so there is only one equivalence class of type II Hamiltonians.
\end{proof}
\begin{prop}
\label{prop:HIIIequive}
    There is only one equivalence class of type III parent Hamiltonians of $\ket{W}$, with $\hat{N}_{\rm tot}$
    being a representative member. All other type III Hamiltonians can be written as additions of type I and II Hamiltonians.
\end{prop}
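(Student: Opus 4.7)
The plan is to combine Theorem~\ref{thm:HW}, which provides the canonical decomposition of any parent Hamiltonian of $\ket{W}$, with Corollary~\ref{cor:Ntotnotnonherm}, which rules out any strictly local decomposition of $\hat{N}_{\rm tot}$ using operators that have $\ket{W}$ as an eigenstate. The argument splits naturally into two parts: verifying that $\hat{N}_{\rm tot}$ is genuinely type III, and then showing that every type III parent Hamiltonian is equivalent to $\hat{N}_{\rm tot}$ modulo type I and type II operators.

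First I would invoke Corollary~\ref{cor:Ntotnotnonherm} to establish that $\hat{N}_{\rm tot}$ is type III. Both the type I and type II definitions require the Hamiltonian to be a linear combination of strictly local terms each possessing $\ket{W}$ as an eigenstate (Hermitian for type I, possibly non-Hermitian for type II). Since Corollary~\ref{cor:Ntotnotnonherm} forbids even the non-Hermitian decomposition, $\hat{N}_{\rm tot}$ falls in neither type I nor type II, hence is type III.

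Next, take an arbitrary type III parent Hamiltonian $H$. By Theorem~\ref{thm:HW} it admits a rewriting
\begin{equation}
    H = \Omega \mathds{1} + \omega \hat{N}_{\rm tot} + t H_{\rm ImHop} + \sum_{|X|\leq R_{\rm max}} h_X ,
\end{equation}
with $\Omega, \omega, t \in \mathbb{R}$ and each $h_X$ a strictly local Hermitian operator annihilating $\ket{W}$. The pieces $\Omega \mathds{1}$ and $\sum h_X$ are manifestly type I, while $H_{\rm ImHop}$ is type II by Proposition~\ref{prop:HII}. If $\omega = 0$, then $H$ would itself be a sum of type I and type II operators, contradicting the assumption that $H$ is type III. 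Hence $\omega \neq 0$, and
\begin{equation}
    H - \omega \hat{N}_{\rm tot} = \Omega \mathds{1} + t H_{\rm ImHop} + \sum_{|X|\leq R_{\rm max}} h_X
\end{equation}
is a sum of type I and type II operators, so $H$ sits in the same equivalence class as $\hat{N}_{\rm tot}$.

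The main subtlety I anticipate is a uniqueness concern: Theorem~\ref{thm:HW} asserts the existence of such a decomposition but not a priori the uniqueness of the coefficient $\omega$, so one might worry that different rewritings of $H$ could assign different values to $\omega$. However, if two rewritings gave coefficients $\omega_1 \neq \omega_2$, their difference would express $(\omega_1 - \omega_2)\hat{N}_{\rm tot}$ as a sum of type I and type II operators, contradicting the first step. This resolves the potential obstacle and closes the argument.
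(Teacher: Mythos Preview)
Your proposal is correct and follows essentially the same route as the paper's proof: invoke Theorem~\ref{thm:HW} to decompose an arbitrary parent Hamiltonian, identify the $h_X$ sum and $\Omega\mathds{1}$ as type I and $H_{\rm ImHop}$ as type II, and use Corollary~\ref{cor:Ntotnotnonherm} to certify that $\hat{N}_{\rm tot}$ is genuinely type III. Your version is in fact more explicit than the paper's---you spell out the $\omega=0$ contradiction and handle the potential non-uniqueness of $\omega$, which the paper leaves implicit.
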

\begin{proof}
We have already shown in Corollary~\ref{cor:equivalenceHII} that $H_{\rm ImHop}$ is a type II Hamiltonian, and thus only $\hat{N}_{\rm tot}$ remains in Eq.~\eqref{eq:HW}, as a result of Thm.~\ref{thm:HW}.
From Cor.~\ref{cor:Ntotnotnonherm} we know that $\hat{N}_{tot}$ is not a superposition of strictly local non-Hermitian operators.
Thus, it follows that $\hat{N}_{\rm tot}$ is a type III Hamiltonian, of which there is only one equivalence class.
\end{proof}
Now that we have fully understood the Hamiltonian equivalence classes for $\ket{W}$, we will use our knowledge to explore asymptotic scars in the context of this system.
\subsection{Asymptotic QMBS}
\label{sec:asymptoticscars}
Given that the $\ket{W}$ can also be interpreted as QMBS of the associated Hamiltonians, we briefly discuss the associated asymptotic QMBS that appear for such Hamiltonians with the $\ket{W}$ and $\ket{\bar{0}}$ as exact scars. 
Asymptotic QMBS are states with low entanglement that are orthogonal to the exact QMBS states, they are \textit{not} exact eigenstates of the QMBS Hamiltonians but have an energy variance $\Delta H^2$ that vanishes in the thermodynamic limit~\cite{Gotta2023AsymptoticScars, Kunimi2023AsymptoticScarsDMI, Desaules2022SchwingerScars, Desaules2023weak}.
This low energy variance leads them to have diverging relaxation times in the thermodynamic limit under the dynamics of the QMBS Hamiltonian~\cite{Gotta2023AsymptoticScars, moudgalya2024symmetries}, since they are roughly related via Heisenberg's uncertainty principle as~\cite{PhysRevA.81.022113,Gotta2023AsymptoticScars}
    \begin{align}
         \Delta t \geq \frac{\hbar}{2\Delta E}\quad, \;\;\;\Delta E = \sqrt{\Delta H^2}.
    \label{eq:lifetime}
    \end{align}
Here we demonstrate certain asymptotic QMBS of the $\ket{W}$ state in the exhaustive class of Hamiltonians given by Eq.~(\ref{eq:HW}), and compute their variances for the different types of Hamiltonians.
We will first show that $\ket{W_q}$, introduced in Eq.~\eqref{eq:Wqboost}, which is a boosted $W$ state of momentum $q$, is an asymptotic QMBS for the whole $H$ family when $q = \frac{2\pi m}{N}$, $m \in \mathbb{Z}$ and $|m| \ll  N$.
\begin{prop}
\label{prop:asymptoticscar}
    $\ket{W_q}$ is an asymptotic QMBS for any extensive-local Hamiltonian with $\ket{W}$ as an exact QMBS, when $q = \frac{2\pi m}{N}$ with $m \in \mathbb{Z}$, $|m| \ll N$.
\end{prop}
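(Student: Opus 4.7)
The plan is to compute the energy variance $\Delta H^2 = \bra{W_q}H^2\ket{W_q} - \bra{W_q}H\ket{W_q}^2$ directly using the canonical decomposition of $H$ given by Thm.~\ref{thm:HW}. First I would verify that three of the four pieces in Eq.~(\ref{eq:HW}) act as scalars on $\ket{W_q}$: clearly $\mathds{1}\ket{W_q} = \ket{W_q}$, and $\hat{N}_{\rm tot}\ket{W_q} = \ket{W_q}$ since $U_q = e^{-iq\sum_j j\hat{n}_j}$ preserves particle number. A direct momentum-space calculation gives $H_{\rm ImHop}\ket{W_q} = \sin q\,\ket{W_q}$. Hence these three pieces contribute nothing to the variance, and the problem reduces to estimating $\|(H_I - \bar{E}_I)\ket{W_q}\|^2$, where $H_I = \sum_X h_X$ is the type I tail and $\bar{E}_I = \bra{W_q}H_I\ket{W_q}$.

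Next I would compute $h_X\ket{W_q}$ for each basis entry in Tab.~\ref{tab:HIterms}. For $n\ge 2,\,m\ge 2$ terms, $h_X\ket{W_q}=0$ because the $m\ge 2$ annihilations kill the single-particle state. For $n\ge 2,\,m=1$, only $A_X = \sum_k c^{j_1\dots j_n}_k s^\dagger_{j_1}\cdots s^\dagger_{j_n} s_k$ contributes and produces $\frac{1}{\sqrt{N}}\bigl(\sum_k c^{j_1\dots j_n}_k e^{-iqk}\bigr)s^\dagger_{j_1}\cdots s^\dagger_{j_n}\ket{\bar{0}}$. The constraint $\sum_k c^{j_1\dots j_n}_k = 0$ from Tab.~\ref{tab:HIterms} combined with $|j_1-k|\le R$ within the support of $h_X$ yields $|\sum_k c^{j_1\dots j_n}_k e^{-iqk}|= O(qR)$. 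For the $n=m=1$ terms $h_X = \sum_{j,k}d^j_k s^\dagger_j s_k$ with $\sum_k d^j_k = 0$, the analogous rewriting $\sum_k d^j_k(e^{-iqk}-e^{-iqj})$ again gives a coefficient of $O(qR)$. In both cases this yields $\|h_X\ket{W_q}\|^2 = O(q^2/N)$, where the $1/N$ originates from the $W$-state normalization and the $q^2$ from the cancellation at $q=0$ enforced by Tab.~\ref{tab:HIterms}.

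Third, I would sum over $X$ to get $\|H_I\ket{W_q}\|^2 = \sum_{X,Y}\langle \phi_X|\phi_Y\rangle$ with $\ket{\phi_X}:=h_X\ket{W_q}$. The key observation is that $\ket{\phi_X}$ has its nonzero amplitudes restricted to sites inside the support of $h_X$, so $\langle\phi_X|\phi_Y\rangle$ vanishes unless the supports overlap; moreover the $n=m=1$ and $n\ge 2$ contributions live in different particle-number sectors and do not mix. With $O(N)$ terms $X$ and only $O(1)$ partners $Y$ each, one gets $\|H_I\ket{W_q}\|^2 = O(N)\cdot O(q^2/N) = O(q^2)$. A parallel expansion of $\bra{W_q}h_X\ket{W_q} = \tfrac{1}{N}\sum_{j,k}d^j_k e^{iq(j-k)}$, using both row/column constraints $\sum_k d^j_k = \sum_j d^j_k = 0$ (Hermiticity of $h_X$ automatically enforces the second from the first), kills the linear-in-$q$ term and gives $\bar{E}_I = O(q^2)$. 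Hence $\Delta H^2 = O(q^2) = O(m^2/N^2)$, which vanishes as $N\to\infty$ whenever $|m|\ll N$.

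The main obstacle I anticipate is bookkeeping the cross terms $\langle\phi_X|\phi_Y\rangle$ cleanly and confirming that the $O(N)$ sum over $X$ does not upgrade $O(q^2/N)$ per-term estimates to something that survives the thermodynamic limit. This is tamed by the bounded-norm and finite-range assumptions on $h_X$ (since only $X,Y$ within distance $R$ interfere) and by the separation into particle-number sectors, both of which follow directly from the classification in Tab.~\ref{tab:HIterms}.
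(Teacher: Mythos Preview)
Your proposal is correct and essentially parallels the paper's proof. Both arguments first reduce the variance to the type~I piece $\sum_X h_X$ (since $\mathds{1},\hat N_{\rm tot},H_{\rm ImHop}$ act as scalars on $\ket{W_q}$), then show that cross terms $\langle\phi_X|\phi_Y\rangle=\bra{W_q}h_Xh_Y\ket{W_q}$ with $X\cap Y=\emptyset$ vanish, leaving only $O(N)$ overlapping pairs each contributing $O(q^2/N)$.

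The one mild difference is in how the $O(q^2)$ per-term estimate is obtained. You compute $h_X\ket{W_q}$ explicitly from the Tab.~\ref{tab:HIterms} classification, using the constraint $\sum_k c_k=0$ to force an $O(q)$ amplitude. The paper instead groups $h_Xh_Y+h_Yh_X$ for overlapping $X,Y$ into a new bounded Hermitian annihilator $\tilde h_{\tilde X}$ of $\ket{W}$ on $\tilde X=X\cup Y$, and then invokes the twisted-state/BCH bound already derived in the proof of Prop.~\ref{prop:HII}, namely $|\bra{W_q}\tilde h_{\tilde X}\ket{W_q}|\lesssim q^2|\tilde X|^3\|\tilde h_{\tilde X}\|/N$. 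The paper's route is more economical (it recycles an existing bound and never needs the explicit row-by-row structure of Tab.~\ref{tab:HIterms}), while yours is more self-contained and makes the localization of $h_X\ket{W_q}$ inside $X$ explicit. Both give the same $\Delta H_{W_q}^2=O(q^2)$.
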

The proof is given in App.~\ref{app:asymptoticscar}. 
There we show that the variance of the energy of $\ket{W_q}$ for small $q \ll 1$ is bounded by
    \begin{align}
        \Delta H_{W_q}^2 \defn \langle H^2 \rangle_q - \langle H \rangle_q^2 \leq O(q^2) \quad,
    \end{align}
    where $H$ is the general Hamiltonian given in Eq.~(\ref{eq:HW}) and $\langle \dots \rangle_q \equiv \bra{W_q} \dots \ket{W_q}$.
We conclude using Eq.~(\ref{eq:lifetime}) that the $\ket{W_q}$ states have lifetime scaling as $\sim 1/q \sim N$.

\begin{figure}
    \centering
    \includegraphics[width=0.9\linewidth]{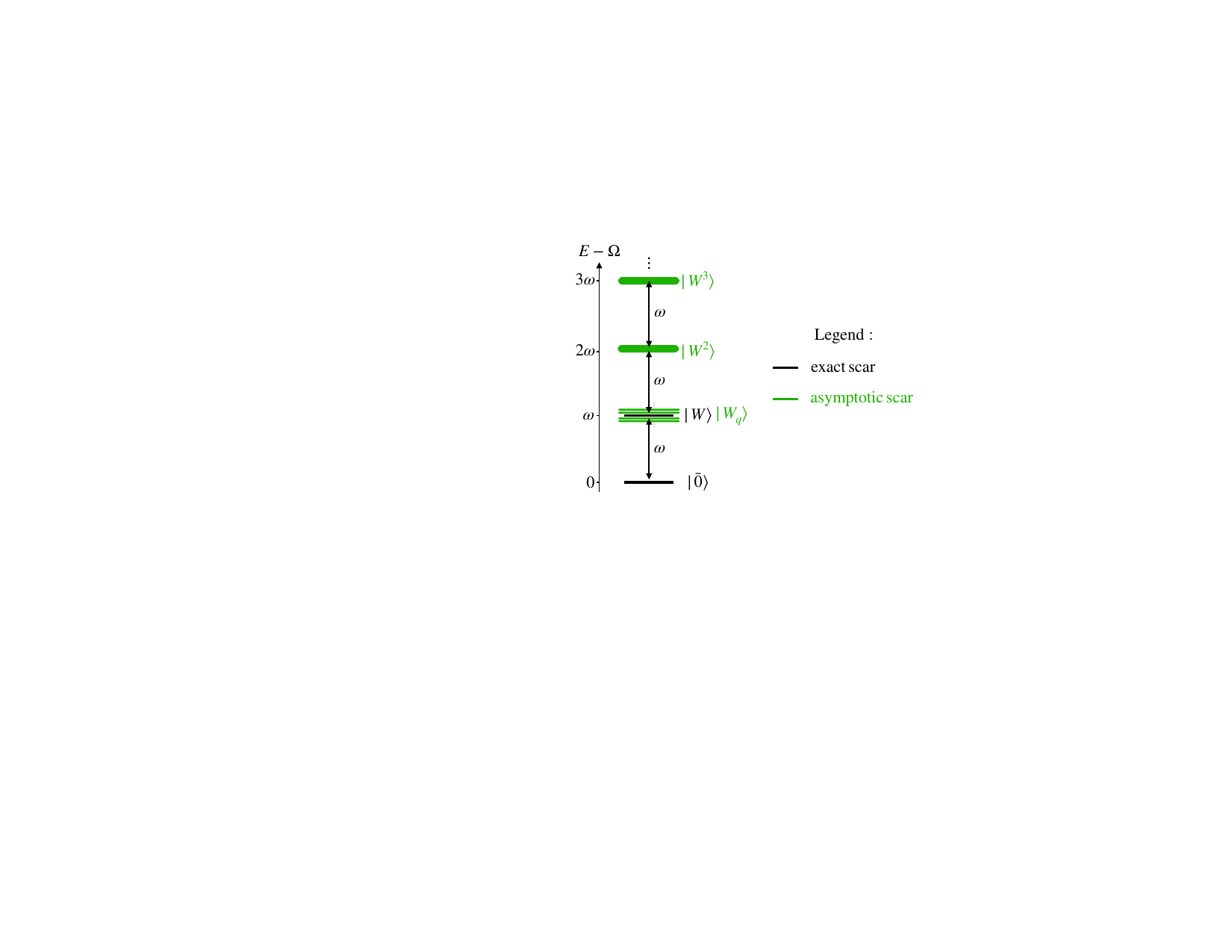}
    \caption{Relative energy locations of exact (black) and asymptotic (green) QMBS for Hamiltonians with $\ket{W}$ as an exact eigenstate (i.e., QMBS).
    The energy separation between $\ket{\bar{0}}$ and $\ket{W}$ is $\omega$ defined in Eq.~(\ref{eq:HW}), with asymptotic $\ket{W_q}$ QMBS (for $q \ll 1$) around $\omega$ (with lifetime $\gtrsim 1/q \sim N$) and $\ket{W^2}$ around $2\omega$ (with lifetime $\gtrsim \sqrt{N}$).
    We also prove that there exist asymptotic scars $\ket{W^p}$ at energies $p\omega$ for $p\in\{3,4,...\}$, where $\ket{W^p}$ is given by Eq.~\eqref{eq:Whigher}.
    }
    \label{fig:asymptoticscar}
\end{figure}
One may further wonder whether higher-particle number states, such as the aforementioned $\ket{W^2}$, could also be asymptotic QMBS.
$\ket{W^2}$ is a good candidate since it is an orthogonal state to both $\ket{\bar{0}}$ and $\ket{W}$, and also an eigenstate of $\hat{N}_{\rm tot}$ and $H_{\rm ImHop}$.\footnote{The fact that the $\ket{W^2}$ is an eigenstate of $H_{\rm ImHop}$ is a rather non-trivial fact known from the studies of spin-1/2 models with the ferromagnetic tower of states as exact eigenstates~\cite{Moudgalya_2022}: $\ket{\bar{0}}$, $\ket{W}$, $\ket{W^2}$ are all members of the tower while $H_{\rm ImHop}$ corresponds to so-called Dzyaloshinskii-Moriya interaction (DMI), which is also a type II Hamiltonian in that context~\cite{moudgalya2023exhaustive,wang2024generalizedspinhelixstates}.}
Here, we verify the intuition that just requiring that $\ket{W}$ is an exact scar implies that $\ket{W^2}$ is an asymptotic scar.
\begin{prop}
\label{prop:W2}
    $\ket{W^2}$, defined in Eq.~\eqref{eq:W2}, is an asymptotic scar with the energy variance upper-bounded by $O(N^{-1})$.
\end{prop}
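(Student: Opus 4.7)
The plan is to apply Thm.~\ref{thm:HW} to write any admissible extensive-local Hamiltonian as
\[
H = \Omega\mathds{1} + \omega\hat{N}_{\rm tot} + t\, H_{\rm ImHop} + \sum_X h_X\,,
\]
and then to control each piece separately on $\ket{W^2}$. We have $\hat{N}_{\rm tot}\ket{W^2}=2\ket{W^2}$ exactly, and $H_{\rm ImHop}\ket{W^2}=0$ exactly (the latter is the ferromagnetic-tower/DMI fact noted in footnote~17, which can also be verified directly by the combinatorial cancellation that occurs when $H_{\rm ImHop}$ hops each particle in $\ket{W^2}=\binom{N}{2}^{-1/2}\sum_{j<k}s_j^\dagger s_k^\dagger\ket{\bar{0}}$). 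Hence $(H-\Omega-2\omega)\ket{W^2}=\sum_X h_X\ket{W^2}$, and the min-variance identity $\Delta H^2_{W^2}=\min_c\|(H-c)\ket{W^2}\|^2$ with $c=\Omega+2\omega$ yields
\[
\Delta H^2_{W^2}\leq \Big\|\textstyle\sum_X h_X\ket{W^2}\Big\|^2\,.
\]

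Next I would control a single $h_X\ket{W^2}$ by Schmidt-decomposing $\ket{W^2}$ across the cut $X|X^c$. Since $\ket{W^2}$ has only two particles, the decomposition has exactly three pieces, labelled by the particle number on $X$, with amplitudes $\alpha_0=O(1)$, $\alpha_1=O(\sqrt{|X|/N})$, and $\alpha_2=O(|X|/N)$ corresponding to $\ket{\bar{0}}_X\otimes\ket{W^2}_{X^c}$, $\ket{W}_X\otimes\ket{W}_{X^c}$, and $\ket{W^2}_X\otimes\ket{\bar{0}}_{X^c}$ respectively. Applying Prop.~\ref{prop:HXdegW0} to the strictly local Hermitian annihilator $h_X$ gives $h_X\ket{\bar{0}}_X=h_X\ket{W}_X=0$, so only the last piece survives:
\[
h_X\ket{W^2}=\alpha_2\,(h_X\ket{W^2}_X)\otimes\ket{\bar{0}}_{X^c}\,,
\]
with $\|h_X\ket{W^2}\|=O(R/N)$ using $\alpha_2=O(R/N)$ and the uniform bound on $\|h_X\|$ implicit in Thm.~\ref{thm:HW} (with $|X|\leq R_{\max}=2R$).

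The final step is to expand $\|\sum_X h_X\ket{W^2}\|^2=\sum_{X,Y}\langle W^2|h_X h_Y|W^2\rangle$ and argue that only overlapping pairs contribute. When $X\cap Y=\emptyset$, the two states $h_X\ket{W^2}$ and $h_Y\ket{W^2}$ factorize across the tripartition $X$, $Y$, $(X\cup Y)^c$, and the factor on $X$ becomes $\langle \bar{0}|_X h_X\ket{W^2}_X=\overline{\langle W^2|_X h_X\ket{\bar{0}}_X}=0$ by Hermiticity together with Prop.~\ref{prop:HXdegW0}. Thus only the $O(NR)$ overlapping pairs in 1D survive, and for each of them Cauchy-Schwarz gives $|\langle h_X\ket{W^2}|h_Y\ket{W^2}\rangle|=O(R^2/N^2)$. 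Summing yields $\|\sum_X h_X\ket{W^2}\|^2=O(R^3/N)=O(1/N)$ for fixed $R=R_{\max}$, and subtracting $\langle\sum_X h_X\rangle^2$ in the variance can only improve the bound.

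The hardest step is establishing the vanishing of all disjoint-pair contributions: this requires carefully setting up the tripartite factorization, and recognizing that Prop.~\ref{prop:HXdegW0} kills \emph{both} endpoints of the inner product. A secondary technical point is making explicit the uniform norm bound $\|h_X\|\leq C$ for the regrouped annihilator terms produced by Thm.~\ref{thm:HW}, which is already implicit in the ``bounded local terms'' assumption of that theorem.
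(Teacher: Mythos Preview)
Your proposal is correct and follows essentially the same approach as the paper: reduce via Thm.~\ref{thm:HW} to the $\sum_X h_X$ contribution, use the Schmidt decomposition of $\ket{W^2}$ to see that only the $\ket{W^2}_X\otimes\ket{\bar 0}_{X^c}$ piece survives, kill the disjoint pairs, and bound the $O(N)$ overlapping pairs by $O(N^{-2})$ each. The one place where the paper is slightly more direct is the disjoint-pair step: rather than pairing $\langle h_X W^2|h_Y W^2\rangle$ and invoking Hermiticity, the paper observes that $h_Y h_X\ket{W^2}=0$ as a state, since $h_X\ket{W^2}\propto(\cdots)_X\otimes\ket{\bar 0}_{X^c}$ and $h_Y$ (supported in $X^c$) annihilates $\ket{\bar 0}_{X^c}$ outright.
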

The full proof is given in Appendix~\ref{app:W2}, where we show that the energy variation of $\ket{W^2}$ is bounded by
\begin{align}
    \Delta H_{W^2}^2 \defn \langle H^2 \rangle_{W^2} - \langle H \rangle_{W^2}^2 \leq O(N^{-1}) \quad,
    \label{eq:W2variance}
\end{align}
where $H$ is the general Hamiltonian given in Eq.~(\ref{eq:HW}), and $\langle \dots \rangle_{W^2} \defn \bra{W^2} \dots \ket{W^2}$.
This implies that the lifetime of $\ket{W^2}$ is $\sim\sqrt{N}$.
The key observations that lead to the result are:
(1) $\ket{W^2}$ is an eigenstate of both $H_{\rm ImHop}$ and $\hat{N}_{\rm tot}$, which reduces the variational energy considerations of $H$ to just the type I terms $\sum h_X$ terms.
(2) Any terms in $H^2$ is made up of $h_X h_Y$ terms -- when $X\cap Y=\emptyset$ such terms will always annihilate $\ket{W^2}$, which allows us to bound $\langle H^2 \rangle_{W^2}$.
Encouraged by this relatively simple demonstration, we are led to consider yet higher-particle number states and the following
\begin{prop}
\label{prop:Wpasymptoticscar}
    $\ket{W^p}$, defined as
    \begin{align}
        \ket{W^p} \defn \frac{1}{\sqrt{\binom{N}{p}}}\sum_{j_1 < \dots < j_n} s_{j_1}^\dag \dots s_{j_p}^\dag \ket{\bar{0}} \quad,
        \label{eq:Whigher}
    \end{align}
    is also an asymptotic scar for any fixed $p$ and $N \to \infty$.
\end{prop}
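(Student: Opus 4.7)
The plan is to mirror the strategy used for $\ket{W^2}$ in Prop.~\ref{prop:W2}. First, I would observe that $\ket{W^p}$ is an exact eigenstate of the ``easy'' pieces of $H$ in Eq.~(\ref{eq:HW}): it is an eigenstate of $\hat{N}_{\rm tot}$ with eigenvalue $p$ by construction, and of $H_{\rm ImHop}$ with eigenvalue $0$. For the latter, passing to Fourier modes $s_q = N^{-1/2}\sum_j e^{-iqj} s_j$ diagonalizes $H_{\rm ImHop} = \sum_q \sin(q)\, s_q^\dagger s_q$, and since $\ket{W^p} \propto (s_{q=0}^\dagger)^p \ket{\bar{0}}$ places all particles at $q=0$ where $\sin(q) = 0$, the action vanishes. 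Writing $V := \sum_{|X|\leq R_{\max}} h_X$, the variance of $H$ on $\ket{W^p}$ therefore reduces to $\mathrm{Var}(V) = \langle V^2\rangle_{W^p} - \langle V\rangle_{W^p}^2 \leq \|V\ket{W^p}\|^2 = \sum_{X,Y}\langle W^p | h_X h_Y | W^p\rangle$.

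For each pair $(X,Y)$ I would bound the matrix element via a Schmidt decomposition of $\ket{W^p}$ on the bipartition $X\cup Y$ versus $(X\cup Y)^c$,
\[
\ket{W^p} = \sum_{k=0}^{p} \beta_k\, \ket{W^k}_{X\cup Y}\otimes \ket{W^{p-k}}_{(X\cup Y)^c} ,
\]
where $\beta_k^2 = \binom{|X\cup Y|}{k}\binom{N-|X\cup Y|}{p-k}/\binom{N}{p}$, so that for fixed $p$ and $|X\cup Y|=O(1)$ one has $\beta_k^2 = O(N^{-k})$ as $N\to\infty$. By Prop.~\ref{prop:HXdegW0} applied to the single-patch $X$, $h_X$ annihilates $\ket{\bar{0}}_X = \ket{W^0}_X$ and $\ket{W}_X = \ket{W^1}_X$, and similarly for $h_Y$. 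For disjoint $X,Y$, further factorizing $\ket{W^k}_{X\cup Y}$ into components $\ket{W^{k_X}}_X \otimes \ket{W^{k_Y}}_Y$ with $k_X + k_Y = k$, the tensor-product operator $h_X \otimes h_Y$ annihilates every component unless simultaneously $k_X \geq 2$ and $k_Y \geq 2$, i.e.\ unless $k \geq 4$.

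This forces the disjoint-pair contribution to $\|V\ket{W^p}\|^2$ to be bounded by $\beta_4^2 \cdot O(1) = O(N^{-4})$ per pair (and to vanish identically for $p \leq 3$), summing over the $O(N^2)$ disjoint pairs to $O(N^{-2})$. For the $O(N)$ overlapping pairs, I would use Cauchy--Schwarz, $|\langle h_X h_Y\rangle_{W^p}| \leq \|h_X\ket{W^p}\|\,\|h_Y\ket{W^p}\|$, and bound each factor via the analogous single-patch Schmidt decomposition, which kills the $k\leq 1$ components and gives $\|h_X\ket{W^p}\| = O(N^{-1})$. Overlapping pairs thus contribute $O(N)\cdot O(N^{-2}) = O(N^{-1})$, which dominates and yields $\Delta H^2_{W^p} \leq O(N^{-1}) \to 0$. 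The main technical obstacle is the combinatorial bookkeeping of Schmidt coefficients with $p$- and $|X\cup Y|$-dependent prefactors, and carefully verifying that the particle-number-conserving factorization of $\ket{W^k}_{X\cup Y}$ into $X$ and $Y$ sub-factors really collapses $h_X h_Y$ on every $k \leq 3$ component---once these are nailed down, the bound is uniform in any fixed $p$ as $N \to \infty$.
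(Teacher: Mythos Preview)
Your proposal is correct and follows essentially the same route as the paper's proof: reduce the variance to the $h_X$ part, then separately bound disjoint and overlapping pairs via Schmidt decompositions to get $O(N^2)\cdot O(N^{-4}) + O(N)\cdot O(N^{-2}) = O(N^{-1})$. The bookkeeping differs only cosmetically---for disjoint pairs you Schmidt-decompose on $X\cup Y$ and then factorize $\ket{W^k}_{X\cup Y}$ into $X$ and $Y$ pieces, whereas the paper does a \emph{successive} Schmidt decomposition (first $X$ vs.\ $X^c$, then $Y$ vs.\ $(X\cup Y)^c$); for overlapping pairs you invoke Cauchy--Schwarz while the paper bounds $\langle h_Y h_X\rangle$ directly from the $X\cup Y$ Schmidt decomposition. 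Both routes yield identical scalings.

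One caveat worth tightening: your Fourier-mode argument for $H_{\rm ImHop}\ket{W^p}=0$ is heuristic as written, since for \emph{hard-core} bosons the momentum operators $s_q, s_q^\dagger$ do not obey canonical commutation relations, so ``all particles sit at $q=0$'' does not by itself imply annihilation by $s_q^\dagger s_q$ for $q\neq 0$ (indeed $s_q\ket{W^p}\neq 0$ for $p\geq 2$). The conclusion is nonetheless correct---the paper simply cites it as a known property of the DMI Hamiltonian on Dicke states---and your identity $H_{\rm ImHop}=\sum_q \sin(q)\,s_q^\dagger s_q$ does hold as a rewriting of the bilinear; one just needs an extra line to verify that the full sum over $q$ vanishes on $\ket{W^p}$ (e.g., by checking that contributions from adjacent occupied pairs cancel telescopically).
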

The proof is given in App.~\ref{app:Wpasymptoticscar}, and follows a similar logic to the proof for $\ket{W^2}$, where we employ successive Schmidt-decompositions of the state to calculate contributions to the variance of $H$ from its terms.
We depict our above findings on asymptotic scars in Fig.~\ref{fig:asymptoticscar}.
\subsection{ Non-Hermitian parent operators}
\label{sec:nonHerm}
Finally, let us now briefly mention how our results change when we loosen our condition of Hermiticity on the Hamiltonians with $\ket{W}$ as an eigenstate.
If we relax the Hermiticity condition, but still require an operator $G$, with $\ket{W}$ as an eigenstate, to be extensive-local, then the general form of this operator is given by
\begin{align}
    \label{eq:GW}
    G = \Omega\mathds{1}+\omega \hat{N}_{\rm tot} + \sum_{|X|\leq R_{\text{max}}} g_X\quad,
\end{align}
where $\Omega,\omega\in\mathbb{C}$ with $\omega$ having bounded absolute value and $|\Omega| \leq O(N)$, and $g_X$ are operators (potentially non-Hermitian) such that $g_X\ket{W}=0$ supported on a contiguous region $X$ of range $R_{\text{max}}$ (or less).
Similarly to the statement of Thm.~\ref{thm:HW}, if $G$ has an expression with terms of range bounded by $R$, we need $R_{\text{max}}$ of at most $2R$ to find the claimed rewriting, with comparable scaling of the number of terms and their norms as in the original expression.
Corollary~\ref{cor:W0deg} still holds in this context, and thus $\ket{\bar{0}}$ is also an eigenstate of $G$.
The proof of this form follows directly from arguments similar to those proving Thm.~\ref{thm:HW}, Prop.~\ref{prop:HII} (the part showing that $H_{\text{ImHop}}$ can be rewritten as a sum of non-hermitian local annihilators of $\ket{W}$), and Corollary~\ref{cor:Ntotnotnonherm}.
Note that with non-Hermitian extensive local operators, the distinction between type I and type II as defined in Sec.~\ref{subsec:classes} disappears, but there is still a notion of type III operators. 
These facts are evident in the exhaustive understanding of Hamiltonians provided by Eq.~(\ref{eq:GW}).
%

%
\section{Dynamical signatures of type I vs type II Hamiltonians}
\label{sec:dynamical}
\begin{figure*}
\includegraphics[scale=1]{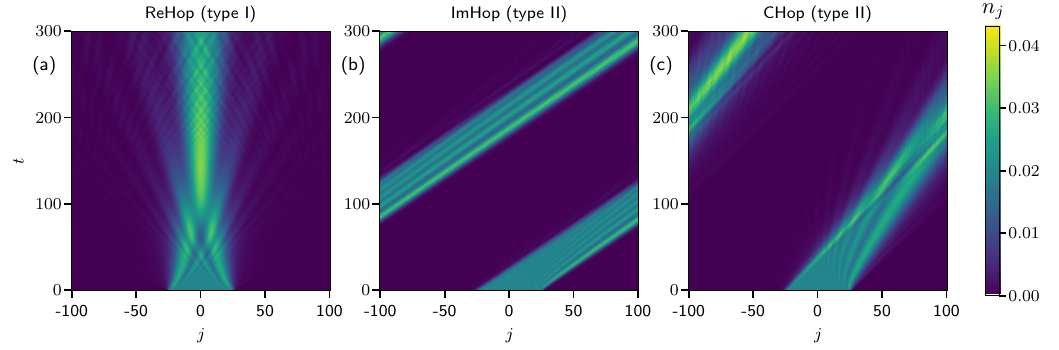}
\caption{\label{fig:Wdropletnumerics}Evolution of the $W$ droplet under three Hamiltonians: (a) $H_{\rm ReHop}$; (b) $H_{\rm ImHop}$; and (c) $H_{\rm CHop}$ (with $\alpha = \beta = 0.5$).
The system is $[-100,\dots,100]$ ($N=201$) with PBC and the droplet starts on $[-25,\dots,25]$ ($M=51$).
The colors indicate different values of $n_j(t) \defn |\!\braket{j}{\phi(t)}\!|^2$. 
Note that there is a clear directionality in the evolution in the ImHop and CHop cases, which indicates a type II component.
}
\end{figure*}
We now describe some scenarios where the different types of Hamiltonians can exhibit qualitatively different types of dynamics.
We focus on QMBS Hamiltonians with the $\ket{W}$ and $\ket{\bar{0}}$ states as eigenstates, and consider the following setup.
On a periodic chain of size $N$, we prepare an initial state that is a $W$-like ``droplet'' (or ``domain'') on a subregion of size $M$ and vacuum-like outside the subregion:
\begin{equation}
\ket{\psi_0} = \frac{1}{\sqrt{M}} \sum_{j=1}^M s_j^\dagger \ket{\bar{0}} = \ket{W}_{[1\dots M]} \otimes \ket{\bar{0}}_{[M+1\dots N]} ~. 
\end{equation}
We want to study the time evolution of this state under our QMBS Hamiltonians, $\ket{\psi(t)} = e^{-i H t} \ket{\psi_0}$.

We will specifically consider $H_{\text{ReHop}}$ [Eq.~(\ref{eq:HI})] vs $H_{\text{ImHop}}$ [Eq.~(\ref{eq:HII})] instances of type I vs type II Hamiltonians; we will also consider a combination of the two, 
\begin{equation}
H_{\text{CHop}} = \alpha H_{\text{ReHop}} + \beta H_{\text{ImHop}},\;\;\;\alpha,\beta\in\mathbb{R},
\label{eq:CHopdefn}
\end{equation}
which is a more general type II Hamiltonian.
Since these models preserve the particle number, we can view this setup as a single-particle dynamics problem.
We can label states with precisely one particle by $\ket{j} \equiv s_j^\dagger \ket{\bar{0}}$, $j = 1, \dots, N$.
The initial state is then simply ``$k = 0$ orbital'' on $[1, \dots, M]$,
\begin{equation}
\ket{\phi_0} = \frac{1}{\sqrt{M}} \sum_{j=1}^M \ket{j},
\end{equation}
and we study the orbital evolution $\ket{\phi(t)} = e^{-iHt} \ket{\phi_0}$.
The system evolves under a translationally invariant single-particle Hamiltonian $H^{\text{1-particle}}$, which we can diagonalize in the plane wave basis $\ket{q} \defn \frac{1}{\sqrt{N}} \sum_j e^{i q j} \ket{j}$, $q = 2\pi n/N, n=0, 1, \dots, N-1$, with dispersions:\footnote{
Note a small change in conventions for this section only: here $\ket{q} = \ket{W_{-q}}$ of Eq.~(\ref{eq:Wqboost}), and $H_{\text{ImHop}} = \frac{1}{2} \sum_j (-i s_j^\dagger s_{j+1} + \text{H.c.})$; this is to match more familiar writing of plane wave orbitals and to have positive dispersion slope of $\epsilon_q^{\text{ImHop}}$ at $q = 0$.
}
\begin{align}
& H^{\text{1-particle}} = \sum_q \epsilon_q \ketbra{q} ~, \\
& \epsilon_q^{\text{ReHop}} = w [1 - \cos(q)] ~, \label{eq:epsReHop} \\
& \epsilon_q^{\text{ImHop}} = w \sin(q) ~, \label{eq:epsImHop} \\
& \epsilon_q^{\text{CHop}} = \alpha w [1 - \cos(q)] + \beta w \sin(q) ~. \label{eq:epsCHop}
\end{align}
For the pure $H_{\text{ReHop}}$ and $H_{\text{ImHop}}$ models, we have denoted the corresponding hopping amplitudes as $w$ (in the earlier sections, in particular in Fig.~\ref{fig:HIvsHII}, we had set $w = 1$).
We have also included the more general type II example, Eq.~(\ref{eq:CHopdefn}), with dispersion $\epsilon^{\text{CHop}}_q$.
A simple calculation gives
\begin{align}
& \ket{\phi_0} = \sum_q f_q \ket{q} \implies \ket{\phi(t)} = \sum_q f_q e^{-i\epsilon_q t} \ket{q} ~, \\
& f_q = \frac{1}{\sqrt{M N}} \frac{\sin(qM/2)}{\sin(q/2)} e^{-iq (M+1)/2} ~,
\end{align}
allowing us to study properties of the time-evolved state $\ket{\phi(t)}$.
(Note that the exact $f_{q=0} = \sqrt{M/N}$ matches the formal $q \to 0$ limit of the above expression.)
Thus, the site occupations $n_j(t) \defn \bra{\phi(t)} n_j \ket{\phi(t)}$ are given by
\begin{equation}
n_j(t) = |\!\braket{j}{\phi(t)}\!|^2,\;\;\;
\braket{j}{\phi(t)} = \frac{1}{\sqrt{N}} \sum_q f_q e^{i q j} e^{-i \epsilon_q t}.
\label{eq:njdefn}
\end{equation}
Figure~\ref{fig:Wdropletnumerics} shows color plot visualizations of $n_j(t)$ for the three dispersions, Eqs.~(\ref{eq:epsReHop})-(\ref{eq:epsCHop}).
We see that the droplet persists in the same location for a long time in the type I case $H_{\text{ReHop}}$, while it moves ballistically in the type II cases $H_{\text{ImHop}}$ and $H_{\text{CHop}}$.
The droplet appears to eventually ``melt away'', already visibly in the $H_{\text{ReHop}}$ and $H_{\text{CHop}}$ cases, but not yet in the $H_{\text{ImHop}}$ case.
In addition, there is a ``directionality" and ballistic propagation of the droplet in the $H_{\rm ImHop}$ and $H_{\rm CHop}$ cases.

We can develop an analytical understanding of the observed behaviors by studying the overlap with the initial state (which is always of interest), as well as overlap with the $W$ droplet state translated by $G$ lattice sites $\ket{T_G \phi_0}$ (which is motivated by the ballistic propagation observed in some cases), given by
\begin{align}
\ket{T_G \phi_0} &\defn \frac{1}{\sqrt{M}} \sum_{j=G+1}^{G+M} \ket{j} ~, \nonumber\\
\braket{T_G \phi_0}{\phi(t)} &= \sum_q |f_q|^2 e^{i q G} e^{-i\epsilon_q t} =: 1 - \Upsilon_G(t,M,N) ~, \nonumber\\
\Upsilon_G(t,M,N) &= \frac{1}{M N} \sum_q \frac{\sin^2(qM/2)}{\sin^2(q/2)} (1 - e^{i (q G - \epsilon_q t)}) ~.
\label{eq:defUpsilon}
\end{align}
We can obtain the overlap with the initial state by setting $G = 0$, while for the type II Hamiltonians we will be also interested in time-dependent $G$, and we initially proceed by keeping $G$ general.
In the expression for the overlap, we have used $\sum_q |f_q|^2 = 1$ to focus on the difference of the overlap from $1$: for $G = 0$, the $\Upsilon_G$ gives the reduction of the overlap from the initial value of $1$, while for general $G$ it quantifies the ``distance'' of the time-evolved state from the $\ket{T_G\phi_0}$ state.\footnote{
Note that this anticipates that $\epsilon_{q=0} = 0$ in our cases of interest. If this were not true, we could instead consider $\sum_q |f_q|^2 e^{i q G} e^{-i\epsilon_q t} = e^{-i\epsilon_0 t} \sum_q |f_q|^2 e^{i q G} e^{-i (\epsilon_q - \epsilon_0) t} =: e^{-i\epsilon_0 t} [1 - \Upsilon_G(t,M,N)]$.
This is equivalent to noting that the $\omega N_{\text{tot}}$ term, by virtue of being a conserved quantity here, does not have any interesting dynamical effects.
\label{foot:epsq0}
}
We first observe that as long as we keep $M$ fixed, the quadratic vanishing of the denominator $\sin^2(q/2)$ at small $q$ (important in the thermodynamic limit) is completely controlled by the vanishing of the numerator $\sin^2(qM/2)$.
Hence, we can obtain a well-defined thermodynamic limit for a fixed $M$ and $N$-independent $G$,
\begin{equation}
\begin{aligned}
\Upsilon_G(t,M) &= \lim_{N \to \infty} \Upsilon_G(t,M,N) \\
&= \frac{1}{M} \int_{-\pi}^\pi \frac{dq}{2\pi} \,\frac{\sin^2(qM/2)}{\sin^2(q/2)} \, (1 - e^{i (q G-\epsilon_q t)}) ~,
\end{aligned}
\label{eq:Upsilon_tM}
\end{equation}
which we interpret as describing the corresponding overlap evolution from the initial $W$ droplet of size $M$ surrounded by the vacuum in an infinite system.

We now state the main analytical results on the motion of the droplet and its melting observed in the different cases in Fig.~\ref{fig:Wdropletnumerics}, and we refer to App.~\ref{app:dynamics} for the details of the computations.
\subsubsection{Early time behavior}
First, Eq.~(\ref{eq:Upsilon_tM}) can be analyzed at early times, by expanding $e^{i(q G - \epsilon_q t)}$ in the expression, which we do in App.~\ref{app:earlytime}.
This regime can also be analyzed using the real-space forms of the Hamiltonians in terms of the Hermitian vs non-Hermitian local annihilators (of the $\ktW$ and $\ktO$ states), which then initially act non-trivially only near the droplet boundaries.
We find that the early-time behavior of the fidelity decrease (studied by focusing on $G = 0$) is qualitatively similar for the two types.
However, there is a clear difference in the directionality that is already visible in the early-time computation.
This can be understood using the expectation values of current operators in the system, which vanish everywhere for $H_{\rm ReHop}$, while they are non-vanishing within the droplet for $H_{\rm ImHop}$ and $H_{\rm CHop}$. 
This leads to non-vanishing current divergence at the edges of the droplet, which in turn provide the directionality seen in Fig.~\ref{fig:Wdropletnumerics}.
\subsubsection{Intermediate time behavior}
\label{subsubsec:intermediate_time}
To highlight even more qualitative differences between the different types, we study the intermediate time regime $wt \gg 1$ with typical upper bounds of the form $wt \ll M^z$, where $z$ depends on the system and quantity being studied.
We first observe that in all our cases $\epsilon_{q=0} = 0$, hence in the expression in Eq.~(\ref{eq:Upsilon_tM}) at any finite $t$ and $G$, the vanishing denominator $\sin^2(q/2)$ at small $q$ is also controlled by the corresponding $1 - \cos(\epsilon_q t - q G)$ or $\sin(\epsilon_q t - q G)$ factors in the numerator, besides the aforementioned $\sin^2(qM/2)$ factor.
This leads to the appearance of  interesting intermediate-time scaling regimes as follows.
We are interested in the behaviors at ``large'' $M \gg 1$ and at a given ``largish'' $t \gtrsim w^{-1}$.
Since $\epsilon_{q=0} = 0$, we expect that ``important'' wave-vectors that determine $\Upsilon_G(t,M)$ in Eq.~(\ref{eq:Upsilon_tM}) are in the range of ``smallish'' $q_*(t)$ around $|\epsilon_{q_*(t)} t - q_* G| \sim O(1)$.
Specifically, we assume that $M$ is large enough such that $q_*(t) M \gg 1$.
In this regime the factor $\sin^2(qM/2)$ in Eq.~(\ref{eq:Upsilon_tM}) oscillates very quickly over the important $q$ integration range, and we can replace it by its average value of $1/2$, see Eq.~(\ref{eq:Upsilon_t_largeM}) in App.~\ref{app:inttime}.
Further, we define 
\begin{equation}
    \upsilon_G(t) = M \Upsilon_G(t,M) ~,
\end{equation}
which in this regime is independent of $M$ and characterizes the boundary behavior of the droplet in the regime $N \to \infty$ (thermodynamic limit) first and then $M \to \infty$ (very large initial domain size) second.
This gives a particular form of the time dependence of the overlaps of interest,
\begin{equation}
\braket{T_G \phi_0}{\phi(t)} = 1 - \frac{1}{M} \upsilon_G(t)
\end{equation}
in appropriate regimes $1 \ll wt \ll M^z$ specified below.
We now describe the behavior of $\upsilon_G(t)$ for various cases.
\subsubsection{Droplet motion and boundary melting}

To characterize the overall motion of the droplet, we consider the $G = 0$ calculations in the type I and type II cases separately.
For $H_{\rm ReHop}$, we use $\epsilon_q \approx \frac{1}{2}w q^2$ at small $q$ to obtain
\begin{equation}
\upsilon^{\text{ReHop}}_0(t) \approx \sqrt{\frac{wt}{\pi}} (1 + i) ~,\;\;\;1 \ll wt \ll M^2. 
\label{eq:upsilonReHop_main}
\end{equation}
By noting from Fig.~\ref{fig:Wdropletnumerics} that the overall droplet does not appear to move, we can view the corresponding decrease in the overlap with the initial state as a consequence of ``diffusive" melting happening at the domain boundary, which can be confirmed by explicit numerical evaluations discussed in  App.~\ref{app:dynamics}.
Turning to $H_{\text{ImHop}}$, we can use $\epsilon_q^{\text{ImHop}} \approx wq$ at small $q$ to derive
\begin{equation}
\upsilon^{\text{ImHop}}_0(t) \approx wt ~,\;\;1 \ll wt \ll M,
\label{eq:upsilonImHop_main}
\end{equation}
which, along with the overall motion of the droplet in Fig.~\ref{fig:Wdropletnumerics}, suggests that this is a consequence of ``ballistic'' motion of the droplet rather than boundary melting physics.
We can also argue that the behavior is similar for $H_{\rm CHop}$ of Eq.~(\ref{eq:CHopdefn}), as soon as $\beta \neq 0$, where the droplet has a similar ballistic motion as $H_{\rm ImHop}$ case. 
We thus see the qualitative difference between type I and type II already at the level of the overlap with the initial droplet.
In the type II cases, we can also characterize the boundary melting physics in addition to the overall ballistic motion by studying the overlap of $\ket{\phi(t)}$ with $\ket{T_G\phi_0}$, choosing optimal $t$-dependent $G$.
For an optimal choice of $G$, we expect this overlap to be close to $1$ if the droplet does not ``melt". 
By observing that the behaviors in Eqs.~(\ref{eq:upsilonReHop_main}) and (\ref{eq:upsilonImHop_main}) essentially appear due to the leading order $q$ behavior in the exponent in Eq.~(\ref{eq:Upsilon_tM}) (i.e., $\epsilon_q t - qG \approx w q t$ at $G=0$), for $H_{\rm ImHop}$, we can consider $G(t) = wt$, which ``cancels" the overall ballistic motion of the droplet and probes its melting. 
In this case, noting that $\epsilon_q^{\text{ImHop}} t - q G(t) \sim w t q^3$, we obtain
\begin{equation}
\upsilon_{G(t)=wt}^{\text{ImHop}}(t) \approx  A (wt)^{1/3} ~,\;\;1 \ll wt \ll M^3.
\label{eq:upsilon_Gwt_ImHop_main}
\end{equation}
where $A \approx 0.411$, and we can interpret the corresponding decrease in the overlap with the ``reference'' droplet as describing a sub-diffusive melting of the droplet at the boundaries, which we further confirm by numerics in App.~\ref{app:dynamics}.
An analogous computation can be performed for  $H_{\text{CHop}}$, where we can again ``cancel" the overall ballistic motion by choosing $G(t) = \beta w t$.
Noting that here we instead obtain $\epsilon_q^{\rm CHop} t - q G(t) \sim \alpha w t q^2$, we obtain 
\begin{align}
\upsilon^{\text{CHop}}_{G(t)=\beta w t}(t) \approx \sqrt{\frac{\alpha wt}{\pi}} (1 + i) ~,\;\;1 \ll \alpha wt \ll M^2, 
\label{eq:upsilon_G_CHop_main}
\end{align}
indicating diffusive melting rather than subdiffusive found for the pure $H_{\text{ImHop}}$ in Eq.~(\ref{eq:upsilon_Gwt_ImHop_main}).
\subsubsection{Generalizations to interacting cases}
While the above calculations were performed for the solvable free-particle instances of type II Hamiltonians, we expect some aspects of the $W$ domain ballistic motion up to largish times and slow melting to hold also for generic interacting type II Hamiltonians. 
Thus, we can view the $W_q$ ``asymptotic scars'' as very long-lived ``quasiparticles'' near $q=0$ playing in some ways the role of the free particles in the above calculations.
Including the effect of their finite lifetimes is not trivial and could modify the droplet melting even qualitatively.
However, we are encouraged by the fact that the parent Hamiltonians of the $W$ QMBS, with the exception of $N_{\text{tot}}$ (which we would exclude if we require the $\ktO$ and $\ktW$ states to be degenerate), are either type I or type II (Thm.~\ref{thm:HW}), and hence their action on the $W$ droplet is localized on the boundaries.
Moreover, following Thm.~\ref{thm:HW}, the interacting type II Hamiltonians are of the form of $H_{\rm ImHop} + H^{\rm int}_{\text{type-I}}$, where $H^{\rm int}_{\text{type-I}}$ is an interacting type I Hamiltonian.
Hence we conjecture that the intermediate time behavior of the droplet under such interacting Hamiltonians consists of chiral ballistic motion that we found for $H_{\rm ImHop}$ followed by diffusive boundary melting properties we found above for $H_{\text{CHop}}$.
We leave numerical explorations of such questions to future work.
\subsubsection{Extension to quenches from a $W^p$ droplet}
Finally, we mention extensions of these calculations to quenching from a $W^p$ droplet under the corresponding type I vs type II Hamiltonians, which we discuss in App.~\ref{app:Wp_droplet_dyn}.
We can in fact reuse the above results in the approximation where the $W^p$ state is viewed as a non-interacting Bose-Einstein Condensate (BEC) of particles, and the BEC droplet is then evolved under the corresponding particle dispersions.
We conjecture that this treatment is qualitatively accurate at low particle densities, and hence predict similar sharp dynamical difference between the two types of Hamiltonians, with particularly interesting distinction for intrinsic domain boundary dynamics for a non-zero density quench (i.e., when $\rho = p/M \neq 0$ and $\lim_{M\to \infty}[\lim_{N \to \infty}...]$); we refer to App.~\ref{app:Wp_droplet_dyn} for more details.

\section{Hamiltonian Types and Boundary Actions}
\label{sec:generalQMBS}
With all the observations from the $W$ state, we now turn to general theorems that all type I and type II parent Hamiltonians of a generic eigenstate $\ket{\psi}$ must satisfy, which are based on truncating the Hamiltonian to some part of the system.
To do this, let us first carefully define the \textit{support} of an operator.
\begin{defin}
    The \textit{support} of an operator $\mathcal{O}$ is all the sites $j$ for which there exists an on-site operator $\mathcal{O}_j$, i.e., one that is defined on the $j$th site local Hilbert-space and tensored identity on every other site, such that
    \begin{align}
        [\mathcal{O},\mathcal{O}_j]\neq 0\quad.
    \end{align}
    In particular, this means that the identity operator has a support of the empty set $\emptyset$. 
    Suppose now we have a (Hermitian or non-Hermitian) extensive local operator $\mathcal{P}$, expressed in terms of some local terms $L_n$ as:
\begin{equation}
    \mathcal{P} = \sum_{n}c_n L_n\quad.
\end{equation}
We can define its truncation to a region $Y$ by only keeping the terms $c_n L_n$ that are supported on the region $Y$.
Clearly, this depends on the choice of $L_n$'s: e.g., we can group several terms into one or split a term into several parts, and this would change the result of the truncation (though the qualitative physics should not depend on reasonable regroupings).
For a systematic mathematical analysis below, it is convenient (and in some arguments important) to view $\{L_n\}$ as a basis in the operator space such as the one in Def.~\ref{def:opbasis}, or the generalized Pauli string basis, and keep the basis fixed in all arguments.
However, any basis composed of operator strings where one of the elements of the local basis is the identity and all others are linearly independent would work.
Note that the truncation depends on the basis choice but is a precise fixed linear operation once the basis choice is fixed.
For what follows, it is also important to observe that if $\mathcal{O}$ is hermitian and $L_n$ are hermitian, then all $c_n \in \mathbb{R}$ and such a truncation is also Hermitian.\footnote{This is the simplest choice preserving Hermiticity under the truncation, but other choices are also possible, e.g., if the set $\{L_n\}$ is closed under hermitian conjugation, which is the case for the operator string basis in Eq.~(\ref{eq:Lbasis}).}
We always assume Hermiticity-preserving truncation below.
\end{defin}
We can now state our main Theorem that characterizes the Hamiltonian types in terms of boundary actions.
\begin{thm}
\label{thm:Hermitiancut}
Given a set of states $\{\ket{\psi_n}\}$, an extensive local parent Hamiltonian (Hermitian operator) in one dimension $H = \sum_j{h_{[j]}}$ is type I or type II if and only if the following condition is satisfied for sufficiently large patches $\Lambda$:
\begin{equation}
H_\Lambda\ket{\psi_n} = (T_{\partial \Lambda} +f_{n, \Lambda})\ket{\psi_n}, \quad \forall n ~,
\label{eq:hermitianrelation}
\end{equation}
where $H_\Lambda$ is the Hamiltonian restricted to the patch, and $T_{\partial \Lambda}$ is an operator strictly localized on the boundaries of $\Lambda$, $f_{n, \Lambda}$ is in general an $n$- and $\Lambda$-dependent constant, and the left (right) boundary part is independent of the location of the right (left) boundary for sufficiently large patch $\Lambda$.
Furthermore, $H$ is type I if and only if $T_{\partial \Lambda}$ can be chosen to be Hermitian, and type II if and only if such a $T_{\partial \Lambda}$ exists but it cannot be made Hermitian.
\end{thm}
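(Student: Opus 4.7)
The plan is to prove the equivalences via a two-way correspondence between local decompositions of $H$ and the boundary-action structure. As a preparatory step, I fix an operator basis that preserves Hermiticity under truncation, such as generalized Pauli strings or a basis closed under Hermitian conjugation as in Def.~\ref{def:opbasis}; then the basis-truncation of a Hermitian operator is automatically Hermitian. With this convention, the forward direction will be a bookkeeping exercise, while the reverse will require a telescoping construction on the circle.

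For the forward direction, I start from a decomposition $H = \sum_X g_X$ in which each $g_X$ is strictly local and satisfies $g_X \ket{\psi_n} = \mu_{X,n} \ket{\psi_n}$ ($g_X$ Hermitian for type I, possibly non-Hermitian for type II). Truncating in the fixed basis, $H_\Lambda$ splits into terms $g_X$ with $X \subset \Lambda$, plus the basis-truncated pieces $g_X^{(\Lambda)}$ for $X$ crossing $\partial\Lambda$. The first group acts on $\ket{\psi_n}$ as the scalar $f_{n,\Lambda} := \sum_{X \subset \Lambda}\mu_{X,n}$, and the second group defines $T_{\partial\Lambda}$, supported within the maximum range $R$ of $\partial\Lambda$. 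Hermiticity of the $g_X$ in type I transfers to $T_{\partial\Lambda}$ via the Hermiticity-preserving truncation. Once $|\Lambda| > R$, no single $g_X$ can cross both boundaries, which yields the required independence of the left and right components.

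For the reverse direction, I work on a PBC chain of length $N$ and invert the forward construction via a telescoping. Fixing a large $b$, I define the sliding-truncation increment $\Delta_j := H_{[j,b]} - H_{[j+1,b]}$. In the chosen basis this equals the sum of basis contributions to $H$ whose leftmost site is $j$ (for $b - j > R$); it is therefore Hermitian and strictly localized near $j$. Combining the boundary-action equation at $[j,b]$ and $[j+1,b]$, and using the assumed independence of $T_L^{(j)}$ from the right-boundary position, I obtain $\Delta_j \ket{\psi_n} = (\delta T_L^{(j)} + \delta f_{n,j}) \ket{\psi_n}$ with $\delta T_L^{(j)} := T_L^{(j)} - T_L^{(j+1)}$ supported near $\{j, j+1\}$. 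Setting $\hat h_j := \Delta_j - \delta T_L^{(j)}$ produces a strictly local operator obeying $\hat h_j \ket{\psi_n} = \delta f_{n,j} \ket{\psi_n}$ for all $n$. Summing over $j = 1, \dots, N$ gives $\sum_j \Delta_j = H$ (each basis element counted once at its unique leftmost site, for $R \ll N$) and $\sum_j \delta T_L^{(j)} = T_L^{(1)} - T_L^{(N+1)} = 0$ by PBC. Hence $H = \sum_j \hat h_j$ is a decomposition into strictly local eigenstate operators, which are Hermitian precisely when the $T_L^{(j)}$ are.

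Combining both directions yields the claimed equivalences: type I corresponds to the existence of a Hermitian $T_{\partial\Lambda}$, and type II to the existence of some (necessarily non-Hermitian) $T_{\partial\Lambda}$ with no Hermitian choice available (the latter following by the contrapositive of the reverse type I direction, since a Hermitian $T_{\partial\Lambda}$ would force $H$ to be type I). Type III corresponds to the remaining case where no $T_{\partial\Lambda}$ satisfies Eq.~\eqref{eq:hermitianrelation} at all. The hardest part, I expect, will be isolating the precise way in which the stated left/right independence of $T_{\partial\Lambda}$ is exactly the structural property that makes the sliding telescope close consistently on the circle; extending the argument beyond PBC, or to higher dimensions where no clean telescoping is available, would likely require a more elaborate partition-of-unity style gluing of local pieces and is a natural obstacle to broader generalizations.
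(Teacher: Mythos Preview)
Your proposal is correct and essentially identical to the paper's proof: both fix a Hermiticity-preserving operator-string basis, obtain the forward direction by splitting $H_\Lambda$ into interior eigenstate-terms (giving $f_{n,\Lambda}$) plus truncated boundary-crossing pieces (giving $T_{\partial\Lambda}$), and obtain the reverse direction by taking the sliding difference $H_{[\ell,r]} - H_{[\ell+1,r]} = P_{[\ell]}$ and defining $h_{[\ell]} := P_{[\ell]} + A_{[\ell+1]} - A_{[\ell]}$, with the boundary contributions telescoping away. Your explicit closure of the telescope on the PBC circle and your emphasis on why left/right independence is exactly what is needed are stated a bit more carefully than in the paper, but the argument is the same.
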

The proof is given in App.~\ref{app:cuttingproof}.
We will illustrate applications of this theorem with examples from the $W$ state later in this section.
When we are interested in a single state rather than a set of states, we can w.l.o.g. set $f_{n, \Lambda} = 0$, e.g., by absorbing it into $T_{\partial \Lambda}$.
Notice that the support of $T_{\partial \Lambda}$ is strictly localized for type I and II, but it is in general not unique.
The theorem says that if we are able to choose it to be Hermitian, then we are guaranteed that the Hamiltonian is type I. 
On the contrary, the existence of a non-Hermitian boundary action does not by itself mean that the Hamiltonian is type II, and in addition we need to rule out the existence of a Hermitian boundary action.
We depict these differences between truncations of type I, II, and III parent Hamiltonians for a single state in Fig.~\ref{fig:diffhams}.
The following corollary then holds, where two parent Hamiltonians whose truncations are related by a Hermitian boundary term (as far as their action on the target states is concerned) are in the same equivalence class.
\begin{cor}\label{cor:typeII}
    Two extensively local operators $H^A$ and $H^B$ are in the same type II equivalence class if and only if there exist $\alpha,\beta\in \mathbb{R}/0$ such that for sufficiently large patches $\Lambda$
    \begin{align}
        \left(\alpha H^A_\Lambda-\beta H^B_\Lambda\right) \ket{\psi_n} =(T_{\partial\Lambda} + f_{n, \Lambda})\ket{\psi_n},\quad\forall n,
    \end{align}
    for some Hermitian $T_{\partial\Lambda}$ [strictly localized on the boundaries of $\Lambda$, where the left (right) boundary part is independent of the location of the right (left) boundary], and for some constants $f_{n, \Lambda}$.
\end{cor}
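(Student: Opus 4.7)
The plan is to apply Theorem~\ref{thm:Hermitiancut} directly to the linear combination $\alpha H^A - \beta H^B$, using the fact that truncation to a region $\Lambda$ is linear in the fixed operator-string basis, so that $(\alpha H^A - \beta H^B)_\Lambda = \alpha H^A_\Lambda - \beta H^B_\Lambda$. Recall that by the definition of equivalence classes in Sec.~\ref{subsec:classes}, two type II operators lie in the same equivalence class iff there exist $\alpha,\beta\in\mathbb{R}\setminus\{0\}$ such that $\alpha H^A - \beta H^B$ is a type I operator (modulo the implicit grouping technicalities that are handled by fixing a bounded range $R_{\max}$). This reduces the corollary to a direct application of Theorem~\ref{thm:Hermitiancut} to a single Hamiltonian, namely $\alpha H^A - \beta H^B$.

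For the forward direction, I would assume that $H^A$ and $H^B$ are in the same type II equivalence class, which by definition yields $\alpha, \beta \neq 0$ such that $\alpha H^A - \beta H^B$ is type I. Since $H^A$ and $H^B$ are extensive local with bounded-range terms, so is $\alpha H^A - \beta H^B$, and Theorem~\ref{thm:Hermitiancut} then guarantees the existence of a Hermitian boundary operator $T_{\partial \Lambda}$ (strictly localized on $\partial \Lambda$, with the left and right boundary parts independent of the location of the opposite boundary) and constants $f_{n,\Lambda}$ such that $(\alpha H^A_\Lambda - \beta H^B_\Lambda)\ket{\psi_n} = (T_{\partial \Lambda} + f_{n,\Lambda})\ket{\psi_n}$ for all $n$, as claimed.

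For the backward direction, I would start from the assumed boundary relation, rewrite it using linearity of truncation as $(\alpha H^A - \beta H^B)_\Lambda \ket{\psi_n} = (T_{\partial\Lambda} + f_{n,\Lambda})\ket{\psi_n}$, and invoke the Hermitian half of Theorem~\ref{thm:Hermitiancut} (``$H$ is type I iff $T_{\partial \Lambda}$ can be chosen Hermitian'') to conclude that $\alpha H^A - \beta H^B$ is type I. By the definition of type II equivalence, this means $H^A$ and $H^B$ lie in the same equivalence class.

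The one subtlety I expect to have to address carefully is the precise scope of the equivalence relation: in principle $\alpha H^A - \beta H^B$ could conceivably drop out of the type II class entirely (become type I, or if the arithmetic is pathological, even fail to be a valid extensive-local operator in the fixed-$R_{\max}$ sense used in Theorem~\ref{thm:Hermitiancut}). I would dispatch this by remarking that since $H^A$ and $H^B$ are both sums of strictly local terms bounded by $R_{\max}$, so is their linear combination, and that becoming type I is exactly what puts them in the same equivalence class rather than a problem. A second minor point is the role of $f_{n,\Lambda}$: since it is scalar-valued and can be absorbed into either $T_{\partial\Lambda}$ (as a Hermitian constant) or into the identity part of $H_\Lambda$, the statement of the corollary matches the form of Theorem~\ref{thm:Hermitiancut} without extra work, so the proof reduces essentially to one application of that theorem.
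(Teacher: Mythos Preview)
Your proposal is correct and matches the paper's own (one-line) proof, which simply states that the corollary ``follows directly from the definition of the truncation procedure and Thm.~\ref{thm:Hermitiancut}.'' You have spelled out exactly those two ingredients: linearity of truncation in the fixed basis so that $(\alpha H^A-\beta H^B)_\Lambda=\alpha H^A_\Lambda-\beta H^B_\Lambda$, and the type~I characterization from Theorem~\ref{thm:Hermitiancut} applied to the combination $\alpha H^A-\beta H^B$.
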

This statement follows directly from the definition of the equivalence classes and the truncation procedure and Thm.~\ref{thm:Hermitiancut}.
\begin{figure*}
    \centering
    \includegraphics[width=0.8\linewidth]{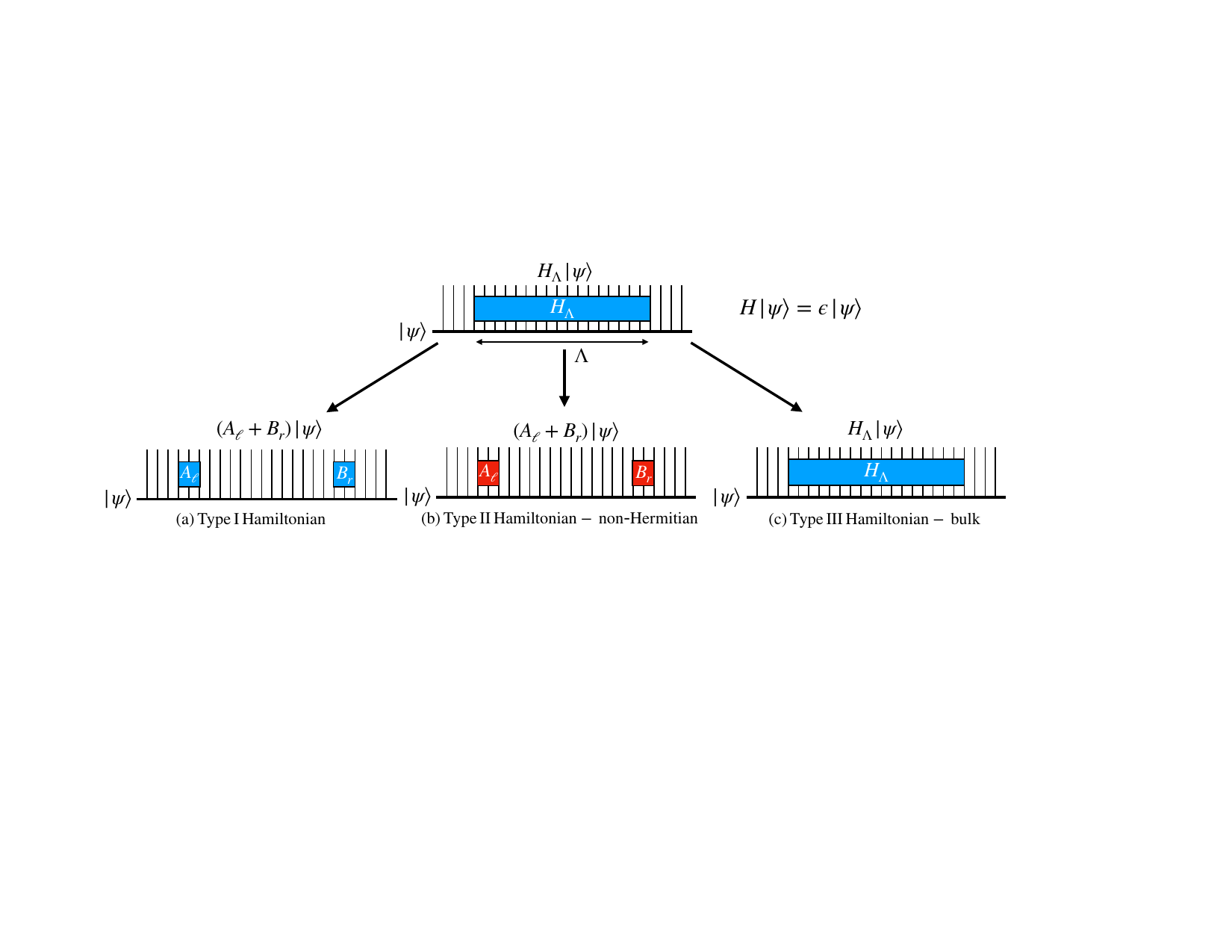}
    \caption{Here we depict three possibilities when we apply a truncated Hamiltonian $H_\Lambda$ (on a contiguous region $\Lambda$) to a state $\ket{\psi}$, with $H\ket{\psi}=\epsilon\ket{\psi}$. (a) Upon truncation, the $H_\Lambda\ket{\psi}$ can be written as boundary terms actions $(A_\ell+B_r)\ket{\psi}$, where $A_\ell$ and $B_r$ are both Hermitian.
    (b) $H_\Lambda\ket{\psi}$ can also result in an action on a state with boundary terms that are non-Hermitian $A_\ell$ and $B_r$, with no Hermitian choices existing. 
    We refer to these Hamiltonians as type II. (c) Type III Hamiltonians are all other cases such as when the truncation cannot be decomposed as any boundary term action. Note that although the depictions resemble a quantum circuit, the pieces are additive instead of multiplicative since we are dealing with a Hamiltonian, not a unitary.}
    \label{fig:diffhams}
\end{figure*}
In the following sections, we will demonstrate how we may use Thm.~\ref{thm:Hermitiancut} to show that the Hamiltonians are not type I, including $H_{\rm ImHop}$ and $\hat{N}_{\rm tot}$, as well as a novel type II Hamiltonian for the $W^2$ state.
\subsection{Examples}
\subsubsection{$H_{\rm ImHop}$ is type II for the $W$ state}
\label{sec:CuttingHImHop}
We will demonstrate the consequences of Thm.~\ref{thm:Hermitiancut} on our $\ket{W}$ state QMBS example, by using it as an alternative proof of Prop.~\ref{prop:HII} where we show that $H_{\rm ImHop}$ is a type II Hamiltonian.

\begin{proof}
A natural restriction of $H_{\rm ImHop}$ to a segment $\Lambda \defn [\ell, \dots, r]$ is
\begin{equation}
\HImHopLambda = \frac{i}{2} \sum_{j=\ell}^{r-1} \left( s_j^\dagger s_{j+1} -s_{j+1}^\dagger s_j \right) ~.
\end{equation}
For example, we obtain this restriction if we use Pauli strings as a basis in the operator space, since $i (s_j^\dagger s_{j+1} - s_{j+1}^\dagger s_j) = (\sigma_j^x \sigma_{j+1}^y - \sigma_j^y \sigma_{j+1}^x)/2$.
It is easy to verify that
\begin{equation}
\HImHopLambda \ktW = \frac{i}{2} (\hat{n}_\ell - \hat{n}_r) \ktW ~,
\label{eq:HImHop_Lambda_ktW}
\end{equation}
where we use short-hand $\hat{n}_j \defn s_j^\dagger s_j = \ketbra{1}{1}_j$.
That is, the action of $\HImHopLambda$ on $\ktW$ can be represented as a boundary action with {\it non-Hermitian} operators near the corresponding boundaries.
However, we will show that it cannot be represented using Hermitian boundary operators.
We will show by contradiction that there are no Hermitian operators $A_\ell$ and $B_r$ acting on finite regions near $\ell$ and $r$ respectively that can realize the boundary action of $\HImHopLambda$ on $\ktW$ in Eq.~(\ref{eq:HImHop_Lambda_ktW}):
\begin{equation}
\frac{i}{2} (\hat{n}_\ell - \hat{n}_r) \ktW = (A_\ell + B_r) \ktW ~.
\label{eq:HImHop_Lambda_ktW_AB}
\end{equation}
Here the ranges of $A_\ell$ and $B_r$ are expected to be bounded by a fixed number $R_{\text{max}}$, while the size of $\Lambda$ is allowed to be large.
Specifically, denoting the supports of $A_\ell$ and $B_r$ as $X_\ell$ and $X_r$ respectively, and assuming w.l.o.g.\ that $\ell \in X_\ell$ and $r \in X_r$, it suffices for $\Lambda$ to be large enough so that $X_\ell$ and $X_r$ are disjoint (e.g., it suffices to have $|r - \ell| \geq 2 R_{\text{max}}$).
Let us assume that such Hermitian $A_\ell$ and $B_r$ exist, and calculate the overlap of both sides of Eq.~(\ref{eq:HImHop_Lambda_ktW_AB}) with $\ktW_{X_\ell} \otimes \ktO_{X_\ell^c}$, where $X_\ell^c$ is the compliment of $X_\ell$.
We obtain\footnote{
A simple way to do this is to consider decomposition $\ktW = c_1 \ktW_{X_\ell} \otimes \ktO_{X_\ell^c} + c_2 \ktO_{X_\ell} \otimes \ktW_{X_\ell^c}$ with non-zero $c_1, c_2$, and use it to show that for any $\mathcal{O}_{X_\ell}$ supported on $X_\ell$ we have $\bra{W}_{X_\ell} \otimes \bra{\bar{0}}_{X_\ell^c} \mathcal{O}_{X_\ell} \ktW = c_1 \bra{W}_{X_\ell} O_{X_\ell} \ktW_{X_\ell}$, while for any $\widetilde{\mathcal{O}}_{X_r}$ supported on $X_r$, since $X_r \subset X_\ell^c$, we have $\bra{W}_{X_\ell} \otimes \bra{\bar{0}}_{X_\ell^c} \widetilde{O}_{X_r} \ktW = c_1 \bra{\bar{0}}_{X_\ell^c} \widetilde{O}_{X_r} \ket{\bar{0}}_{X_\ell^c}$.
Note that the chosen somewhat asymmetric treatment of the left and right boundary regions is not fundamental---it happens to be all that is needed to arrive at the desired contradiction quickly.
}
\begin{equation}
c_1 \frac{i}{2}\bra{W} \hat{n}_\ell \ktW_{X_\ell} = c_1 \bra{W}A_\ell \ktW_{X_\ell} + c_1 \bra{\bar{0}} B_r \ktO_{X_\ell^c} ~.
\end{equation}
with $c_1 = \sqrt{|X_\ell|/N}$, 
where we have used that $n_r \ktO_{X_\ell^c} = 0$ and have represented $\bra{\cdots}_X \bullet \ket{\cdots}_X$ as $\bra{\cdots} \bullet \ket{\cdots}_X$ for compactness.
Now, note that the R.H.S.\ is real due to the assumed Hermiticity of $A_\ell$ and $B_r$, whereas the L.H.S. is purely imaginary since  $\bra{W} \hat{n}_\ell \ktW_{X_\ell} \neq 0$, which is a contradiction.
Thus, the assumption that $\HImHopLambda$ on $\ktW$ has boundary action represented by Hermitian operators is not valid, and hence it is a type II Hamiltonian using Thm.~\ref{thm:Hermitiancut}.
\end{proof}
Interestingly, the above proof of the type II for $H_{\text{ImHop}}$ is purely algebraic, in the sense that it does not use any assumptions about the norms of the terms in the Hamiltonian---it only uses the fact that operator ranges of any presumed type I writing have a fixed bound.
This is unlike our earlier proof evaluating expectation values in the $\ket{W_q}$ state that in addition assumed that the local terms in the type I writing also have bounded norms.
While both arguments are easy to check directly, we do not know if/how the two are related, and the argument using the $\ket{W_q}$ state also proved useful when discussing asymptotic scar states.
\subsubsection{$\hat{N}_{\rm tot}$ is type III for the $W$ state}
\label{sec:NtottypeIII}
Similarly, $\hat{N}_{\rm tot}$ is a type III Hamiltonian, as it cannot even be written as a boundary term upon truncation.
While this has been proven in Cor.~\ref{cor:Ntotnotnonherm}, here we wish to show this via consideration of the boundary action.
\begin{proof}
    Let us prove this by contradiction.
    We take $\Lambda = [\ell, \dots, r]$ and assume that the action of a truncated $\hNtotLambda = \sum_{j=\ell}^r \hat{n}_j$ on the $W$ state can be written as
    \begin{equation}
        \hNtotLambda \ktW = (A_\ell + B_r + f)\ktW \quad,
    \label{eq:Ntotnotboundary}
    \end{equation}
    with $A_\ell$ and $B_r$ being boundary terms (potentially non-Hermitian) that only have support near the left or right boundary, say bounded regions $X_\ell$ and $X_r$, respectively, and we have also allowed general $f\in\mathbb{C}$ for convenience.

    Note that by shifting $A_\ell$ and $B_r$ by appropriate multiples of identity (which does not affect ranges) and redefining $f$, we can w.l.o.g.\ assume $\bra{\bar{0}}_{X_\ell} A_\ell \ket{\bar{0}}_{X_\ell} = \bra{\bar{0}}_{X_r} B_r \ket{\bar{0}}_{X_r} = 0$.
    This means that applying $\bra{\bar{0}}_{\Lambda} \otimes \bra{W}_{\Lambda^c}$ to  both sides of Eq.~(\ref{eq:Ntotnotboundary}) results in $f = 0$, since $\hNtotLambda \ket{\bar{0}}_{\Lambda} = 0$.
    Now assume $X_\ell \cap X_r = \emptyset$ and write $\Lambda = X_\ell \cup X_m \cup X_r$ with $X_m \neq \emptyset$.
    Now applying $\bra{W}_{X_m} \otimes \bra{\bar{0}}_{X_m^c}$ to both sides of Eq.~(\ref{eq:Ntotnotboundary}) and using $\bra{W}_{X_m} \otimes \bra{\bar{0}}_{X_m^c} \hNtotLambda =  \bra{W}_{X_m} \otimes \bra{\bar{0}}_{X_m^c}$, we obtain $f=1$.
    This results in a clear contradiction with $f=0$, which means that a truncated $\hat{N}_{\rm tot}$ cannot be written as any boundary terms, Hermitian or otherwise.
    Per Thm.~\ref{thm:Hermitiancut}, $\hat{N}_{\rm tot}$ is a type III Hamiltonian.
    \end{proof}
\subsubsection{Additional Type II Hamiltonian for the $W^2$ state}
Now we present a new example that shows the power of Thm.~\ref{thm:Hermitiancut} beyond our original scar system. Consider Hamiltonians with $\ket{W^2}$ as an exact eigenstate.
In fact, such a consideration is quite natural as one might be interested in higher particle number states such as $\ket{W^2}$, which already appear as asymptotic scars in the context of $\ket{W}$.
\begin{prop}\label{prop:HImhop2}
    Consider the set of Hamiltonians with the states
    $\{\ket{\bar{0}},\ket{W},\ket{W^2}\}$ as exact eigenstates.
    We know that $H_{\rm ImHop}$ remains a type II Hamiltonian since $\ket{W^2}$ is its eigenstate with eigenvalue zero.
    However, we find that for this enlarged set of states there is at least one other equivalence class of type II Hamiltonians for which $H_{\rm ImHop}^{(2)}$ is a representative member, where
    \begin{align}
        H_{\rm ImHop}^{(2)} :=& \frac{i}{2}\sum_j (s_j^\dag s^\dag_{j+1} s_{j+1}s_{j+2} - s_{j+2}^\dag s_{j+1}^\dag s_{j+1} s_j)\, \nonumber\\
        =& \frac{i}{2}\sum_j \left( \ketbra{110}{011} - \ketbra{011}{110} \right)_{j,j+1,j+2}
        \label{eq:HImHop2}
    \end{align}
    \end{prop}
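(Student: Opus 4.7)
The plan is to proceed in three stages: (i) verify the eigenstate property and exhibit a strictly local non-Hermitian rewriting of $H_{\rm ImHop}^{(2)}$ to rule out type III; (ii) invoke Thm.~\ref{thm:Hermitiancut} to compute the boundary actions of $H_{\rm ImHop,\Lambda}^{(2)}$ and $\HImHopLambda$ on $\ket{\bar 0},\ket{W},\ket{W^2}$; (iii) use two Schmidt-decomposition overlap arguments to show that no nontrivial real combination $\alpha H_{\rm ImHop}^{(2)} + \beta H_{\rm ImHop}$ can be type I, so by Cor.~\ref{cor:typeII} $H_{\rm ImHop}^{(2)}$ represents a type II equivalence class distinct from that of $H_{\rm ImHop}$.

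For (i) I would first check the eigenstate property: $\ket{\bar 0}$ and $\ket{W}$ are annihilated trivially because each term contains a double annihilation such as $s_{j+1}s_{j+2}$ that kills any state with fewer than two particles, while on $\ket{W^2}$ the two terms interchange adjacent-pair configurations $\ket{j{+}1,j{+}2}\leftrightarrow\ket{j,j{+}1}$, and after a relabeling of the summation index the two contributions cancel on a PBC chain. I would then exhibit the rewriting
\begin{equation}
H_{\rm ImHop}^{(2)} = \frac{i}{2}\sum_j\left[(A_j - n_j n_{j+1}) - (B_j - n_{j+1}n_{j+2})\right],
\end{equation}
where $A_j := s_j^\dag s_{j+1}^\dag s_{j+1}s_{j+2}$ and $B_j := s_{j+2}^\dag s_{j+1}^\dag s_{j+1}s_j$. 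Each bracketed piece is strictly local, non-Hermitian, and annihilates all three target states, since $n_j n_{j+1}$ acts on $\ket{W^2}$ identically to $A_j$ (both pick out and produce $\tfrac{1}{\sqrt{\binom{N}{2}}}\ket{j,j+1}$), and similarly for $B_j$ and $n_{j+1}n_{j+2}$; the leftover $\frac{i}{2}\sum_j(n_j n_{j+1} - n_{j+1}n_{j+2})$ telescopes to zero on PBC. This establishes that $H_{\rm ImHop}^{(2)}$ is not type III.

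For (ii) I would telescope the $A_j,B_j$ pair-hopping actions on pair states to obtain
\begin{equation}
H_{\rm ImHop,\Lambda}^{(2)}\ket{W^2} = \frac{i}{2}(n_\ell n_{\ell+1} - n_{r-1}n_r)\ket{W^2},
\end{equation}
while the same single-particle-hopping telescoping argument that produced Eq.~(\ref{eq:HImHop_Lambda_ktW}) extends to the two-particle $\ket{W^2}$ to give $\HImHopLambda\ket{W^2} = \frac{i}{2}(\hat n_\ell - \hat n_r)\ket{W^2}$; also $H_{\rm ImHop,\Lambda}^{(2)}\ket{\bar 0} = H_{\rm ImHop,\Lambda}^{(2)}\ket{W} = 0$, while $\HImHopLambda\ket{W}$ is given by Eq.~(\ref{eq:HImHop_Lambda_ktW}). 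Suppose, for contradiction, that for some real $(\alpha,\beta)\neq(0,0)$ the combination $\alpha H_{\rm ImHop,\Lambda}^{(2)} + \beta\HImHopLambda$ admits a Hermitian boundary decomposition $A_\ell+B_r+f_{n,\Lambda}$ with $X_\ell\cap X_r=\emptyset$ as in Eq.~(\ref{eq:hermitianrelation}) on all three target states. After shifting $A_\ell,B_r$ by real multiples of identity (absorbed into $f_{n,\Lambda}$) I may WLOG take $A_\ell\ket{\bar 0}_{X_\ell} = B_r\ket{\bar 0}_{X_r} = 0$, which makes the $\ket{\bar 0}$ boundary equation automatic.

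The crux is two overlap computations. Pairing the $\ket{W}$ boundary equation with $\bra{W}_{X_\ell}\otimes\bra{\bar 0}_{X_\ell^c}$ and Schmidt-decomposing $\ket{W}$ as in Eq.~(\ref{eq:WSchmidt}), the LHS reduces to $\sqrt{|X_\ell|/N}\bigl(\bra{W}_{X_\ell}A_\ell\ket{W}_{X_\ell} + f_W\bigr)$, which is real by Hermiticity, whereas the RHS equals $\tfrac{i\beta}{2\sqrt{N|X_\ell|}}$, purely imaginary; equality forces $\beta=0$. With $\beta=0$, pairing the $\ket{W^2}$ boundary equation with $\bra{W^2}_{X_\ell}\otimes\bra{\bar 0}_{X_\ell^c}$ and using the Schmidt decomposition $\ket{W^2} = \alpha'\ket{\bar 0}_{X_\ell}\otimes\ket{W^2}_{X_\ell^c} + \beta'\ket{W}_{X_\ell}\otimes\ket{W}_{X_\ell^c} + \gamma'\ket{W^2}_{X_\ell}\otimes\ket{\bar 0}_{X_\ell^c}$ with $\gamma' = \sqrt{\binom{|X_\ell|}{2}/\binom{N}{2}}$, the LHS collapses to $\gamma'\bigl(\bra{W^2}_{X_\ell}A_\ell\ket{W^2}_{X_\ell} + f_{W^2}\bigr)$, real, while the RHS equals $\tfrac{i\alpha}{2\sqrt{\binom{N}{2}\binom{|X_\ell|}{2}}}$, imaginary, forcing $\alpha=0$ and contradicting $(\alpha,\beta)\neq(0,0)$. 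The main technical obstacle is verifying the vanishing of the $B_r$ contribution to the $\ket{W^2}$ overlap, which follows from $\bra{W^2}_{X_\ell}\ket{W^2} = \gamma'\ket{\bar 0}_{X_\ell^c}$ and collapses the $B_r$ piece to $\gamma'\bra{\bar 0}_{X_r}B_r\ket{\bar 0}_{X_r} = 0$ under the WLOG shift; the argument further requires $|X_\ell|\geq 2$ so that $\gamma'\neq 0$ and $\ket{W^2}_{X_\ell}$ is well defined, which is easily arranged for $\Lambda$ large compared to $R_{\rm max}$.
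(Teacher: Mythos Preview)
Your proof is correct. The type II-ness part (the $\ket{W^2}$ overlap forcing $\alpha=0$ when $\beta=0$) is exactly the paper's argument. The difference lies in how you separate the two equivalence classes. The paper uses an energetics argument: $\bra{W_q}H_{\rm ImHop}\ket{W_q}=\sin q$ is linear in $q$, whereas $H_{\rm ImHop}^{(2)}\ket{W_q}=0$ and any type I Hamiltonian of bounded-norm terms satisfies $\bra{W_q}H_{\rm I}\ket{W_q}=O(q^2)$; hence no type I perturbation of $H_{\rm ImHop}^{(2)}$ can reproduce $H_{\rm ImHop}$. Your route instead runs the boundary-action contradiction simultaneously for the full pencil $\alpha H_{\rm ImHop}^{(2)}+\beta H_{\rm ImHop}$, using the $\ket{W}$ overlap to kill $\beta$ first and then the $\ket{W^2}$ overlap to kill $\alpha$.

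Your argument is purely algebraic---it uses only finite-range assumptions, not norm bounds on the putative type I terms---whereas the paper's equivalence-class step invokes the $O(q^2)$ bound from Eq.~(\ref{eq:limitHIWq}), which requires bounded-norm local annihilators. The paper itself flags this distinction as desirable after the analogous proof for $H_{\rm ImHop}$ in Sec.~\ref{sec:CuttingHImHop}, so your unified treatment is a small improvement in that respect. Conversely, the paper's $\ket{W_q}$ argument is more physically transparent and immediately generalizes to show that $H_{\rm ImHop}^{(p)}$ for different $p$ are pairwise inequivalent (since they act differently on single-particle states), whereas extending your overlap scheme to separate, say, $H_{\rm ImHop}^{(2)}$ from $H_{\rm ImHop}^{(3)}$ would require working with $\ket{W^3}$. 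One minor point: your ``WLOG $A_\ell\ket{\bar 0}_{X_\ell}=0$'' step needs the observation that the $\ket{\bar 0}$ boundary equation, together with disjoint supports, first forces $\ket{\bar 0}_{X_\ell}$ to be an \emph{eigenstate} of $A_\ell$; only then does a scalar shift suffice.
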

The proof is given in App.~\ref{app:HImhop2}, where we show that $H_{\rm ImHop}^{(2)}$ is also a type II Hamiltonian by employing the truncation technique used in Sec.~\ref{sec:CuttingHImHop}.
The action of the truncation of $H_{\rm ImHop}^{(2)}$ to a region $\Lambda = [\ell, \dots, r]$ can be written as
\begin{equation}
\HImHopLambda^{(2)} \ket{W^2} = \frac{i}{2} (\hat{n}_\ell \hat{n}_{\ell+1} - \hat{n}_{r-1}\hat{n}_r) \ket{W^2} ~,
\label{eq:HImHop2_Lambda_W2}
\end{equation}
where the boundary term $i (\hat{n}_\ell \hat{n}_{\ell+1} - \hat{n}_{r-1}\hat{n}_r)/2$ is non-Hermitian, for which one can show that there is no Hermitian way of writing this boundary action.
Then, we can also use energetics (but non-algebraic) arguments of the variational state $\ket{W_q}$ to show that $H_{\rm ImHop}$ and $H_{\rm ImHop}^{(2)}$ are in different type II equivalence classes.\footnote{Note that an energetics argument using a twisted $\ket{W^2}$ state, $\ket{W^2_q}:=e^{i q \sum_j j\hat{n}_j}\ket{W^2}$, in analogy to the proof originally given for the type II-ness of $H_{\rm ImHop}$ in App.~\ref{app:HImtypeII}, does not yield a proof of type II-ness of $H_{\rm ImHop}^{(2)}$, since any type I Hamiltonian $H_I$ possesses the energy expectation value
$\bra{W^2_q}H_I\ket{W^2_q}=O(q^2)$, which scales as $O(N^{-2})$, as does $H_{\rm ImHop}^{(2)}$, which yields $\bra{W^2_q}H_{\rm ImHop}^{(2)}\ket{W^2_q}=O(\frac{q}{N})$, and also scales as $O(N^{-2})$.}
%

%
\section{Short Range Entangled Eigenstates/QMBS}
\label{sec:SREQMBS}
We now obtain results on parent Hamiltonians for Short-Range Entangled (SRE) states, which include product states, other states with zero correlation length, and injective Matrix Product States.
\subsection{Zero correlation length states}
An interesting consequence of the proof technique for Thm.~\ref{thm:HW}, i.e., the use of the specific operator-string normal-ordered creation and annihilation basis, is that if target eigenstates consist purely of $\ket{\bar{0}}$ (or, in fact, any product state), then there are no type II or III Hamiltonians.
\begin{prop}
\label{prop:0noHIIHIII}
    The eigenstate set of just $\ket{\psi}_{\rm prod}$, where $\ket{\psi}_{\rm prod}$ is a product state in any number of dimensions, has no type II or III parent Hamiltonians.
\end{prop}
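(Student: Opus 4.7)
The plan is to reduce to the case $\ket{\psi}_{\rm prod} = \ket{\bar{0}}$ via a product of on-site unitaries, and then mimic the basis-expansion argument of the proof of Thm.~\ref{thm:HW}. Writing the product state as $\bigotimes_j \ket{\phi_j}$ and choosing on-site unitaries $U_j$ with $U_j \ket{\phi_j} = \ket{0}_j$, the product unitary $U := \bigotimes_j U_j$ satisfies $U \ket{\psi}_{\rm prod} = \ket{\bar{0}}$. Conjugation $H \mapsto U H U^\dagger$ preserves both Hermiticity and the support of every strictly local piece in any decomposition of $H$, so it preserves the type classification of Sec.~\ref{subsec:classes}. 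Hence it suffices to prove the statement assuming $\ket{\psi}_{\rm prod} = \ket{\bar{0}}$.

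Next, I would expand $H$ in the normal-ordered operator-string basis of Def.~\ref{def:opbasis} (and its straightforward generalization to qudits on arbitrary lattices), writing $H = \sum_L c_L L$. Each non-identity basis element $L$ factorizes over sites: if any on-site factor is $(n_j^+, n_j^-) \in \{(0,1),(1,1)\}$, then $L \ket{\bar{0}} = 0$; otherwise $L$ is a pure creation string and $L\ket{\bar{0}}$ is the product basis state indexed by the sites where $n_j^+ = 1$. Because distinct pure creation strings send $\ket{\bar{0}}$ to distinct (hence linearly independent) basis states, the eigenvalue condition $H \ket{\bar{0}} = E \ket{\bar{0}}$ forces $c_L = 0$ for every non-identity pure creation $L$, with the identity coefficient fixed to $E$. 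Hermiticity $c_{L^\dagger} = \overline{c_L}$ then extends this to pure annihilation strings, so every non-identity basis element surviving in $H$ satisfies both $L \ket{\bar{0}} = 0$ and $L^\dagger \ket{\bar{0}} = 0$.

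The final step is to reorganize the remaining expansion into manifestly Hermitian local pieces. Writing $c_L = a + i b$ with $a, b \in \mathbb{R}$,
\begin{equation}
 c_L L + \overline{c_L}\, L^\dagger = a\,(L + L^\dagger) + b \cdot i\,(L - L^\dagger) ,
\end{equation}
where $(L + L^\dagger)$ and $i(L - L^\dagger)$ are Hermitian, of range no larger than that of $L$, and both annihilate $\ket{\bar{0}}$. The identity contribution $E \mathds{1}$ can be distributed as a sum of strictly local scalar terms (each trivially having $\ket{\bar{0}}$ as an eigenstate), or absorbed into any one of the surviving Hermitian annihilators by a constant shift. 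Together these produce $H$ as a sum of Hermitian strictly local terms each having $\ket{\psi}_{\rm prod}$ as an eigenstate, i.e., a type I decomposition in the sense of Sec.~\ref{subsec:classes}, ruling out types II and III.

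I expect the only real obstacle to be bookkeeping: confirming that expanding each original range-$R$ term $h_{[j]}$ in the chosen normal-ordered basis only produces basis elements supported within the support of $h_{[j]}$ (immediate from the locality of the basis expansion), and explicitly noting that the argument is insensitive to spatial dimension or to the local Hilbert-space dimension $d$, since nothing beyond the existence of a product-state-adapted normal-ordered operator basis---available on any tensor product Hilbert space---is used.
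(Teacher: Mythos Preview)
Your proposal is correct and is essentially the same argument as the paper's: expand in the normal-ordered string basis adapted to the product state, kill pure-creation strings by linear independence of their images on $\ket{\bar{0}}$, kill pure-annihilation strings by Hermiticity, and pair each surviving $L$ with $L^\dagger$ into a local Hermitian annihilator. Your reduction from a general product state via the on-site unitary $U=\bigotimes_j U_j$ is exactly the paper's ``local change of basis'' stated a bit more explicitly, and your remark that nothing depends on dimension or on $d$ matches the paper's footnote giving the qudit generalization.
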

\begin{proof}
We first show this for the $\ket{\bar{0}}$ state\footnote{Exhaustive bond algebra for this QMBS can be generated by nearest-neighbor terms, see Sec.~IIIC1 in~\cite{moudgalya2024symmetries}.}.
Recall the operator expansion of $G$ in Eq.~\eqref{eq:Gexpand} used for the proof of Thm.~\ref{thm:HW}, which we will modify for our proof.
In the case of $\ket{\bar{0}}$, $G$ cannot have $n \geq 1, m = 0$ terms in its expansion, i.e., $c_{j_1,...,j_n}s_{j_1}^\dagger \dots s_{j_n}^\dagger$ terms.
This follows since these terms applied to $\ket{\bar{0}}$ create orthogonal states that must disappear when all terms in $G$ are summed---however, these are fully linearly-independent, which forces $c_{j_1,...,j_n}=0$.
If $G$ is Hermitian, then it also cannot have $n = 0, m \geq 1$ terms in its expansion (otherwise it would have their hermitian conjugates that have already been excluded).
Putting these observations together, we can just group each $n \geq 1, m \geq 1$ operator-string basis term and its hermitian conjugate into $h_X$. As a result, $G$ can be written as a sum of local Hermitian operators that annihilate $\ket{\bar{0}}$, which are of the same range as the needed basis terms.
Any other product state simply requires a local change of basis and/or an increase of the local Hilbert space dimension to analogously create a new normal-ordered creation and annihilation operator-string basis\footnote{
More specifically, for a $d$-dimensional on-site Hilbert space spanned by $\{ \ket{a}, a=0,1,\dots,d-1 \}$, we take on-site operator basis $\{\mathds{1}, \{\ketbra{a}{a}, a \neq 0\}, \{\ketbra{a}{b}, a \neq b\}\}$, which is a generalization of Eqs.~(\ref{eq:opbasis1})-(\ref{eq:opbasis4}) that works for qudits.} for which the same arguments apply.
\end{proof}
More generally, these arguments can be extended to show that any state related to a product state via a strict quantum cellular-automaton (QCA) (i.e., with strictly no exponential tails), has no type II or III parent Hamiltonians.
Strict QCAs are unitary operations on the full Hilbert space that map strictly local (finite-range) operators to strictly local operators within a uniformly bounded neighbourhood of the original operator~\cite{SchumacherWerner2004,FreedmanHastings2020,HaahFidkowskiHastings2022,Farrelly2020Review}.
Given these QCAs, we can establish the following statement.
\begin{prop}
\label{prop:QCA}
    The eigenstate set of just $\ket{\psi}=U_{\rm QCA}\ket{\psi}_{\rm prod}$, where $U_{\rm QCA}$ is a QCA, and $\ket{\psi}_{\rm prod}$ is a product state, has no type II or III parent Hamiltonians.
\end{prop}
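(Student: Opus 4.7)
The plan is to reduce Proposition~\ref{prop:QCA} to the already-established Proposition~\ref{prop:0noHIIHIII} by using unitary conjugation by $U_{\rm QCA}$. The defining property of a strict QCA is that conjugation by $U_{\rm QCA}$ (and by $U_{\rm QCA}^\dagger$) sends any strictly local operator of range $R$ to a strictly local operator of range at most $R+2a$, where $a$ is the fixed spreading radius; hence conjugation is a Hermiticity-preserving automorphism of the algebra of extensive local operators that expands each term's support by at most a system-size-independent amount. This immediately means that ``extensive local parent Hamiltonian of $\ket{\psi}$'' and ``extensive local parent Hamiltonian of $\ket{\psi}_{\rm prod}$'' are in bijection via $H \leftrightarrow U_{\rm QCA}^\dagger H U_{\rm QCA}$.

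Concretely, I would take any extensive local Hermitian $H = \sum_j h_{[j]}$ with $H\ket{\psi}=E\ket{\psi}$ and consider
\begin{equation}
    H' \defn U_{\rm QCA}^\dagger H U_{\rm QCA} = \sum_j U_{\rm QCA}^\dagger h_{[j]} U_{\rm QCA} ~,
\end{equation}
which by the QCA property is an extensive local Hermitian operator with $H'\ket{\psi}_{\rm prod}=E\ket{\psi}_{\rm prod}$. Applying Proposition~\ref{prop:0noHIIHIII} to $H'$ then yields a type I decomposition $H' = \sum_k \tilde{h}_k'$ with each $\tilde{h}_k'$ strictly local, Hermitian, and having $\ket{\psi}_{\rm prod}$ as an eigenstate. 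Conjugating back gives
\begin{equation}
    H = \sum_k U_{\rm QCA}\, \tilde{h}_k'\, U_{\rm QCA}^\dagger ~,
\end{equation}
and each summand is (i) strictly local (range expanded by at most $2a$), (ii) Hermitian, and (iii) has $\ket{\psi}=U_{\rm QCA}\ket{\psi}_{\rm prod}$ as an eigenstate because $\tilde{h}_k'\ket{\psi}_{\rm prod}=\lambda_k\ket{\psi}_{\rm prod}$ implies $(U_{\rm QCA}\tilde{h}_k' U_{\rm QCA}^\dagger)\ket{\psi}=\lambda_k\ket{\psi}$. This exhibits $H$ as type I, so no type II or type III parent Hamiltonians exist for $\ket{\psi}$.

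The only real obstacle is range bookkeeping: one must verify that the two successive QCA conjugations do not push ranges past whatever fixed $R_{\max}$ cutoff is implicit in the Type I/II/III classification. This is harmless because the spreading radius $a$ is a fixed, system-size independent constant, so if the original $h_{[j]}$ have range at most $R$, the Hermitian annihilators appearing in the final type I decomposition of $H$ have range at most $R+4a$, which is still $O(1)$. A secondary point to be careful about is that Proposition~\ref{prop:0noHIIHIII} is stated for a general product state, so the reduction works uniformly regardless of which product state $\ket{\psi}_{\rm prod}$ one starts from; no change-of-basis subtleties arise beyond those already handled in that proposition.
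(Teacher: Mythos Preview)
Your proposal is correct and is essentially the same strategy as the paper's proof: both reduce to Proposition~\ref{prop:0noHIIHIII} via unitary conjugation by the QCA, using that conjugation preserves Hermiticity, strict locality (up to a bounded spread), and the eigenstate property. The only cosmetic difference is that the paper phrases the reduction by transporting the normal-ordered operator basis forward ($s_j \mapsto \tilde{s}_j = U_{\rm QCA} s_j U_{\rm QCA}^\dagger$) and observing that the argument of Proposition~\ref{prop:0noHIIHIII} runs verbatim in the $\tilde{s}$-basis for $\ket{\psi}$, whereas you transport the Hamiltonian backward to the product state, apply Proposition~\ref{prop:0noHIIHIII} there, and conjugate the resulting type~I decomposition forward; your range bookkeeping ($R \to R+4a$) is correct and makes this step more explicit than the paper does.
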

Essentially, the set of parent Hamiltonians for $U_{\rm QCA}\ket{\psi}_{\rm prod}$ can be related to the parent Hamiltonians $H_{\rm prod}$ of a single product state $\ket{\psi}_{\rm prod}$ as $U_{\rm QCA} H_{\rm prod} U_{\rm QCA}^\dagger$, which is a set of local Hamiltonians by definition, and it is easy to see that this relation preserves the Hamiltonian types.
A more formal proof is as follows.
\begin{proof}
    Let us define a QCA $\varphi$ as
    \begin{equation}
        \varphi(O_X) =  \widetilde{O}_{\widetilde{X}}:=U_{\rm QCA} O_X U_{\rm QCA}^\dag ~,
    \end{equation}
    such that an operator $O_X$ supported on contiguous sites in $X$ with bounded $|X|$ maps to another operator $\widetilde{O}_{\widetilde{X}}$ with bounded $|\widetilde{X}|=|X|+S\geq |X|$ where $S$ is the spread of the QCA and system-size independent. 
    Since it is just a unitary transformation, QCA $\varphi$ is an invertible map that also obeys the following properties
    \begin{align}
     &\varphi(O_X O_Y)=\varphi(O_X)\varphi(O_Y) \,, \nonumber \\
     &  \varphi(O_X +O_Y) = \varphi(O_X)+\varphi(O_Y) \,, \nonumber\\
&\varphi(O_X^\dag) = [\varphi(O_X)]^\dag \,, \quad \varphi(z\mathds{1}) = z\mathds{1}\,,
    \end{align}
    where $z\in\mathbb{C}$.
    Hence,  given any local operator $O_{X} = \sum c_{j_1, \dots, j_n}^{k_1, \dots, k_m} s_{j_1}^\dagger \dots s_{j_n}^\dagger s_{k_1} \dots s_{k_m}$,  we have that
   \begin{align}
        \widetilde{O}_{\widetilde{X}}=\sum c_{j_1, \dots, j_n}^{k_1, \dots, k_m} \, \tilde{s}_{j_1}^\dagger \dots \tilde{s}_{j_n}^\dagger \tilde{s}_{k_1} \dots \tilde{s}_{k_m}\quad,
    \end{align}
    where $\tilde{s}_{j}:=\varphi(s_{j})$. 
    W.l.o.g., we set $\ket{\psi}_\text{prod}=\ket{\bar{0}}$. Notice that these new $\tilde{s}_{j}$ and $\tilde{s}_{j}^\dag$ operators act in the same way on the state $\ket{\psi}:=U_{\rm QCA}\ket{\bar{0}}$ as $s_j$ and $s^\dagger_j$ on $\ket{\bar{0}}$,  i.e.,  $\tilde{s}_{j} \ket{\psi}=0$,   $\bra{\psi}\tilde{s}^\dag_{j} \ket{\psi}=0$,  $\bra{\psi}\tilde{s}_{k}\tilde{s}^\dag_{j} \ket{\psi}=\delta_{jk}$. 
    Hence the argument for the zero state in Prop.~\ref{prop:0noHIIHIII} can be used here to show that there are no type II or type III Hamiltonians.
\end{proof}
This statement is valid in all dimensions, and the set of states expressible as $U_{\rm QCA}\ket{\psi}_{\rm prod}$ includes many zero correlation length SRE states as well as some zero correlation length long-range entangled (LRE) states such as invertible topological orders, including the Kitaev chain.
The classification of QCAs is well-known in one dimension, and any QCA in one dimension can be expressed as $U_{\rm QCA} = U_{\rm FDQC} T^n$ for bosonic systems~\cite{GrossNesmeVogtsWerner2012}, where $U_{\rm FDQC}$ is a finite-depth quantum circuits (again with no exponential tails), and $T^n$ is a translation operator by $n$ sites.
Hence in one spatial dimension, this shows that any state that can be created by a strict FDQC from a product state cannot have type II or type III Hamiltonians. In one dimensional fermionic systems, QCAs may also involve half-translations~\cite{Po2017RadicalChiralFloquet,PoFidkowskiMorimotoPotterVishwanath2016PRX} such as in the creation of the Kitaev chain ground state from a product state. If follows that the Kitaev chain also has no type II or III parent Hamiltonians.
\subsection{On-site symmetry generators as parent Hamiltonians for Matrix Product States}
We now discuss some general observations about parent Hamiltonian types for short-range entangled states.
While we are not able to derive the full class of type II Hamiltonians for a general MPS $\ket{\psi}$~\footnote{
As reviewed in Sec.~\ref{sec:QMBSgeneral}, exhaustive algebra descriptions in~\cite{moudgalya2023exhaustive} applicable, e.g., to injective MPS, are lacking in such finer locality understanding.}, here we instead focus on MPS that are invariant under a global symmetry of the form $U^\theta = e^{i \theta \sum_j L_j}$, where $L_j$ are on-site symmetry generators.
We then explore the ``type" of the symmetry generator $\sum_j{L_j}$, which, by definition is also a parent Hamiltonian for $\ket{\psi}$. 
In particular we show that for SRE MPS states this cannot be type III, and we obtain conditions for when it cannot be type I.
It is likely that many of the techniques can also be generalized to Hamiltonians composed of commuting terms, but we leave a detailed exploration of such cases to future work.  
We consider a short-range entangled translation-invariant state $\ket{\psi}$ with an MPS form of
\begin{align}
    \ket{\psi}:=\sum_{\{s_j\}}\mathrm{Tr}\left[A^{s_1} \cdots A^{s_N}\right]\ket{s_1 \cdots s_N}\quad,
    \label{eq:MPSform}
\end{align}
where $A^{s}$ are $D\times D$ matrices and $\{s_j\}$ represents the local physical $d$-dimensional Hilbert space degrees of freedom.
Since $\ket{\psi}$ is SRE, the MPS is \textit{injective}~\cite{Perez-Garcia2007} under blocking a finite number of sites ${R_{\rm inj}}$, which says that natural map from the auxiliary Hilbert space $\mathcal{H}^{\otimes 2}_D$ to the physical Hilbert space on $R_{\rm inj}$ sites $\mathcal{H}^{\otimes R_{\rm inj}}_{d}$, given by blocking ${R_{\rm inj}}$ MPS tensors ($A^{s_1} \cdots A^{s_{R_{\rm inj}}}$) is injective, i.e., $\textrm{span}_{\{s_j\}}\{A^{s_1} \cdots A^{s_{R_{\rm inj}}}\}$ is the space of all $D \times D$ matrices.
$R_{\text{inj}}$ is sometimes referred to as the \textit{injectivity length}.
Given this injectivity condition, it is known that if the MPS possess a global on-site unitary symmetry $U^\theta = \otimes_j U^\theta_j$ that leaves the state invariant (i.e., if $U^\theta\ket{\psi}=e^{i\theta \alpha}\ket{\psi}$), then the individual MPS tensors obey the following ``push-through" conditions\footnote{Note that there is usually also a phase factor $e^{i\phi(\theta)}$ that accompanies this equation, but we can  w.l.o.g.\ set it to $1$ by absorbing it into $U_j$ (equivalently, by shifting $L_j$ by an unimportant constant).}
\begin{align}
    \sum_{s_j'}\left[U_{j}^{\theta}\right]_{s_j, s_j'}A^{s_j'}= V(\theta)A^{s_j} V(\theta)^{\dagger},
    \label{eq:pushthrough}
\end{align}
where $V(\theta)$ is a unitary operator that forms a representation of the same symmetry group.
First, we derive a general result on the action of the symmetry operator on the injective MPS as follows.
\begin{prop}
\label{prop:symfrac}
    Short-range entangled translation-invariant states $\ket{\psi}$ that admit an exact injective Matrix Product State form with finite bond dimension and possess a global continuous on-site symmetry $U^\theta=e^{i\theta\sum_j L_j}$, always exhibit a boundary action of the symmetry.
    In particular, the action of the truncated symmetry operator $U_{\Lambda}^\theta$ on a contiguous region $\Lambda = [\ell,\dots,r]$ always obeys
    \begin{equation}
    U_{\Lambda}^\theta\ket{\psi}= e^{i\theta\left(\mathcal{O}_{X_\ell}+\mathcal{\widetilde{O}}_{X_r}\right)}\ket{\psi}\,,
    \label{eq:ULambdaboundary}
    \end{equation}
    where $\mathcal{O}_{X_\ell}$ and $\widetilde{\mathcal{O}}_{X_r}$ are strictly local operators (potentially non-Hermitian) supported on the finite range subregions $X_\ell$ and $X_r$ located on the left and right boundary of $\Lambda$, respectively.
\end{prop}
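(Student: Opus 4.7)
The plan is to combine the push-through identity in Eq.~(\ref{eq:pushthrough}) with the injectivity of the MPS to explicitly construct the boundary operators. First, I would iteratively apply the push-through at each site $j\in\Lambda=[\ell,\dots,r]$: each internal pair $V(\theta)^\dagger V(\theta)=\mathds{1}$ telescopes, leaving only $V(\theta)$ inserted on the left virtual bond and $V(\theta)^\dagger$ on the right virtual bond of $\Lambda$, giving
\begin{equation}
U_\Lambda^\theta\ket{\psi}=\sum_{\{s_j\}}\mathrm{Tr}\!\left[\cdots A^{s_{\ell-1}}V(\theta)A^{s_\ell}\cdots A^{s_r}V(\theta)^\dagger A^{s_{r+1}}\cdots\right]\ket{\{s_j\}}.
\end{equation}
Thus the effect of $U_\Lambda^\theta$ on $\ket{\psi}$ is equivalent to a pair of virtual-bond insertions localized at the two boundaries of $\Lambda$.

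Writing $V(\theta)=e^{i\theta\mathcal{L}}$ for a Hermitian $D\times D$ generator $\mathcal{L}$, I would then pick blocked subregions $X_\ell=[\ell,\ell+R_{\mathrm{inj}}-1]$ and $X_r=[r-R_{\mathrm{inj}}+1,r]$, which are disjoint once $|\Lambda|\geq 2R_{\mathrm{inj}}$. Injectivity of the blocked MPS on $X_\ell$ (and on $X_r$) ensures that every linear map on the outer pair of virtual indices can be realized by some physical operator supported on the block. Using this, I would define $\mathcal{O}_{X_\ell}$ supported on $X_\ell$ whose action on $\ket{\psi}$ corresponds to left-multiplication of the left virtual bond by $\mathcal{L}$, and $\widetilde{\mathcal{O}}_{X_r}$ supported on $X_r$ whose action corresponds to right-multiplication of the right virtual bond by $-\mathcal{L}^\dagger$; on the orthogonal complement of the bond-to-physical image these operators can be extended arbitrarily since only their action on $\ket{\psi}$ matters. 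By induction, $\mathcal{O}_{X_\ell}^n$ realizes multiplication by $\mathcal{L}^n$ on the bond, so $e^{i\theta\mathcal{O}_{X_\ell}}$ realizes $V(\theta)=e^{i\theta\mathcal{L}}$, and analogously $e^{i\theta\widetilde{\mathcal{O}}_{X_r}}$ realizes $V(\theta)^\dagger=e^{-i\theta\mathcal{L}^\dagger}$. Since $X_\ell\cap X_r=\emptyset$ the two operators commute, and combining the two actions reproduces the bond-inserted form above, yielding $e^{i\theta(\mathcal{O}_{X_\ell}+\widetilde{\mathcal{O}}_{X_r})}\ket{\psi}=U_\Lambda^\theta\ket{\psi}$, which is Eq.~(\ref{eq:ULambdaboundary}).

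The main obstacle is the careful invocation of injectivity: one needs $|\Lambda|$ large enough that $X_\ell,X_r$ are disjoint and each at least $R_{\mathrm{inj}}$ sites wide, and one must verify that the ``multiply-the-bond-by-$\mathcal{L}$'' prescription is well-defined consistently across all virtual-index pairs of the blocked tensor simultaneously---this is exactly what injectivity of the blocked MPS guarantees. A subtlety worth highlighting in the writeup is that the constructed $\mathcal{O}_{X_\ell},\widetilde{\mathcal{O}}_{X_r}$ are generically non-Hermitian even though $U^\theta$ is unitary (because the bond-to-physical correspondence is not itself unitary), consistent with the proposition's ``potentially non-Hermitian'' clause.
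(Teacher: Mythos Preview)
Your proposal is correct and follows essentially the same approach as the paper: push-through to reduce $U_\Lambda^\theta$ to virtual-bond insertions of $V(\theta)$ and $V(\theta)^\dagger$, then use injectivity of the blocked MPS to lift these to physical boundary operators. The only minor variation is that you lift the generator $\mathcal{L}$ and exponentiate, whereas the paper lifts the full group element $V(\theta)$ to an invertible $W^\theta_{X_\ell}$ and then writes it as $e^{i\theta\mathcal{O}_{X_\ell}}$; the paper's own justification for the latter (``by considering the expansion of $V(\theta)$ in $\theta$ and repeating the analysis'') is precisely your construction, so the two routes coincide. Your induction step is valid because the state with $\mathcal{L}^k$ inserted still has its block-$X_\ell$ component in the image of the bond-to-physical map, so the arbitrary extension on the orthogonal complement never enters.
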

\begin{proof}
Equation~(\ref{eq:pushthrough}) implies that a truncated symmetry operator $U_{\Lambda}^\theta \defn \otimes_{j\in\Lambda}U_{j}^\theta$ acts as
\begin{align}
    U_{\Lambda}^\theta \ket{\psi}=\sum_{\{s_j\}}&\mathrm{Tr}\left[\cdots A^{s_{\ell-1}}V(\theta)A^{s_{\ell}}\cdots\right. \nonumber \\
    &\left.\cdots A^{s_{r}}V(\theta)^{\dagger}A^{s_{r+1}}\cdots\right]\ket{s_1\cdots s_N}\,.
    \label{eq:Uthetaaction}
\end{align}
Using the injectivity of the MPS after blocking $R_{\text{inj}}$, we can always turn the boundary action of $V(\theta)$ on the MPS into the action of another (generally non-unitary) finite-range operator $W^\theta_{X_\ell}$ on the physical degrees of freedom as\footnote{Note that a similar but slightly different construction was demonstrated in \cite{wang2024generalizedspinhelixstates} for the AKLT state, where they essentially constructed the physical operator whose action on the MPS $A^{s_{\ell-1}} A^{s_{\ell}}$ gives $A^{s_{\ell-1}} V(\theta) A^{s_{\ell}}$.}
\begin{align}
    V(\theta) A^{s_j} \cdots A^{s_{j+R_{\rm inj}-1}}=\sum_{\{s_j'\}}c^{\{s_j\}}_{\{s_j'\}}A^{s'_j} \cdots A^{s'_{j+R_{\rm inj}-1}},
\label{eq:Vthetaaction}
\end{align}
where the above coefficients $\{ c^{\{s_j\}}_{\{s_j'\}} \}$ give the matrix elements of the desired $W^\theta_{X_\ell}$,
\begin{equation}
    c^{\{s_j\}}_{\{s_j'\}} = \bra{s_j, \dots, s_{j+R_{\rm inj}-1}}W^\theta_{X_\ell}\ket{s_j',\dots,s_{j+R_{\rm inj}-1}'}.
\end{equation}
We can similarly obtain the operator $\widetilde{W}^\theta_{X_r}$ by studying the action of $V(\theta)^\dagger$ on the auxiliary degrees of freedom.
The operators $W^\theta_{X_\ell}$ and $\widetilde{W}^\theta_{X_r}$ are also invertible, which can be shown by considering the expansion of $V(\theta)$ in $\theta$ and repeating the analysis.
Hence from Eq.~(\ref{eq:Uthetaaction}) we obtain
\begin{align}
    U_{\Lambda}^\theta \ket{\psi}= W^\theta_{X_\ell} \widetilde{W}^\theta_{X_r} \ket{\psi} = e^{i\theta\left(\mathcal{O}_{X_\ell}+\mathcal{\widetilde{O}}_{X_r}\right)}\ket{\psi}\quad,
\label{eq:Uthetafinal}
\end{align}
where $\mathcal{O}_{X_\ell}$ and $\mathcal{\widetilde{O}}_{X_r}$ are strictly local (generically non-Hermitian) operators with support on at most $R_{\text{inj}}$ sites on the left and right boundaries of the region $\Lambda$.
\end{proof}
Proposition~\ref{prop:symfrac}, along with Thm.~\ref{thm:Hermitiancut}, implies that the symmetry generator $\sum_j L_j$ necessarily has to be a type I or type II Hamiltonian, since its action on $\ket{\psi}$ reduces to a boundary action.
Moreover, it also implies the following corollary. 
\begin{cor}
For any translation-invariant $\ket{\psi}$, if there is an on-site symmetry operator that is not type I or type II, i.e., that is type III, then the state $\ket{\psi}$ is necessarily long-range entangled.
\end{cor}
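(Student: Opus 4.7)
The plan is to establish this corollary by contrapositive: assuming $\ket{\psi}$ is translation-invariant and short-range entangled, I would show that any on-site symmetry generator $\sum_j L_j$ must be type I or type II, and therefore not type III. The argument is essentially a composition of Prop.~\ref{prop:symfrac} with Thm.~\ref{thm:Hermitiancut}.

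First, since $\ket{\psi}$ is translation-invariant and short-range entangled (working in one spatial dimension, where the MPS framework applies cleanly), it admits an exact injective MPS representation of finite bond dimension after blocking a finite number of sites. This places us in the setting of Prop.~\ref{prop:symfrac}, so that for any sufficiently large contiguous region $\Lambda = [\ell, \dots, r]$, the truncated symmetry operator obeys
\begin{equation}
U_\Lambda^\theta \ket{\psi} = e^{i\theta(\mathcal{O}_{X_\ell}(\theta) + \widetilde{\mathcal{O}}_{X_r}(\theta))} \ket{\psi},
\end{equation}
where $\mathcal{O}_{X_\ell}(\theta)$ and $\widetilde{\mathcal{O}}_{X_r}(\theta)$ are strictly local operators supported in bounded neighborhoods $X_\ell$, $X_r$ of the left and right boundaries of $\Lambda$, with ranges controlled only by the injectivity length $R_{\text{inj}}$ of the MPS.

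Next, I would differentiate both sides of the above relation at $\theta = 0$. Choosing $\Lambda$ large enough that $X_\ell \cap X_r = \emptyset$ (so that $[\mathcal{O}_{X_\ell}, \widetilde{\mathcal{O}}_{X_r}]=0$), the right-hand side derivative collapses to $i(T_\ell + T_r)\ket{\psi}$, where $T_\ell := \mathcal{O}_{X_\ell}(0) + \theta\, \mathcal{O}_{X_\ell}'(0)\big|_{\theta=0}$ (and similarly for $T_r$) are strictly local boundary operators whose supports are contained in $X_\ell$ and $X_r$ respectively, and in particular are independent of $|\Lambda|$ once $|\Lambda|$ exceeds a fixed threshold. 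The left-hand side derivative is $i\sum_{j\in\Lambda}L_j\ket{\psi}$. Hence the truncated generator acts on $\ket{\psi}$ as a pure boundary term,
\begin{equation}
\Bigg(\sum_{j\in\Lambda}L_j\Bigg)\ket{\psi} = (T_\ell + T_r)\ket{\psi}.
\end{equation}
Moreover, the left piece $T_\ell$ is extracted from the MPS push-through on the left boundary and is manifestly independent of the location of the right boundary, and vice versa, matching the independence requirement of Thm.~\ref{thm:Hermitiancut}.

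Finally, I would apply Thm.~\ref{thm:Hermitiancut} to the singleton set $\{\ket{\psi}\}$ with the constant $f_{n,\Lambda}$ absorbed into $T_\ell + T_r$: the displayed equation is precisely the criterion for $\sum_j L_j$ to be type I or type II, contradicting the assumption that it is type III. This establishes the contrapositive. The main technical check I anticipate needing to be careful about is ensuring that the derivative at $\theta=0$ really preserves the strict locality and bounded range of the boundary operators; this follows from the smooth $\theta$-dependence of $\mathcal{O}_{X_\ell}(\theta), \widetilde{\mathcal{O}}_{X_r}(\theta)$ inherited from the analytic operator $U_j^\theta = e^{i\theta L_j}$, so no substantive obstacle arises---the corollary is essentially an immediate repackaging of Prop.~\ref{prop:symfrac} and Thm.~\ref{thm:Hermitiancut}.
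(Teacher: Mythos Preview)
Your proposal is correct and follows exactly the same route as the paper: the paper's argument is simply the observation that Prop.~\ref{prop:symfrac} gives a strictly local boundary action for the truncated symmetry, which by Thm.~\ref{thm:Hermitiancut} forces the generator $\sum_j L_j$ to be type I or type II. You have merely made the differentiation at $\theta=0$ explicit (note your expression for $T_\ell$ simplifies to $\mathcal{O}_{X_\ell}(0)$, since $\theta\,\mathcal{O}'_{X_\ell}(0)\big|_{\theta=0}=0$).
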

In particular, this means that type III Hamiltonians for a specific eigenstate can exist only if the state is long-range entangled.
From Sec.~\ref{sec:NtottypeIII}, the action of $e^{i\theta\hat{N}_{\rm tot}}$ on $\ket{W}$ defies symmetry boundary action, such that we conclude that it must be long-range entangled, a known fact that follows from other arguments such as being related to a non-zero momentum state via a finite-depth quantum circuit~\cite{PhysRevX.12.031007}.
Nevertheless, it demonstrates an interesting connection to type III Hamiltonians, which we summarize in the following corollary.
\begin{cor}
    Since $\hat{N}_{\rm tot}$ is a type III parent Hamiltonian for the $\ket{W}$ state, $\ket{W}$ is necessarily a long-range entangled state.
\end{cor}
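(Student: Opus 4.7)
The plan is to obtain this corollary as an essentially immediate consequence of the preceding corollary, which itself is a direct contrapositive of Prop.~\ref{prop:symfrac} combined with Thm.~\ref{thm:Hermitiancut}. First I would note that $\ket{W}$ is translation-invariant (on a periodic chain), and that $\hat N_{\rm tot} = \sum_j s_j^\dagger s_j$ is a sum of strictly on-site generators, so that $e^{i\theta \hat N_{\rm tot}}$ is an on-site continuous symmetry acting as $e^{i\theta \hat N_{\rm tot}} \ket{W} = e^{i\theta}\ket{W}$. Thus $\hat N_{\rm tot}$ fits exactly into the hypotheses of the preceding corollary as an on-site symmetry generator of $\ket{W}$.

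Next I would invoke the result of Sec.~\ref{sec:NtottypeIII} (or equivalently Prop.~\ref{prop:HIIIequive}), which established purely algebraically that $\hat N_{\rm tot}$ is type III for $\ket{W}$: no truncation $\hNtotLambda$ can act on $\ket{W}$ as a strictly boundary-supported operator plus a constant, because the bulk gives a nonzero contribution $f=1$ from the $X_m$ probe whereas the $\ket{\bar 0}_\Lambda$ probe forces $f=0$, a contradiction. Combining this with the contrapositive of the preceding corollary closes the argument: if $\ket{W}$ were short-range entangled, the assumed injective/QCA structure together with Prop.~\ref{prop:symfrac} would force $e^{i\theta \hat N_{\rm tot}}$ to admit the strictly-local boundary form of Eq.~(\ref{eq:ULambdaboundary}), and Thm.~\ref{thm:Hermitiancut} would then classify $\hat N_{\rm tot}$ as type I or type II—contradicting its type III status.

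The argument is essentially a one-line deduction, so there is no serious obstacle; the subtlety that deserves care, rather than difficulty, is verifying that the hypotheses of Prop.~\ref{prop:symfrac} apply as stated: translation invariance of $\ket{W}$ (immediate under PBC) and strict on-site character of the generator (immediate from $\hat N_{\rm tot} = \sum_j \hat n_j$). I would close by remarking that this conclusion is consistent with the previously known long-range entanglement of $\ket{W}$ derived from other routes (e.g.\ its relation by a finite-depth circuit to a nonzero-momentum single-particle state~\cite{PhysRevX.12.031007}), but that the present derivation is conceptually different: long-range entanglement is deduced purely from the locality structure of the exhaustive class of parent Hamiltonians of $\ket{W}$ via the existence of a type III symmetry generator.
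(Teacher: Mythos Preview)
Your proposal is correct and takes essentially the same approach as the paper, which simply states that the corollary follows directly from Prop.~\ref{prop:symfrac} and Sec.~\ref{sec:NtottypeIII}. You are more explicit in verifying the hypotheses (translation invariance of $\ket{W}$ and on-site character of $\hat N_{\rm tot}$), but the logical skeleton---type III status of $\hat N_{\rm tot}$ contradicts the boundary-action property guaranteed for any on-site symmetry generator of an SRE (injective MPS) state---is identical.
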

This follows directly from Prop.~\ref{prop:symfrac} and Sec.~\ref{sec:NtottypeIII}.
We now formulate a sufficient condition for the symmetry generator, in the same setting as in Prop.~\ref{prop:symfrac}, to be type II parent Hamiltonian:
\begin{thm}\label{thm:transfermat}
    Consider an injective MPS $\ket{\psi}$ that is globally symmetric under an on-site unitary symmetry $U^\theta = e^{i\theta \sum_j{L_j}}$, i.e., $U^\theta\ket{\psi} = e^{i\phi}\ket{\psi}$ but not locally symmetric, i.e., $U^\theta_j\ket{\psi} \neq e^{i\phi_j} \ket{\psi}$.
    Then the symmetry generator $\sum_j{L_j}$ is type II if the transfer matrix of the MPS, defined as $E = \sum_{s}{A^s \otimes (A^s)^\ast}$ is full-rank.
\end{thm}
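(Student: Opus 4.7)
The plan is to argue by contradiction: assume $H_{\rm sym} \defn \sum_j L_j$ is type I, and use the characterization in Thm.~\ref{thm:Hermitiancut} together with the MPS push-through of Prop.~\ref{prop:symfrac} and the full-rank transfer matrix to deduce local symmetry, contradicting the hypothesis.

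First, I would assemble the two boundary representations of the truncated generator action. By Thm.~\ref{thm:Hermitiancut}, the type I assumption gives Hermitian strictly local operators $A_\ell, B_r$ near the boundaries of any sufficiently large segment $\Lambda = [\ell, r]$ with
\begin{equation}
(H_{\rm sym})_\Lambda \ket{\psi} = (A_\ell + B_r + f_\Lambda) \ket{\psi}, \qquad f_\Lambda \in \mathbb{R}.
\end{equation}
On the other hand, differentiating Eq.~(\ref{eq:ULambdaboundary}) at $\theta=0$ gives a second representation
\begin{equation}
(H_{\rm sym})_\Lambda \ket{\psi} = (\mathcal{O}_{X_\ell} + \widetilde{\mathcal{O}}_{X_r}) \ket{\psi}
\end{equation}
with boundary operators obtained by push-through from the bond generator $v \defn -i V'(0)$. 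Subtracting the two relations and using injectivity of the MPS (a Schmidt decomposition across any cut between $X_\ell$ and $X_r$ has full rank when $E$ is invertible), the left and right boundary pieces must separately act on $\ket{\psi}$ as multiples of the identity. In particular, $A_\ell \ket{\psi} = (\mathcal{O}_{X_\ell} + c_\ell)\ket{\psi}$ for some constant $c_\ell$, and analogously for $B_r$.

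Next, I would invoke full-rankness of $E$ to set up a one-to-one correspondence (modulo local annihilators of $\ket{\psi}$) between physical boundary operators and bond matrices inserted on the cut. Concretely, pushing $A_\ell$ back through the MPS tensors is well defined because $E$ is invertible on $M_D$, so $A_\ell$ determines a unique bond matrix $v_A$ whose insertion reproduces the action of $A_\ell$ on $\ket{\psi}$. Matching with the first paragraph gives $v_A = v + c_\ell \mathds{1}_D$ as bond matrices. Since $A_\ell$ is Hermitian, a parallel inversion of its Hermitian conjugate shows that $v_A$ may be taken Hermitian on the bond; equivalently, the Hermitian part of $A_\ell$ reproduces the full action of $v$ on $\ket{\psi}$ through the MPS.

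Finally, armed with a Hermitian bond realization of $v$ with a well-defined push-through, I would use the on-site push-through identity $[v, A^{s_j}] = \sum_{s'}[L_j]_{s_j,s'_j}A^{s'_j}$ to localize the action: inserting $v$ just to the left of site $j$ and removing it just to the right is the same as the on-site operator $L_j$ on the MPS, so the bond-to-physical correspondence supplied by the invertible $E$ produces a strictly local Hermitian operator $\ell_j$ with $\ell_j \ket{\psi} = L_j \ket{\psi}$ up to a constant. This means $U_j^\theta \ket{\psi} = e^{i\phi_j}\ket{\psi}$, i.e., the symmetry acts locally on $\ket{\psi}$, contradicting the hypothesis. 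The main obstacle is the middle step: making precise the claim that full-rankness of $E$ yields a Hermiticity-preserving bijection between boundary physical operators (modulo annihilators) and bond matrices, so that the Hermitian character of $A_\ell$ translates cleanly into Hermiticity of the induced bond generator and then back to Hermitian on-site operators. Handling this carefully, including ruling out possible shifts by central elements that could spoil the correspondence, is where the technical work lies.
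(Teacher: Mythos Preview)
Your strategy is essentially a generator-level version of the paper's argument, and steps 1--5 are on the right track (in particular the left/right separation in step 3 does hold, once you use injectivity of a middle block between $X_\ell$ and $X_r$). The gap is in steps 6--8, where you misidentify how Hermiticity of $A_\ell$ enters. The bond generator $v=-iV'(0)$ is \emph{always} Hermitian, since $V(\theta)$ is unitary by the push-through relation; so ``$v_A$ may be taken Hermitian on the bond'' is automatic and carries no information. Your final step is then a non sequitur: the statement ``there is a strictly local Hermitian $\ell_j$ with $\ell_j\ket{\psi}=L_j\ket{\psi}$'' is trivially satisfied by $\ell_j=L_j$, and neither it nor Hermiticity of $v$ implies $U_j^\theta\ket{\psi}=e^{i\phi_j}\ket{\psi}$.

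What actually closes the argument: from your step 3 one has $A_\ell\ket{\psi_Y^{\alpha,\beta}}=\sum_\gamma(v+c_\ell)_{\alpha\gamma}\ket{\psi_Y^{\gamma,\beta}}$, i.e., $A_\ell$ acts on the truncated MPS as left-bond insertion of $v+c_\ell\mathds{1}$. Hermiticity $A_\ell=A_\ell^\dagger$ then constrains the \emph{Gram matrix} of these states: evaluating $\bra{\psi_Y^{\alpha',\beta'}}A_\ell\ket{\psi_Y^{\alpha,\beta}}$ two ways gives $\big((v+c_\ell)\otimes\mathds{1}-\mathds{1}\otimes(v+c_\ell)^*\big)E^{|Y|}=0$. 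Full-rank $E$ forces $(v+c_\ell)\otimes\mathds{1}=\mathds{1}\otimes(v+c_\ell)^*$, which implies $v+c_\ell$ is a scalar, hence $V(\theta)$ is a pure phase and the symmetry is local---the desired contradiction. The paper runs the identical mechanism at the unitary level: assuming unitary boundary operators $S^\theta_{X_\ell}$, it uses $S^\dagger S=\mathds{1}$ on the separated relation Eq.~(\ref{eq:Xltrunc}) to obtain $E^{|X_\ell|}=(V(\theta)\otimes V(\theta)^*)E^{|X_\ell|}$ directly, and full-rank $E$ gives $V\otimes V^*=\mathds{1}$. Both routes hinge on the same transfer-matrix equation; your ``Hermiticity-preserving bijection'' picture does not capture it.
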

The proof is given in Appendix~\ref{app:transfermatrix}. 
We will demonstrate implications of this theorem for the AKLT state in the next section.
Note that this does not rule out the existence of exponentially localized Hermitian boundary terms, as we discuss in Sec.~\ref{subsubsec:explocherm}.
\subsubsection{Example: The AKLT Ground State}
\label{sec:AKLT}
As an example of Thm.~\ref{thm:transfermat} in action, we consider the AKLT ground states~\cite{AffleckKennedyLiebTasaki1987,AffleckKennedyLiebTasaki1988}, which, for PBC, can be written in terms of Matrix Product States of the form of Eq.~(\ref{eq:MPSform}) with $s_j\in\{-,0,+\}$, and 
\begin{equation}
    A^{\pm}=\pm\sqrt{\frac{2}{3}}\sigma^{\pm}, \;\;\textrm{and}\;\; A^0=-\frac{1}{\sqrt{3}}\sigma^z,
\label{eq:AKLTMPS}
\end{equation} 
with $\sigma$ being the standard Pauli matrices.
This is known to be injective with injectivity length $R_{\rm inj} = 2$, and is symmetric under all $S^\alpha_{\tot} = \sum_j{S^\alpha_j}$ where $\alpha \in \{x, y, z\}$.
Given these properties, Prop.~\ref{prop:symfrac} shows that these symmetry operators $S^\alpha_{\tot}$ are not type III parent Hamiltonians.
This was also demonstrated in \cite{wang2024generalizedspinhelixstates}, where they expressed $S^z_{\tot}$ as a sum of non-Hermitian strictly local operators that annihilate the AKLT ground state. 
We can also derive the non-Hermitian boundary operators explicitly following the steps from Eq.~(\ref{eq:Vthetaaction}) to (\ref{eq:Uthetafinal}) and using the form of the AKLT MPS in Eq.~(\ref{eq:AKLTMPS}).
For example, for $S^z_{\tot}$, we have $U^\theta_j = e^{i \theta S^z_j}$, $V(\theta)  = e^{i \frac{\theta}{2}\sigma^z_j}$ and $\phi(\theta) = 0$, which gives us the boundary operators of Eq.~(\ref{eq:Uthetafinal}) as $ W^\theta_{X_\ell} =e^{i\theta \mathcal{O}_{\ell,\ell+1}}$ and $\widetilde{W}^\theta_{X_r}=e^{i\theta\mathcal{\widetilde{O}}_{r-1,r}}$, with
\begin{align*}
    \mathcal{O}_{\ell,\ell+1}
    &=\frac{1}{2}\bigg[\ketbra{+-}+\ketbra{00}+\left(\ket{00}-\ket{-+}\right)\bra{-+}\\
    &+\ketbra{0+}-\ketbra{0-}+\ketbra{+0}-\ketbra{-0}\bigg]\,,\\
    \mathcal{\widetilde{O}}_{r-1,r}
    &=\frac{1}{2}\bigg[\ketbra{+-}+\ketbra{00}+\left(\ket{00}-\ket{-+}\right)\bra{-+}\\
    &-\ketbra{0+}+\ketbra{0-}-\ketbra{+0}+\ketbra{-0}\bigg],
\end{align*}
which are not Hermitian.
Further, we can check that the MPS tensor transforms non-trivially under the symmetry $U^\theta_j$, since $V(\theta) \neq e^{i\theta\alpha}\mathds{1}$.
The transfer matrix of the AKLT MPS of Eq.~(\ref{eq:AKLTMPS}) can also be computed and is full rank: its eigenvalues are $1$ and three-fold degenerate $-1/3$.
Theorem~\ref{thm:transfermat} then says that $S^z_{\tot}$ is type II, since there are no finite-range Hermitian boundary operators that would reproduce the action of $S^z_{\tot}$ in a patch of the system.
This is consistent with earlier results in~\cite{moudgalya2023exhaustive} that showed that it is not type I using different arguments.
A similar proof also works for $S^x_{\rm tot}$ and $S^y_{\rm tot}$, and we can also argue that these are in different type II equivalence classes by applying Thm.~\ref{thm:transfermat} to show that $a S^\mu_{\tot} - b S^{\nu}_{\tot}$ for $\mu \neq \nu$ and $a, b \in \mathbb{R}$ is type II.
Next, we will explore a closely related example, known as the bosonic Su–Schrieffer–Heeger (SSH) chain, and its relationship with $S_{\rm tot}^z$. We will show that in this case the Hilbert space is an enlarged version of the AKLT Hilbert space which results in $S_{\rm tot}^z$ being a type I Hamiltonian.
\subsubsection{Example: Bosonic Su–Schrieffer–Heeger (SSH) model}
\label{sec:exZ2SPT}
In contrast to the AKLT example, consider another simple example on a 1d chain of length $N$ with periodic boundary conditions, and a local Hilbert space of $\mathbb{C}^2\otimes \mathbb{C}^2$, i.e., two qubits (labeled $\alpha$ and $\beta$) per unit cell (labeled $j$).
The bosonic SSH ground state can be written as
\begin{align}
    \ket{\Psi_0}&=\otimes_{j=1}^N\frac{1}{\sqrt{2}}\left(\ket{\uparrow}_{\beta j}\ket{\downarrow}_{\alpha j+1}-\ket{\downarrow}_{\beta j}\ket{\uparrow}_{\alpha j+1}\right)\,,
\end{align}
with a translation-invariant PBC MPS of the form of Eq.~(\ref{eq:MPSform}) with
\begin{equation}
    A=\frac{1}{\sqrt{2}}
    \begin{pmatrix}
        \ket{\downarrow\uparrow} &\ket{\downarrow\downarrow} \\
        -\ket{\uparrow\uparrow}  & -\ket{\uparrow\downarrow}
    \end{pmatrix},
\end{equation}
which has a physical dimension of $d = 4$, and we have used a compact notation for the tensor $A$, i.e., $A^s = \bra{s}A$ for $s \in \{\uparrow\uparrow, \uparrow\downarrow, \downarrow\uparrow, \downarrow\downarrow\}$.
Such a state is symmetric under $S^z_{\rm tot}=\sum_{j=1}^N (\sigma^z_{\alpha j}+\sigma^z_{\beta j})$.
In this case, the transfer matrix has eigenvalues $1$ and a three-fold degenerate $0$, showing that it is not full-rank and therefore not subject to Thm.~\ref{thm:transfermat}.
In fact, once we apply the truncated symmetry operator $U_\Lambda^\theta=\otimes_{j\in\Lambda}e^{i\theta S^z_{j}}$ on a contiguous region $\Lambda = [\ell, \dots, r]$, it is clear that 
\begin{align}
    U_\Lambda^\theta\ket{\Psi_0}=e^{i\theta\left( \sigma^z_{\alpha\ell}+\sigma^z_{\beta r}\right)}\ket{\Psi_0}\quad.
\end{align}
We can then see that the boundary action of $S^z_{\tot}$ is fully Hermitian, and from Thm.~\ref{thm:Hermitiancut}, it is a type I operator.
This is also very easily verifiable by the following regrouping: $S^z_{\text{tot}} = \sum_{j=1}^N (\sigma^z_{\beta j} + \sigma^z_{\alpha,j+1})$.
Such an observation ties back to Prop.~\ref{prop:QCA}, where we demonstrated that states connected to a product state via a QCA do not possess type II or III Hamiltonians.
In this case, $\ket{\Psi_0}$ is related to a product state via a depth-2 quantum circuit, such that it should come as no surprise that $S_{\rm tot}^z$ has fully Hermitian boundary action, as there are no type II or III Hamiltonians.
On a different note, we also observe that the AKLT ground state can be related to the SSH state by enlarging the spin-1 Hilbert space to that of two spin-$\frac{1}{2}$ per site.
This can cause $S^z_{\tot}$, a type II operator in the AKLT, to become a type I operator, as in the case of the bosonic SSH chain.
This observation, that a type II operator can be ``converted" to a type I operator by enlarging the Hilbert space, was discussed in Ref.~\cite{PhysRevB.108.054412}.
Note, however, that the difference between the AKLT and SSH chains is not simply the enlargement of the Hilbert space:
There is a non-trivial projection involved to go from the latter to the former, so the wavefunctions are different and the types of parent Hamiltonians can be different.
\subsubsection{Exponentially localized Hermitian boundary terms}\label{subsubsec:explocherm}
We finally remark that Thm.~\ref{thm:transfermat} only rules out the existence of strictly local Hermitian terms on the boundary.
In fact, we argue that on any symmetric SRE MPS it is always possible to write the actions of a truncated symmetry operator using localized Hermitian boundary terms with exponentially decaying tails. 
To see this, note that any symmetric SRE state can be obtained from any symmetric product state via a \textit{globally} symmetric unitary transformation of logarithmic depth~\cite{malz2024preparation}.\footnote{This is not always possible with unitaries composed of locally symmetric gates due to Symmetry Protected Topological (SPT) phases, but here we only require globally symmetric unitaries.}
Such unitaries in general transform finite-range operators to localized operators with exponentially decaying tails while keeping the global symmetry operator invariant. 
Since product states only admit type I parent Hamiltonians (Prop.~\ref{prop:0noHIIHIII}), the action of the global symmetry operator should be representable by a strictly local Hermitian boundary operator (Thm.~\ref{thm:Hermitiancut}). 
Hence applying the unitary that transforms the product state to the SRE state, the Hermitian boundary operator should transform to a Hermitian operator with exponentially decaying tails.
It is interesting to note that this means that the action of a strictly local non-Hermitian operator (which always exists according to Prop.~\ref{prop:symfrac}) is equivalent to the action of an exponentially localized Hermitian operator. 
In the context of QMBS, this equivalence is reminiscent of results on the approximate QMBS in the PXP model, where many kinds of models with close-to-perfect QMBS have been studied. 
These include both non-Hermitian models~\cite{Omiya_2023quantum} and models with exponentially decaying terms~\cite{Choi_2019}, hence it would be interesting in the future to explore if they are ``equivalent" in some similar sense.
%

\section{Discussion and Outlook}
\label{sec:discussion}
In this paper, we explored the different kinds of parent Hamiltonians for quantum states, i.e., local Hamiltonians that have a particular state as an eigenstate.
Motivated by observations in previous works on Quantum Many-Body Scars, which is also typical context for the states we consider, we proposed a three-fold classification of the set of parent Hamiltonians, depending on how an extensive local Hamiltonian can be decomposed into sums of strictly local terms having the same set of eigenstates (type I or II if possible, and type III if not), and whether those strictly local terms are Hermitian (type I if possible, and type II if not). 
This refines and extends the classification of Hamiltonians previously proposed in \cite{moudgalya2023exhaustive}, particularly making it more broadly applicable to general quantum states and removing extraneous constraints that came from algebraic definitions used there.
To clearly illustrate this classification, we rigorously obtained the full class of geometrically-local parent Hamiltonians that possess the well-known $W$ state as an eigenstate.
From this result, we showed that several consequences follow immediately, such as if $\ket{W}$ is an eigenstate of such a Hamiltonian, then so is the vacuum state $\ket{\bar{0}}$, which generalizes the statement shown in~\cite{gioia2024wstateuniqueground} that if $\ket{W}$ is a ground state of a Hamiltonian with system-size independent chemical potential, then so is $\ket{\bar{0}}$.
Our result also showed three classes of parent Hamiltonians for the $\ket{W}$ state, type I, type II, and type III, clearly illustrating the classification we proposed.
Finally, on a separate note, we derived several different kinds of long-lived asymptotic scar states that exist for the whole family of Hamiltonian models that have the $W$ state as an exact scar. 
Using the intuition gained from the $W$ state, we derived a theorem (Thm.~\ref{thm:Hermitiancut}) that obtains a relation between the classification of the Hamiltonian types and the action of a truncated Hamiltonian on a patch of the system. 
We illustrated this theorem by demonstrating it on the different types of Hamiltonians for the $\ket{W}$ state, and also by proving examples of type II Hamiltonians for the $\ket{W^2}$, a closely related state.
Finally, we used these results to also derive some statements on the types of parent Hamiltonians for some Short Range Entangled (SRE) states.
We first focused on product states and those obtained by applying Quantum Cellular Automata (QCA) on product states, where we ruled out the existence of parent Hamiltonians of types II and III.
Then we studied the type of symmetry generators (which can be also viewed as parent Hamiltonians) in Matrix Product States invariant under a continuous on-site symmetry, where we ruled out the existence of type III symmetry generators (which led to an alternate proof that $\ket{W}$ is long-range entangled), and provided conditions for ruling out type I symmetry generators (which led to an alternate proof that the total spin operators are type II parent Hamiltonians for the AKLT ground state).
Our work opens up many interesting directions for future studies. 
Perhaps the most immediate is the deeper exploration of parent Hamiltonians of Matrix Product States (MPS), which are likely to yield richer type II and potentially type III parent Hamiltonians, beyond the standard type I parent Hamiltonians that have been ubiquitously the focus of earlier studies~\cite{Perez-Garcia2007, PerezGarciaVerstraeteCiracWolf2008}.
Of course, earlier works were primarily concerned with Hamiltonians with these MPS as \textit{ground states}, which might rule out some of these types, but that is an interesting question to carefully address in the light of our classification.
We have initiated this study by restricting to analysis of symmetry generators for symmetric MPS, but we numerically observe that there are many more classes of type II Hamiltonians when one goes beyond these symmetry operators, and it would be nice to develop a systematic understanding of these cases. 
Further, many examples of QMBS in the literature share some essential features of the $\ket{\bar{0}}$ and $\ket{W}$ states along with the closely related $\ket{W^p}$ states, i.e., they typically consist of a single SRE state and ``quasiparticle excitations" on top of that state~\cite{Moudgalya_2022, Chandran_2023}.
Hence our derivation of the complete class of parent Hamiltonians for these states, arguably the simplest non-trivial examples of QMBS, highlights important concepts that naturally appear in the study of Hamiltonians with QMBS, and paves the way for a systematic classification of such Hamiltonians. 
In the future, it would be interesting to perform similar studies on QMBS with more structure such as entire towers of states, with the simplest example being the ferromagnetic tower of states~\cite{Choi_2019, Mark2020Eta, moudgalya2023exhaustive}, which are simply Dicke states~\cite{Dicke1954Coherence,Bastin2009OperationalFamilies}, essentially like $\ket{W^p}$ with $p$ that can take values from $0$ to $N$.
We have initiated this more general exploration by examining the two-particle $\ket{W^2}$ state and some of its type II parent Hamiltonians (including some initial results for $\ket{W^p}$), but it would be important to perform a more thorough classification.
More general states with these structures have also been referred to as regular language states~\cite{florido2024regular}, and our work can also be viewed as a stepping stone towards understanding classes of parent Hamiltonians for such states. 
There are also numerous examples of QMBS that consist of states that combine the two concepts of MPS and towers of states, i.e., they are quasiparticle towers that are built on top of MPS states~\cite{Moudgalya2020Large}, such as the tower of QMBS in the AKLT Hamiltonian~\cite{Moudgalya_2018exact, Moudgalya_2018entanglement}.
These are likely to exhibit an even richer variety of parent Hamiltonians~\cite{mark2020unified,ren2021deformed, moudgalya2023exhaustive, wang2024generalizedspinhelixstates, zhang2025quantum}. 
Parent Hamiltonians in many examples of these towers of states have been conjectured in \cite{moudgalya2023exhaustive} to have certain specific structures, e.g., those that necessitate \textit{equal spacings} of the towers in the spectrum due to locality of the parent Hamiltonians, which lead to interesting dynamical consequences~\cite{alhambra2020revivals, Moudgalya_2022, odea2025entanglement}.
A general understanding of parent Hamiltonians can lead to proofs of such conjectures. 
Finally, there are other kinds of exotic quantum states for which understanding parent Hamiltonians and locality constraints would be interesting, e.g., exotic QMBS such as those with volume-law entanglement~\cite{Langlett_2022,ivanov2024volume,chiba2024exact, mohapatra2024exact}.
Particularly, it would be interesting to explore if the classification scheme we propose is appropriate for such examples, or if there are finer structures that emerge there. 
Likewise, it would be interesting to generalize our results to higher-dimensions, where finer classifications of parent Hamiltonians could naturally emerge.
Beyond local Hamiltonians, it would also be interesting to generalize notions of parent Hamiltonians to few-body non-local, or long-range parent Hamiltonians that might be of interest in many contexts such as QMBS in long-range systems~\cite{lerose2025theory}, or dark states in central spin models~\cite{villazon2020persistent}.
For the $\ket{W}$ state, some of these questions can be immediately answered from our analysis (since the $W$ state is permutation symmetric and therefore oblivious to any notion of spatial locality), but there might be finer general classifications that might be appropriate in such settings. 
Beyond Hamiltonian systems, it would be interesting to generalize parent Hamiltonians to periodically driven systems where there could be interesting classes of parent Floquet operators.
Several QMBS have been known to be intrinsic to Floquet systems~\cite{sugiura2021manybody, Mizuta2020Exact, ljubotina2024tangent}, and there have been many correspondences established between QMBS Hamiltonians and Floquet unitaries~\cite{Rozon_2022, rozon2023broken}, which should provide inspiration for such generalizations. 
Beyond isolated quantum systems, analogs of such QMBS states have also been found in open quantum systems~\cite{buca2019nonstationary, tindall2020quantum, wang2024embedding, marche2025exceptional, garcia2025lindblad, gotta2025open}, which also motivates generalization to non-Hermitian settings. 
A universal understanding of Hermitian and non-Hermitian terms that possess certain states as eigenstates can also lead to routes to dissipatively engineer such states~\cite{wang2024embedding}. 
On a different note, approximate QMBS of the PXP model~\cite{Turner_2018weak, Turner_2018quantum, Iadecola_2019, Lin_2019, Surace_2020, Serbyn_2021, Desaules2022SchwingerScars, Desaules2023weak, Su2023observation, Daniel2023bridging, Ivanov_2025exact, Mark2025observation} have been conjectured to be related to non-Hermitian models of exact QMBS~\cite{Omiya_2023quantum, Omiya_2023fractionalization}, hence the understanding of non-Hermitian parent Hamiltonians might shed light in that direction as well.
Finally, we presented some dynamical differences between the non-interacting examples of type I and type II Hamiltonians in the context of the $\ket{W}$ state, such as the overall dispersion in free-particle models (and the related dispersion of energy of the $\ket{W_q}$ asymptotic scars in general interacting models).
We further showed the dynamical differences between these two types by studying a droplet of the $W$ state and how it evolves in a diffusive boundary melting manner versus ballistic propagation (with melting on top), respectively.
Although we argue that such dynamical features should be stable even in the presence of interactions, a full numerical calculation is necessary to test this in future work.
More generally, it is important to establish more settings in which qualitative differences between the behaviors of type I, II, and III Hamiltonians can exist, which would validate this classification as being a useful way of understanding ``universal" behaviors of Hamiltonians with QMBS. 
\textit{Note added --} While finalizing this manuscript, we became aware of two independent results on asymptotic QMBS of the $\ket{W}$ state~\cite{gotta2025asymptotic, dooley2025asymptotic}, which overlap with some results in Sec.~\ref{sec:asymptoticscars}.
%

\section*{Acknowledgements}
We especially thank Lorenzo Gotta for sharing results and insightful discussions on overlapping work on asymptotic QMBS of the $W$ state.
We also thank Andrew Ivanov, Daniel Mark, Daniel Ranard, Federica Surace, Ryan Thorngren, and Ruben Verresen for helpful discussions, and Heran Wang for pointing out Ref.~\cite{wang2024generalizedspinhelixstates}.
This work was supported by the Walter Burke Institute for Theoretical Physics at Caltech.
We acknowledge support provided by the Institute for Quantum Information and Matter, an NSF Physics Frontiers Center (NSF Grant PHY-2317110); by the Munich Center for Quantum Science and Technology (MCQST) and the Deutsche Forschungsgemeinschaft (DFG, German Research Foundation) under Germany’s Excellence Strategy--EXC--2111--390814868; and by the National Science Foundation through grant DMR-2001186.
%

\bibliography{main}

\clearpage         
\onecolumngrid  

\appendix
%

%
\section{Numerical methods to find type II and type III operators}\label{app:numerical}
In this Appendix, we discuss some numerical methods for finding type II and type III operators, and their equivalence classes, based on extensions of ideas in \cite{moudgalya2023numerical}. 
Given a state $\ket{\psi}$ (or a set of states $\{\ket{\psi_n}\}$), there are systematic methods for searching for an operator in a desired set (vector space $\mV$) of operators that have these states as eigenstates~\cite{qiranard2017, chertkovclark2018, greiter2018parent, yang2023detecting, yao2022bounding, moudgalya2023numerical}.
Given an orthogonal basis of operators $\{V_\mu\}$ for the vector space $\mV$ we wish our operator to lie in, we can construct the matrices $C^{\rm H}$~\cite{qiranard2017} and $C^{\rm G}$~\cite{chertkovclark2018} as
\begin{align}
    C^{\rm H}_{\mu\nu} &\defn  \frac{1}{2}\bra{\psi} \{V_\mu, V_\nu\} \ket{\psi} - \bra{\psi} V_\mu \ket{\psi} \bra{\psi} V_\nu\ket{\psi} ~, \nn \\
    C^{\rm G}_{\mu\nu} &\defn \bra{\psi}V_\mu V_\nu \ket{\psi} - \bra{\psi} V_\mu \ket{\psi}\bra{\psi}V_\nu \ket{\psi} ~. 
\end{align}
These can be shown to be positive semi-definite, and their eigenvectors with zero eigenvalue are precisely the operators in $\mV$ that have $\ket{\psi}$ as an eigenstate.
The zeros of $C^{\rm H}$ are only Hermitian operators with that property, which span the vector space $\mZ^{\rm H}$, whereas the zeros of $C^{\rm G}$ are all (Hermitian and non-Hermitian) operators with that property, which span the vector space $\mZ^{\rm G}$.
While this is for a single state $\ket{\psi}$, analogous results can be obtained for a set of states $\{\ket{\psi_n}\}$ by simply considering the zeros of the sum of these matrices for the individual states. 
Below, we will assume that we use these methods to obtain the desired set of Hermitian or general operators. 
First, we focus on the vector space  $\mV_{[j, j+R-1]}$ of range-$R$ strictly local operators that have support on sites $[j, j+R-1]$.
Using the methods described above, we can obtain the vector spaces of Hermitian and general non-Hermitian operators that have support on those sites and those that have $\ket{\psi}$ as an eigenstate; we denote these as $\mZ^{\rm H}_{[j, j+R-1]}$ and $\mZ^{\rm G}_{[j, j+R-1]}$.
Using these, we numerically construct the vector spaces $\mZ^{\rm H}_{\loc,R}$ and $\mZ^{\rm G}_{{\rm loc},R}$ of all Hermitian and general operators that consist of all possible strictly local operators that have range $R$, these are given by
\begin{equation}
    \mZ^{\rm H}_{{\rm loc},R} = \bigcup_j\mZ^{\rm H}_{[j, j+R-1]},\;\;\;\mZ^{\rm G}_{{\rm loc}, R} = \bigcup_j\mZ^{\rm G}_{[j, j+R-1]}.
\end{equation}
Separately, we consider the vector space $\mV^{R}$ of all range $R$ operators, including extensive-local operators, which is given by
\begin{equation}
    \mV_{R'} = \bigcup_j{\mV_{[j, j+R-1]}}.
\end{equation}
Using the methods above we find the Hermitian and general non-Hermitian operators in $\mV_{R}$ that have $\ket{\psi}$ as an eigenstate, and we denote those vector spaces by $\mZ^{\rm H}_{{\rm glo}, R}$ and $\mZ^{\rm G}_{{\rm glo}, R}$ respectively.
Now, we can choose $R' \geq R$, and carefully define the equivalence classes of type II and type III operators as follows (note that previous works discussing finding type II equivalence classes~\cite{moudgalya2023exhaustive, moudgalya2023numerical} implicitly set $R' = R$)
\begin{enumerate}
    \item An extensive-local operator of range $R$ with an eigenstate $\ket{\psi}$ is type III if it cannot be written as a sum of strictly local range $R'$ (Hermitian or non-Hermitian) terms that have $\ket{\psi}$ as an eigenstate. We can then define equivalence classes of such operators (equivalent up to the addition of type II operators of range $R'$).
    The inequivalent type III operators are those that lie in the vector space $\mZ^{\rm H}_{{\rm glo}, R} - (\mZ^{\rm H}_{{\rm glo}, R} \cap \mZ^{\rm G}_{{\rm loc}, R'})$, hence the number of type III equivalence classes is 
    \begin{equation}
        N^{\rm III}_{R, R'} = \text{dim}\ \mZ^{\rm H}_{{\rm glo}, R} - \text{dim}\ (\mZ^{\rm H}_{{\rm glo}, R} \cap \mZ^{\rm G}_{{\rm loc}, R'}) = \text{dim}\ (\mZ^{\rm H}_{{\rm glo}, R} \cup \mZ^{\rm G}_{{\rm loc}, R'}) - \text{dim}\ \mZ^{\rm G}_{{\rm loc}, R'}.
    \end{equation}
    \item An extensive-local operator of range $R$ with an eigenstate $\ket{\psi}$ is type II if it cannot be written as a sum of strictly local range $R'$ Hermitian terms that have $\ket{\psi}$ as an eigenstate, but can be written as a sum of strictly local range-$R'$ non-Hermitian terms that have $\ket{\psi}$ as an eigenstate. 
    We can then define equivalence classes of such operators (equivalent up to the addition of type I operators of range $R'$).
    Note that the inequivalent type II and type III operators lie in the vector space $\mZ^{\rm H}_{{\rm glo}, R} - (\mZ^{\rm H}_{{\rm glo}, R} \cap \mZ^{\rm H}_{{\rm loc}, R'})$, hence the total number of type II and type III equivalence classes is given by
    \begin{gather}
        N^{\rm II}_{R, R'} + N^{\rm III}_{R, R'}  = \text{dim}\ \mZ^{\rm H}_{{\rm glo}, R} - \text{dim}\ (\mZ^{\rm H}_{{\rm glo}, R} \cap \mZ^{\rm H}_{{\rm loc}, R'}) = \text{dim}\ (\mZ^{\rm H}_{{\rm glo}, R} \cup \mZ^{\rm H}_{{\rm loc}, {R'}}) - \text{dim}\ \mZ^{\rm H}_{{\rm loc}, R'}\nn \\
        \implies\;\;N^{\rm II}_{R,R'} = \text{dim}\ (\mZ^{\rm H}_{{\rm glo}, R} \cup \mZ^{\rm H}_{{\rm loc}, {R'}}) - \text{dim}\ (\mZ^{\rm H}_{{\rm glo}, R} \cup \mZ^{\rm G}_{{\rm loc}, R'}) + \text{dim}\ \mZ^{\rm G}_{{\rm loc}, R'} - \text{dim}\ \mZ^{\rm H}_{{\rm loc}, R'}.
    \end{gather} 
\end{enumerate}
Note that in the main text, we have assumed $R'$ to be any finite but arbitrary large number in the thermodynamic limit $N \rightarrow \infty$. 
The underlying belief is that $N^{\rm II}_{R, R'}$ and $N^{\rm III}_{R, R'}$ are \textit{independent} of $R'$ for some $R' \geq R_{\rm c}$ for some finite $R_{\rm c}$ that might depend on $R$.
This is analytically provable for the $W$ state as discussed in Sec.~\ref{sec:WstateQMBS}, but in general we need to rely on numerical results.
However, since the time complexity of these methods scale polynomially in $N$ and exponentially in $R$ and $R'$, probing the desired limit is often infeasible in practice.
It would hence be important to develop better methods for this purpose in the future, perhaps by imposing additional conditions such as translation invariance.
%

%
\section{Bond and Commutant Algebras for $\ket{\bar{0}}$ and $\ket{W}$ as QMBS}\label{app:algebras}
Since for extensive-local Hamiltonians $\ket{\bar{0}}$ is always an eigenstate if $\ket{W}$ is an eigenstate, this motivates us to consider them as QMBS, and examine the corresponding bond and commutant algebras introduced in Sec.~\ref{subsec:qmbsconn}. 
First, we can consider the commutant [analogous to Eq.~(\ref{eq:degeneratescar})]
\begin{align}
    \mathcal{C}^{\rm deg}_{\ket{\bar{0}},\ket{W}}
    = \lgen\{ \ketbra{\bar{0}},
    \ketbra{W}{\bar{0}},
    \ketbra{\bar{0}}{W},
    \ketbra{W} \}\rgen\quad,
    \label{eq:comdeg}
\end{align}
whose associated bond-algebra corresponds to all Hamiltonians with a degenerate QMBS system of $\ket{\bar{0}}$ and $\ket{W}$.
We can analytically derive the form of the exhaustive bond algebra via an exhaustive extension of the so-called Shiraishi-Mori construction~\cite{Shiraishi_2017,moudgalya2023exhaustive}.
\begin{prop}
\label{lem:bondalg}
    The exhaustive bond algebra of $\mathcal{C}^{\rm deg}_{\ket{\bar{0}},\ket{W}}$ on a length $N$ system is given by
\begin{equation}
\mA^{\text{deg}}_{\ket{\bar{0}},\ket{W}} = \mA_{1,2,\dots,N}^\text{SMobc} \defn \lgen \{ \ketbra{S}{T}_{j,j+1}, 1 \leq j \leq N-1 \} \rgen ~, 
\label{eq:2sitealgebraOBC}
\end{equation}
where $\ket{T}_{j,k} \defn \ket{11}_{j,k}$ and $\ket{S}_{j,k} \defn (\ket{10} - \ket{01})_{j,k}$.
\end{prop}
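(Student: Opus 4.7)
The plan is to split the proof into two inclusions: $\mA^\text{SMobc} \subseteq \mA^{\rm deg}_{\ket{\bar{0}},\ket{W}}$ (the easy direction, by direct verification of the generators) and $\mA^{\rm deg}_{\ket{\bar{0}},\ket{W}} \subseteq \mA^\text{SMobc}$ (the exhaustiveness claim, requiring a dimension-counting or structural argument based on the double-commutant theorem).

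For the first inclusion, I would check directly that each generator $\ketbra{S}{T}_{j,j+1}$ annihilates both $\ket{\bar{0}}$ and $\ket{W}$ from the right and from the left. On the right, $\bra{T}_{j,j+1} = \bra{11}_{j,j+1}$ vanishes on $\ket{\bar{0}}$ (no excitations) and on $\ket{W}$ (at most one excitation anywhere). On the left, the reduced density matrices of both $\ket{\bar{0}}$ and $\ket{W}$ on any pair $\{j,j+1\}$ are supported on $\text{span}\{\ket{00},(\ket{10}+\ket{01})/\sqrt{2}\}$, which is orthogonal to $\ket{S}$, so the overlap with $\ket{S}_{j,j+1}$ vanishes. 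These two facts imply that $\ketbra{S}{T}_{j,j+1}$ commutes with each generator of $\mathcal{C}^{\rm deg}_{\ket{\bar{0}},\ket{W}}$, and hence the entire $\dagger$-algebra it generates lies in $\mA^{\rm deg}_{\ket{\bar{0}},\ket{W}}$. Closure under products and Hermitian conjugation moreover yields the full 2-site Shiraishi-Mori block $\{\ketbra{S}_{j,j+1}, \ketbra{T}_{j,j+1}, \ketbra{S}{T}_{j,j+1}, \ketbra{T}{S}_{j,j+1}\}$ at each bond via $\ketbra{S}_{j,j+1} = \ketbra{S}{T}_{j,j+1}\,\ketbra{T}{S}_{j,j+1}$ and analogously for $\ketbra{T}_{j,j+1}$.

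For the reverse inclusion, my strategy is to apply the double-commutant theorem: since $\mathcal{C}^{\rm deg}_{\ket{\bar{0}},\ket{W}}$ is a finite-dimensional $\dagger$-algebra, $\mA^{\rm deg}_{\ket{\bar{0}},\ket{W}}$ equals its commutant, so it suffices to show that the commutant of $\mA^\text{SMobc}$ reduces to $\mathcal{C}^{\rm deg}_{\ket{\bar{0}},\ket{W}}$. The decomposition $\mathcal{H} = \text{span}\{\ket{\bar{0}},\ket{W}\}\oplus \text{span}\{\ket{\bar{0}},\ket{W}\}^\perp$ gives $\dim \mathcal{C}^{\rm deg}_{\ket{\bar{0}},\ket{W}} = 2^2 + 1^2 = 5$ and therefore $\dim \mA^{\rm deg}_{\ket{\bar{0}},\ket{W}} = 1^2 + (2^N - 2)^2$ by the Artin-Wedderburn structure. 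I would then show that $\mA^\text{SMobc}$ attains this dimension by explicitly constructing, for every pair of basis states $\ket{\phi},\ket{\phi'}\in\text{span}\{\ket{\bar{0}},\ket{W}\}^\perp$, an operator $\ketbra{\phi}{\phi'}\in \mA^\text{SMobc}$. The construction proceeds inductively over the length of the chain: using that any $\ket{\phi}\not\in \text{span}\{\ket{\bar{0}},\ket{W}\}$ must, on some bond $\{j,j+1\}$, have a non-vanishing component in $\mathcal{V}_{j,j+1}^\perp = \text{span}\{\ket{S},\ket{T}\}$, one can peel off a 2-site SM factor at that bond and reduce to a shorter chain whose parent algebra is exhausted by SM terms on fewer bonds.

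The main obstacle is precisely this last step: verifying, either by explicit basis construction or by a matching dimension-count of the algebra generated by products of 2-site Shiraishi-Mori blocks on open boundary conditions, that no ``global'' elements of $\mA^{\rm deg}_{\ket{\bar{0}},\ket{W}}$ escape the 2-site construction. Here I would draw heavily on the exhaustive-algebra arguments developed in closely related QMBS settings in Ref.~\cite{moudgalya2023exhaustive}, adapting them to the present commutant by exploiting the fact that the two target states $\ket{\bar{0}}$ and $\ket{W}$ differ only in the total particle number subspaces $n=0$ and $n=1$, and that the 2-site orthogonal space $\mathcal{V}_{j,j+1}^\perp$ already resolves every excitation pattern outside those two subspaces once one scans all bonds along the chain.
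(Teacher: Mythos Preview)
Your overall strategy matches the paper's: both prove the easy inclusion by checking that the generators annihilate $\ket{\bar{0}}$ and $\ket{W}$, then invoke the double-commutant theorem and reduce the hard inclusion to showing that $\mA^{\text{SMobc}}$ acts irreducibly on $\text{span}\{\ket{\bar{0}},\ket{W}\}^\perp$, proved by induction on the chain length. The paper likewise cites the AKLT argument in \cite{moudgalya2023exhaustive} as the template.

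The one place where your sketch diverges is the direction of the induction. Your phrasing ``peel off a 2-site SM factor at that bond and reduce to a shorter chain'' is not quite well-posed: applying a 2-site operator to an $N$-site state does not produce a state on fewer sites, so there is no literal reduction to a shorter-chain problem. The paper's induction runs the other way, building up from $k$ to $k+1$ sites. Concretely, it fixes an anchor state $\ket{11\cdots 1}$ (whose projector is manifestly in the algebra), and shows that $\ketbra{\psi_\alpha}{11\cdots 1}_{1,\dots,k+1} \in \mA^{\text{SMobc}}_{1,\dots,k+1}$ for every $\ket{\psi_\alpha}$ in the complement. The inductive step uses the hypothesis on the two \emph{overlapping} subchains $[1,\dots,k]$ and $[2,\dots,k+1]$: multiplying the inductively-given $\ketbra{\psi_\alpha}{11\cdots 1}_{1,\dots,k}$ by $\ketbra{T}{T}_{k,k+1}$ or $\ketbra{S}{T}_{k,k+1}$ (and similarly from the left end) produces kets of the form $\ket{\psi_\alpha}\otimes\ket{0/1}$ and $\ket{0/1}\otimes\ket{\psi_\alpha}$, which together span the complement on $k+1$ sites. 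This anchor-state trick is the concrete mechanism that replaces your ``peel off'' intuition.
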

Note that the generators are chosen as if they are nearest-neighbor terms on an OBC chain, hence the label ``obc",  but the argument works for any system on any graph with such a path.
We have used the same convention as in~\cite{moudgalya2023exhaustive}, where Hermitian conjugates of all exhibited generators as well as the identity operator are also implicitly included, hence the algebra is $\dagger$-algebra,  e.g.,  it immediately follows that $\ketbra{T}{S}_{j,j+1}$ and $\ketbra{T}{T}_{j,j+1}$ on the same bonds are also in the bond algebra.
The generators can also be written in the hard-core boson creation/annihilation language as $\ketbra{S}{T}_{j,j+1} = (s_j^\dag - s_{j+1}^\dag) s_j s_{j+1}$  and their Hermitian conjugates.
``SM" in the label stands for the form being motivated by the Shiraishi-Mori construction of scar Hamiltonians, for which the commutant algebra approach in~\cite{moudgalya2023exhaustive} provides an exhaustive extension.
The proof of Prop.~\ref{lem:bondalg} is given below. 
\begin{proof}
The argument below is very similar to the argument for the OBC AKLT ground states as scars in App.~B in~\cite{moudgalya2023exhaustive}, which we use here with only small adaptations.
For any two-sites $j$ and $k$, two-site Schmidt vectors in $\ktO$ and $\ktW$ belong to the span of $\ket{00}_{j,k}$ and $(\ket{10} + \ket{01})_{j,k}$.
Denoting orthogonal states $\ket{T}_{j,k} \defn \ket{11}_{j,k}$ and $\ket{S}_{j,k} \defn (\ket{10} - \ket{01})_{j,k}$, we have that $\ktO$ and $\ktW$ are annihilated by $\ketbra{\dots}{T}_{j,k}$ and $\ketbra{\dots}{S}_{j,k}$.
It is easy to check that on a system of size $N$, the common kernel of $\{ \ketbra{\dots}{T}_{j,j+1}, \ketbra{\dots}{S}_{j,j+1}, 1 \leq j \leq N-1 \}$ is given precisely by the span of $\{ \ktO, \ktW \}$.
Hence there are precisely two linearly independent states annihilated by this algebra $\mA_{1,2,\dots,N}^\text{SMobc}$, namely $\{ \ktO_{1,2,\dots,N}, \ktW_{1,2,\dots,N} \}$ for any $N$.
Here and below, when we say ``annihilated by the algebra $\mA_{1,2,\dots,N}^\text{SMobc}$'' we mean annihilated by all the non-trivial generators, i.e., excluding the identity.
It is easy to see that the projector $\ketbra{11 \cdots 1}_{1,2,\dots,N}$ belongs to $\mA_{1,2,\dots,N}^\text{SMobc}$.
Proving that $\mA_{1,2,\dots,N}^\text{SMobc}$ and $\mathcal{C}^{\rm deg}_{\ket{\bar{0}},\ket{W}}$ are commutants of each other is then equivalent to proving the following Lemma.
\begin{lemma}\label{lem:OBC}
For any $N$, the algebra $\mA_{1,2,\dots,N}^\text{SMobc}$ acts irreducibly in the orthogonal complement to the two states $\{ \ktO_{1,2,\dots,N}, \ktW_{1,2,\dots,N} \}$.
Denoting an orthonormal basis in this space as $\{ \ket{\psi_\alpha}, \alpha = 1, \dots, 2^N-2 \}$, this is equivalent to the statement that $\ketbra{\psi_\alpha}{11 \cdots 1}_{1,2,\dots,N} \in \mA_{1,2,\dots,N}^\text{SMobc}$ for all $\alpha$, since all operators of the form $\ketbra{\psi_\alpha}{\psi_\beta}$ can be generated from these and their Hermitian conjugates (since the bond algebra is a $\dagger$-algebra). 
\end{lemma}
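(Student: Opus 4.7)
The plan is to prove the lemma by induction on $N$, in the same spirit as the AKLT argument in Appendix~B of~\cite{moudgalya2023exhaustive}. The base case $N=2$ is immediate: from the single generator $\ketbra{S}{T}_{1,2}$ and its Hermitian conjugate, the products $\ketbra{S}{T}\ketbra{T}{S}=\ketbra{S}{S}_{1,2}$ and $\ketbra{T}{S}\ketbra{S}{T}=\ketbra{T}{T}_{1,2}$ also lie in the algebra, so the full matrix algebra on the two-dimensional complement $\mathrm{span}\{\ket{S},\ket{T}\}_{1,2}$ is contained in $\mA_{1,2}^{\text{SMobc}}$, which is exactly the required irreducibility.

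For the inductive step, assume the statement at size $N-1$ and let $\mA'$ denote the subalgebra of $\mA_{1,\dots,N}^{\text{SMobc}}$ generated by $\{\ketbra{S}{T}_{j,j+1}:1\le j\le N-2\}$. Under the tensor decomposition $(\mathbb{C}^2)^{\otimes N}=(\mathbb{C}^2)^{\otimes(N-1)}\otimes\mathbb{C}^2$ the subalgebra $\mA'$ acts as $\mA_{1,\dots,N-1}^{\text{SMobc}}\otimes\mathds{1}_N$, so the inductive hypothesis (combined with the remark in the lemma statement) yields
\[
\ketbra{\phi}{1\cdots 1}_{1,\dots,N-1}\otimes\mathds{1}_N \,\in\, \mA_{1,\dots,N}^{\text{SMobc}} \quad\text{for every } \ket{\phi}\in V_\perp^{N-1},
\]
where $V_\perp^{N-1}$ denotes the complement of $\{\ktO_{N-1},\ktW_{N-1}\}$ in $(\mathbb{C}^2)^{\otimes(N-1)}$. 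The plan is to combine these with the new boundary generator $\ketbra{S}{T}_{N-1,N}$ and the derived operators $\ketbra{T}{T}_{N-1,N}=\ketbra{11}{11}_{N-1,N}$, $\ketbra{S}{S}_{N-1,N}$, and $\ketbra{T}{S}_{N-1,N}$ to produce $\ketbra{\psi_\alpha}{1\cdots 1}_{1,\dots,N}$ for every $\ket{\psi_\alpha}$ in the full complement $V_\perp^N$.

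Decomposing a generic $\ket{\psi_\alpha}\in V_\perp^N$ as $\ket{\phi}_{N-1}\otimes\ket{0}_N+\ket{\phi'}_{N-1}\otimes\ket{1}_N$, orthogonality to $\ktO_N$ and to $\ktW_N=\sqrt{(N-1)/N}\,\ktW_{N-1}\otimes\ket{0}_N+(1/\sqrt{N})\,\ktO_{N-1}\otimes\ket{1}_N$ reduces to $\langle\bar 0|\phi\rangle=0$ together with $\sqrt{N-1}\langle W|\phi\rangle+\langle\bar 0|\phi'\rangle=0$. The ``easy'' subfamily $\ket{\phi}=0$, $\ket{\phi'}\in V_\perp^{N-1}$ is covered directly by
\[
\bigl[\ketbra{\phi'}{1\cdots 1}_{1,\dots,N-1}\otimes\mathds{1}_N\bigr]\bigl[\mathds{1}_{1,\dots,N-2}\otimes\ketbra{11}{11}_{N-1,N}\bigr] = \ketbra{\phi'}{1\cdots 1}_{1,\dots,N-1}\otimes\ketbra{1}{1}_N,
\]
a short direct check, which exhausts a $(2^{N-1}-2)$-dimensional subspace of $V_\perp^N$. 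To reach the remaining $2^{N-1}$ directions, I will apply $\ketbra{S}{T}_{N-1,N}$ to $\ket{1\cdots 1}_{1,\dots,N}$, producing $\ket{1\cdots 1}_{1,\dots,N-2}\otimes(\ket{10}-\ket{01})_{N-1,N}$, then left-multiply by operators $\ketbra{\chi}{1\cdots 1}_{1,\dots,N-1}\otimes\mathds{1}_N$ from $\mA'$ for appropriately chosen $\ket{\chi}\in V_\perp^{N-1}$, and take suitable linear combinations.

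The main obstacle I anticipate is the ``entangled'' subfamily in which $\ket{\phi}$ carries a non-zero $\ktW_{N-1}$ overlap tied to a non-zero $\ktO_{N-1}$ overlap of $\ket{\phi'}$ through the second orthogonality constraint: such $\ket{\phi}$ lies outside $V_\perp^{N-1}$, so the inductive hypothesis does not directly furnish $\ketbra{\phi}{1\cdots 1}_{1,\dots,N-1}$ on the first $N-1$ sites in isolation. Generating these paired transitions will require exploiting exactly the recursive identity between $\ktW_N$, $\ktW_{N-1}$, and $\ktO_{N-1}$ displayed above, which is precisely the relation that the boundary generator $\ketbra{S}{T}_{N-1,N}$ ``sees'' when it hops a particle across the last bond. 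The careful bookkeeping that verifies the resulting operators span the remaining $2^{N-1}$-dimensional subspace --- and thereby closes the induction --- is the step I expect to require the most work, though the matching dimension count makes me optimistic that it will go through.
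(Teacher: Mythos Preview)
Your one-sided induction is workable in principle, but the explicit constructions you outline fall two dimensions short, and the paper takes a different and cleaner route. Concretely: your two product families, $\mA'\cdot\ketbra{T}{T}_{N-1,N}$ yielding $\ket{\phi'}\otimes\ket{1}_N$ and $\mA'\cdot\ketbra{S}{T}_{N-1,N}$ yielding $\ket{\chi}\otimes\ket{0}_N$ (both with $\ket{\phi'},\ket{\chi}\in V_\perp^{N-1}$), together span only $2(2^{N-1}-2)=2^N-4$ directions, not the $2^{N-1}$ you claim for the second step. The two missing directions live entirely inside the four-dimensional block $\{\ktO_{N-1},\ktW_{N-1}\}\otimes\{\ket{0},\ket{1}\}_N$. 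They \emph{can} be reached with your generators, but only by applying the bond operator \emph{after} an $\mA'$ element --- e.g., $\ketbra{S}{T}_{N-1,N}\cdot\bigl[\ketbra{\chi}{1\cdots1}_{1,\dots,N-1}\otimes\mathds{1}_N\bigr]$ with $\ket{\chi}=\ktW_{[1,\dots,N-2]}\otimes\ket{1}_{N-1}\in V_\perp^{N-1}$ produces a state with nonzero $\ktW_{N-1}\otimes\ket{1}_N$ component. So the gap is fillable, but it is precisely the ``careful bookkeeping'' you flagged, and your optimistic dimension count hides rather than resolves it.

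The paper sidesteps this entirely by invoking the inductive hypothesis on \emph{both} $\mA_{1,\dots,N-1}$ and $\mA_{2,\dots,N}$. It produces four families --- $\ket{\psi_\alpha}_{1,\dots,N-1}\otimes\ket{0/1}_N$ and $\ket{0/1}_1\otimes\ket{\psi_\alpha}_{2,\dots,N}$ --- and proves they span $V_\perp^N$ not by dimension-counting but by an orthogonality argument: any vector orthogonal to all four families is annihilated by $\mA_{1,\dots,N-1}$ and by $\mA_{2,\dots,N}$, hence by all of $\mA_{1,\dots,N}$, hence lies in $\mathrm{span}\{\ktO_N,\ktW_N\}$. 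This delivers the two tricky directions for free, with no explicit construction of the entangled states required.
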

\begin{proof}
We proceed by induction, assuming this holds for some $N = k$, we show that it is true for $N = k + 1$.
We begin the induction from $N = 2$, where the span of the corresponding two states $\{ \ktO_{1,2}, \ktW_{1,2} \}$ is simply the span of $\{ \ket{00}_{1,2}, (\ket{10} + \ket{01})_{1,2} \}$.
Hence by definition in Eq.~(\ref{eq:2sitealgebraOBC}), $\mA_{1,2}^\text{SMobc}$ acts irreducibly in the space spanned by $\{ \ket{T}_{1,2}, \ket{S}_{1,2} \}$ that is the orthogonal complement to $\{ \ktO_{1,2}, \ktW_{1,2} \}$.
For induction, we assume the claim holds for $N = k$, or any $k$ consecutively labeled sites in general.
This then implies that the irreducibility holds for the algebras $\mA_{1,2,\dots,k}^\text{SMobc}$ and $\mA_{2,3,\dots,k+1}^\text{SMobc}$.
Given $\left(\ketbra{\psi_\alpha}{11 \cdots 1} \right)_{1,2,\dots,k} \in \mA_{1,2,\dots,k}^\text{SMobc}$ and similar ketbras in $\mA_{2,3,\dots,k+1}^\text{SMobc}$, we can then combine with appropriate $\ketbra{T}{T}, \ketbra{S}{T}$ near the ends to obtain the following ketbras from $\mA_{1,2,\dots,k,k+1}^\text{SMobc}$:
\begin{align*}
& \left(\ketbra{\psi_\alpha}{11 \cdots 1} \right)_{1,2,\dots,k} \left(\ketbra{11}{11} \right)_{k,k+1} = (\ket{\psi_\alpha}_{1,2,\dots,k} \otimes \ket{1}_{k+1}) \bra{11 \cdots 11}_{1,2,\dots,k,k+1} ~; \\
& \left(\ketbra{\psi_\alpha}{11 \cdots 1} \right)_{1,2,\dots,k} \left(\big[\ket{10} - \ket{01} \big] \bra{11} \right)_{k,k+1} = (\ket{\psi_\alpha}_{1,2,\dots,k} \otimes \ket{0}_{k+1}) \bra{11\cdots 11}_{1,2,\dots,k,k+1} ~; \\
& \left(\ketbra{\psi_\alpha}{11 \cdots 1} \right)_{2,3,\dots,k+1} \left(\ketbra{11} \right)_{1,2} = (\ket{1}_1 \otimes \ket{\psi_\alpha}_{2,3,\dots,k+1}) \bra{11 \cdots 11}_{1,2,\dots,k,k+1} ~; \\
& \left(\ketbra{\psi_\alpha}{11 \cdots 1} \right)_{2,3,\dots,k+1} \left(\big[\ket{10} - \ket{01} \big] \bra{11} \right)_{1,2} = -(\ket{0}_1 \otimes \ket{\psi_\alpha}_{2,3,\dots,k+1} ) \bra{11 \cdots 11}_{1,2,\dots,k,k+1} ~,
\end{align*}
where we have used the fact that $\mA^{\text{SMobc}}_{1,\dots,k+1}$ is generated by the algebras $\mA^{\text{SMobc}}_{1,2,\dots,k}$ and $\mA^{\text{SMobc}}_{k,k+1}$ or $\mA^{\text{SMobc}}_{2,3,\dots,k+1}$ and $\mA^{\text{SMobc}}_{1,2}$. 
We can now show that the kets appearing on the R.H.S., namely,
\begin{equation}
\ket{\psi_\alpha}_{1,2,\dots,k} \otimes \ket{1}_{k+1} ~, \quad
\ket{\psi_\alpha}_{1,2,\dots,k} \otimes \ket{0}_{k+1} ~, \quad 
\ket{1}_1 \otimes \ket{\psi_\alpha}_{2,3\dots,k+1} ~, \quad
\ket{0}_1 \otimes \ket{\psi_\alpha}_{2,3\dots,k+1} ~, \label{eq:append_right_left}
\end{equation}
not all linearly independent, span the orthogonal complement to the target space $\{ \ktO_{1,2,\dots,k,k+1}, \ktW_{1,2,\dots,k,k+1} \}$.
To do so, it is sufficient to show that any state orthogonal to the span of the states in Eq.~(\ref{eq:append_right_left}) is annihilated by the algebra $\mA_{1,2,\dots,k,k+1}^{\text{SMobc}}$ (whose null space is precisely the target space).
Consider any $\ket{\phi}_{1,2,\dots,k,k+1}$ orthogonal to the states in Eq.~(\ref{eq:append_right_left}) and decompose it as
\begin{align}
\ket{\phi}_{1,2,\dots,k,k+1} = \ket{u_1}_{1,2,\dots,k} \otimes \ket{1}_{k+1} + \ket{u_0}_{1,2,\dots,k} \otimes \ket{0}_{k+1} ~. 
\end{align}
Requiring orthogonality to the first two states in Eq.~(\ref{eq:append_right_left}), we conclude that $\ket{u_{1/0}}_{1,2,\dots,k}$ are orthogonal to all $\ket{\psi_{\alpha}}_{1,2,\dots,k}$ and hence $\ket{\phi}_{1,2,\dots,k,k+1}$ is annihilated by $\mA^\text{SMobc}_{1,2,\dots,k}$.
By an identical argument using orthogonality to the last two states in Eq.~(\ref{eq:append_right_left}), we conclude that $\ket{\phi}_{1,2,\dots,k,k+1}$ is annihilated by $\mA^\text{SMobc}_{2,3,\dots,k+1}$.
Since $\mA^\text{SMobc}_{1,2,\dots,k,k+1}$ is completely generated by $\mA^\text{SMobc}_{1,2,\dots,k}$ and $\mA^\text{SMobc}_{2,3,\dots,k+1}$, we conclude that $\ket{\phi}_{1,2,\dots,k,k+1}$ is annihilated by $\mA^\text{SMobc}_{1,2,\dots,k,k+1}$ and hence must be in the span of the target space $\{ \ktO_{1,2,\dots,k,k+1}, \ktW_{1,2,\dots,k,k+1} \}$.
This proves that the states in Eq.~(\ref{eq:append_right_left}) indeed span the orthogonal complement to the $\{ \ktO_{1,2,\dots,k,k+1}, \ktW_{1,2,\dots,k,k+1} \}$ states.
Combining the arguments, this proves the claim for $N = k + 1$, completing the induction and hence proving the claim for all $N$.
\end{proof}
This proves that $\mA^\text{SMobc}_{1,2,\dots,N}$ and $\mathcal{C}^{\rm deg}_{\ket{\bar{0}},\ket{W}}$ on sites $1,2,\dots,N$ are commutants of each other, i.e., $\mA^\text{SMobc}_{1,2,\dots,N}$ is the exhaustive bond algebra for Hamiltonians with degenerate scars $\{ \ktO, \ktW \}$ on the system with $N$ sites.
While we have used several generators on each link $j,j+1$ to make the proof particularly transparent, we expect that a smaller number of generators is sufficient, e.g., for $N \geq 3$
\begin{equation}
\mA_{1,2,\dots,N}^\text{SMobc} = \lgen \{ h_{j,j+1}, 1 \leq j \leq N-1 \} \rgen ~,
\label{eq:2sitealgebraOBCgen}
\end{equation}
where $h_{j,j+1}$ is a generic Hermitian operator acting in the space spanned by $\{ \ket{T}_{j,j+1}, \ket{S}_{j,j+1} \}$.
A sufficient condition is that this holds on three-site systems, since then all of the generators used in the lemma proof belong to $\lgen h_{j,j+1}, h_{j+1,j+2} \rgen$. 
We conclude with a simple remark.
By reviewing the above proof, we see that for any Hermitian $h_X$ that annihilates $\{ \ktO, \ktW \}$ and whose support lies inside a region $X$, such $h_X$ can be generated using two-site generators of the exhibited form inside this region.
Indeed, $h_X$ annihilates naturally defined states $\ktO_X, \ktW_X$ on $X$, and the above argument goes through by using consecutively labeled sites from $X$ only.
If $X$ is a bounded region, this conceptually covers all strictly local annihilators.
By extending such bounded $X$ to a bounded covering region with sufficiently nice connectivity (namely, that allows a path that visits all sites using nearest-neighbor connections), the generators can be taken to be nearest-neighbor terms.
\end{proof}

The main message of this Proposition is that \text{all} Hamiltonians that have the $\ket{0}$ and $\ket{W}$ as degenerate eigenstates can be generated {\it in the algebra sense}---i.e., allowing also products of the generators---from the above range-2 generators.
In particular, this is true for all Hamiltonians in Eq.~(\ref{eq:HW}) with $\omega = 0$.
The way the Lemma is proven, any finite-range annihilator $h_X$ can be written in terms of the above generators restricted to the region $X$.
The Lemma implies that $H_{\text{ImHop}}$ can be also written in terms of such generators on the full chain.
However, it does not say how many generators are multiplied in the process to achieve this.
The fact that $H_{\text{ImHop}}$ is a type II operator (e.g., proven purely algebraically in Sec.~\ref{subsec:Wdifferenttypes}) means that extensively many such generators need to be multipled and added, with many cancellations magically happening, to obtain $H_{\text{ImHop}}$.
This shows the limits of the usefulness of the purely algebra thinking and the need to bring in locality considerations separately.
Furthermore, it is hard to determine whether there may be other such type II Hamiltonians in this bond algebra language.
Clarifying this non-trivial aspect is the highlight of analysis of type II Hamiltonians using the operator-string basis in the main text.
The full resolution of locality considerations for this QMBS system is one of the main achievements of this paper.
In the presence of lifting operators, we will also consider the commutant [analogous to Eq.~(\ref{eq:Cscar})]
\begin{align}
    \mathcal{C}^{\text{non-deg}}_{\ket{\bar{0}},\ket{W}}=\lgen\{\ketbra{\bar{0}},\ketbra{W}\}\rgen\quad,
\end{align}
which corresponds to the non-degenerate QMBS with $\ket{\bar{0}}$ and $\ket{W}$, where the bond algebra is $\lgen\{\mA^{\text{deg}}_{\ket{\bar{0}},\ket{W}},\hat{N}_{\rm tot}\}\rgen$, i.e., includes the extensive-local lifting operator $\hat{N}_{\rm tot}$.
In fact, we show a stronger statement below.
\begin{prop}
    \label{prop:Hlift}
    The commutant of $\mathcal{C}^{\text{non-deg}}_{\ket{\bar{0}},\ket{W}}$, is spanned---as a linear space over $\mathbb{C}$---by $\mA^{\text{deg}}_{\ket{\bar{0}},\ket{W}}$ and $\hat{N}_{\rm tot}$,
\begin{equation}
\mA^{\text{non-deg}}_{\ket{\bar{0}},\ket{W}} = \text{span} \{
\mA^{\text{deg}}_{\ket{\bar{0}},\ket{W}}, N_{\text{tot}} \}\quad.
\end{equation} 
\end{prop}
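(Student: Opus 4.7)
The plan is to first unpack $\mathcal{C}^{\text{non-deg}}_{\ket{\bar{0}},\ket{W}}$ concretely. Since it is generated by two orthogonal rank-one projectors, as a linear space it is three-dimensional, spanned by $\{\mathds{1},\ketbra{\bar{0}},\ketbra{W}\}$. I would then characterize its commutant: a direct check shows that an operator $A$ commutes with both $\ketbra{\bar{0}}$ and $\ketbra{W}$ if and only if $\ket{\bar{0}}$ and $\ket{W}$ are both eigenvectors of $A$, with possibly distinct eigenvalues $\lambda_0$ and $\lambda_W$. In parallel, since $\mathcal{C}^{\rm deg}_{\ket{\bar{0}},\ket{W}}$ is the full matrix algebra on the two-dimensional subspace $V \defn \text{span}\{\ket{\bar{0}},\ket{W}\}$ (extended by zero on $V^\perp$), Schur's lemma identifies $\mA^{\text{deg}}_{\ket{\bar{0}},\ket{W}}$ with the operators that act as a single scalar on $V$, i.e.\ the case $\lambda_0 = \lambda_W$.

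With these characterizations in hand, the proof reduces to an elementary eigenvalue-correction step. For any $A \in \mA^{\text{non-deg}}_{\ket{\bar{0}},\ket{W}}$ with eigenvalues $\lambda_0,\lambda_W$, I would form
\begin{equation}
B \defn A - (\lambda_W - \lambda_0)\, \hat{N}_{\rm tot}
\end{equation}
and use $\hat{N}_{\rm tot}\ket{\bar{0}}=0$ and $\hat{N}_{\rm tot}\ket{W}=\ket{W}$ to verify that $B\ket{\bar{0}} = \lambda_0 \ket{\bar{0}}$ and $B\ket{W} = \lambda_0 \ket{W}$. Thus $B \in \mA^{\text{deg}}_{\ket{\bar{0}},\ket{W}}$ and $A = B + (\lambda_W - \lambda_0)\hat{N}_{\rm tot}$ lies in the claimed linear span. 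The reverse inclusion is immediate because $\hat{N}_{\rm tot}$ itself has $\ket{\bar{0}}$ and $\ket{W}$ as eigenvectors, so $\hat{N}_{\rm tot} \in \mA^{\text{non-deg}}_{\ket{\bar{0}},\ket{W}}$, and trivially $\mA^{\text{deg}}_{\ket{\bar{0}},\ket{W}} \subseteq \mA^{\text{non-deg}}_{\ket{\bar{0}},\ket{W}}$.

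The noteworthy content of the proposition is therefore not that $\hat{N}_{\rm tot}$ generates the lifted commutant---one lifting direction is always enough to split one pair of eigenvalues---but rather that no further multiplicative closure is needed beyond what is already built into $\mA^{\text{deg}}_{\ket{\bar{0}},\ket{W}}$. There is no substantive obstacle in the argument; the only subtlety worth making precise is that $\lambda_0,\lambda_W$ depend $\mathbb{C}$-linearly on $A$, so the splitting $A \mapsto (B,\lambda_W-\lambda_0)$ is a well-defined linear decomposition. This result also makes transparent why, in the structural form of Eq.~(\ref{eq:HW}), the only coefficient beyond $\Omega$ needed to access arbitrary non-degenerate eigenvalues on the $\{\ket{\bar{0}},\ket{W}\}$ sector is $\omega$, the coefficient of $\hat{N}_{\rm tot}$.
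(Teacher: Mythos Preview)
Your proposal is correct and follows essentially the same approach as the paper: both identify the block structure of $\mA^{\rm deg}$ (scalar on $V=\text{span}\{\ket{\bar{0}},\ket{W}\}$, arbitrary on $V^\perp$) and of $\mA^{\rm non-deg}$ (diagonal on $V$, arbitrary on $V^\perp$), then observe that $\hat{N}_{\rm tot}$ supplies the single missing direction $\lambda_W-\lambda_0$. The paper phrases this via the double-commutant theorem and a ``pictorial matrix'' description, while you carry out the explicit eigenvalue-correction $B=A-(\lambda_W-\lambda_0)\hat{N}_{\rm tot}$; these are the same argument at different levels of abstraction.
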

\begin{proof}
We may write a pictorial matrix proof, where we know that the bond algebra $\mA^{\text{deg}}_{\ket{\bar{0}},\ket{W}}$ (closed under Hermiticity) possesses the commutant $\mathcal{C}^{\rm deg}_{\ket{\bar 0},\ket{W}}$. By DCT, the commutant of $\mathcal{C}^{\rm deg}_{\ket{\bar 0},\ket{W}}$ is $\mA^{\text{deg}}_{\ket{\bar{0}},\ket{W}}$.
In terms of matrices in the many-body Hilbert spaces, elements of $\mA^{\text{deg}}_{\ket{\bar{0}},\ket{W}}$ are block-diagonal acting as scalars in the $\ket{\bar{0}}, \ket{W}$ scar space and as arbitrary matrices in the orthogonal complement.
Since there are only two states in the degeneracy breaking QMBS system, adding any Hermitian element that breaks this degeneracy to the algebra $\mA^{\text{deg}}_{\ket{\bar{0}},\ket{W}}$ immediately gives the space of the commutant of $\mathcal{C}^{\rm non-deg}_{\ket{\bar 0},\ket{W}}$ via the DCT.
So we can conclude that the linear span of $\mA^{\text{deg}}_{\ket{\bar{0}},\ket{W}}$ and $N_{\text{tot}}$ is the commutant of $\mathcal{C}^{\rm non-deg}_{\ket{\bar 0},\ket{W}}$.
\end{proof}
%

%
\section{Proof of Theorem~\ref{thm:HW}}
\label{sec:proof}
We now provide a proof of Thm.~\ref{thm:HW} by obtaining conditions on the existence of operators of the form Eq.~(\ref{eq:Lbasis}) in the expansion of any  extensive local operator with $\ket{W}$ as an eigenstate.
We first do not impose any Hermiticity, and derive the conditions in Tab.~\ref{tab:Lterms},  and later discuss which terms are further forbidden when the Hermiticity is imposed,  which leads to the conditions in Tab.~\ref{tab:HIterms}.
\subsection{Conditions on general non-Hermitian terms}\label{sec:nHconditions}
\subsubsection{Operators with $n\geq1$, $m= 0$ are forbidden}
\label{sec:creationforbidden}
Given the condition in Eq.~\eqref{eq:conditionW}, we will first show that it is impossible for $\mathcal{O}$ to contain elements with pure creation operators, i.e., a term with basis element $s^\dag_{j_1}...s^\dag_{j_n}$ 
with $n\geq 1$, $m=0$, and $j_1,...,j_n\in A$ with $A$ being a contiguous region of size $|A|\leq R$.
Let us first consider the simpler case of $n\geq 2$ and $m=0$, such that
\begin{equation}
s^\dag_{j_1}...s^\dag_{j_n}\ket{W}=\frac{1}{\sqrt{N}}\sum_{l=1}^N s^\dag_{j_1}...s^\dag_{j_n}s_l^\dag\ket{\bar{0}},
\end{equation}
where we see that $s^\dag_{j_1}...s^\dag_{j_n}$ produces an $(n+1)$-particle state.
In order for there to be a linear combination of operators, including $s^\dag_{j_1}...s^\dag_{j_n}$, where $\ket{W}$ is an eigenstate, all these higher-particle states must cancel.
However, consider the specific non-vanishing product state that appears in the above sum
$s^\dag_{j_1}...s^\dag_{j_n}s_l^\dag\ket{\bar{0}}$
with $j_1,...,j_n\in A$, but $(|l-j_i|\mod N)\gg R$ $\forall j_i\in A$.\footnote{Here and below, while we write conditions such as distances $\gg R$, for the presented arguments to work it is sufficient to require $N > 3R$. However, we will not consider the situation $R$ being dependent on $N$ in this work.}
Such a state cannot be cancelled using any other linear combination of range-$R$ terms in $\mathcal{L}$ acting on the $W$ state, except for itself $-s^\dag_{j_1}...s^\dag_{j_n}$, which means that $\ket{W}$ can never be an eigenstate in the presence of a non-zero $s^\dag_{j_1}...s^\dag_{j_n}$.
So we see that $\mathcal{O}$ cannot contain a non-zero term with $n\geq 2$ and $m=0$.
Now, consider the special case of $n=1$ and $m=0$, where we similarly see that, upon the application of $c_j s^\dag_j$ (where $c_j\in \mathbb{C}$ is some coefficient), we obtain a non-vanishing piece of the form
$\frac{c_j}{\sqrt{N}} s^\dag_{j}s_l^\dag\ket{\bar{0}},$
with $(|j-l|\mod N)\gg R$, which must be cancelled in order for $\ket{W}$ to be an eigenstate.
Notice that unlike the cases with $n\geq 2$ where we were immediately forced to have $c^{j_1...j_n}=0$, one can potentially cancel this specific product term with exactly one other operator: $-c_j s^\dag_l$.
However, now similar to $l$, consider a third site $p$ with $(|p-j|\mod N)\gg R$ and $(|p-l|\mod N)\gg R$, where we must similarly have a term $-c_j s^\dag_p$.
The existence of both of these terms $-c_j s^\dagger_p$ and $-c_j s^\dag_l$ leads to the creation of the state
$-\frac{2c_j}{\sqrt{N}} s^\dag_{l}s_p^\dag\ket{\bar{0}}$.
This can no longer can be cancelled by any of the available terms since we have used up all our degrees of freedom in setting the other two terms to zero.
Hence $c_j=0$, and $\mathcal{O}$ cannot have any terms with $n\geq 1,m=0$, as is summarized in the first row of Tab.~\ref{tab:Lterms}.
Note that already at this step of the argument we can claim that for such finite-range extensive local operators that have $\ket{W}$ as an eigenstate, the state $\ket{\bar{0}}$ must also be an eigenstate (in fact, annihilated by all present basis terms in $\mathcal{O}$ other than the identity).
\subsubsection{Conditions on operators with other $n$ and $m$}
We will now exhaustively analyze terms with all other possible $n$ and $m$ combinations.
Operators with $n\geq 0$ creation operators combined with $m\geq 2$ annihilation operators (e.g., $c^{j_1 j_2}_{k_1 k_2}s^\dag_{j_1} s^\dag_{j_2} s_{k_1} s_{k_2}$) always annihilate $\ket{W}$ such that $\ket{W}$ is always an eigenstate with eigenvalue $0$, as seen in the second row of Tab.~\ref{tab:Lterms}.
All such terms with $n\geq 0$ and $m\geq 2$ are by themselves finite-range annihilators of $\ket{W}$.
Now, two cases remain: First, let us analyze the case where $m=1$ annihilation operator is combined with $n\geq2$ creation operators such as $c^{j_1...j_n}_k s_{j_1}^\dag ... s_{j_n}^\dag s_{k}$. 
Since this changes the total particle number to $n$, $\ket{W}$ can only be an eigenstate of such an operator if there exists other terms that cancel it.
This term, when applied to $\ket{W}$, gives
$$c^{j_1...j_n}_ks_{j_1}^\dag ... s_{j_n}^\dag s_{k}\ket{W}=\frac{c^{j_1...j_n}_k}{\sqrt{N}}s_{j_1}^\dag ... s_{j_n}^\dag\ket{\bar{0}}\quad,$$
which can only be cancelled by terms that create $n$-particle states, and the only such possible terms are those with $n$ creation operators and $m=1$ annihilation operators or $n-1$ creation operators and $m=0$ annihilation operators.
Since we have already ruled out the existence of $m=0$ terms, it must necessarily be the case that $ s_{j_1}^\dag ... s_{j_n}^\dag \left(\sum_k c^{j_1...j_n}_k s_{k}\right)$ with $\sum_k c^{j_1...j_n}_k=0$.
Since each element in $\mathcal{L}$ have finite range $R$, the condition $\sum_k c^{j_1...j_n}_k=0$ must be satisfied locally for $k\in X$ with $|X|\leq 2R$.
By construction, all such terms are naturally spanned by terms on size $2R$ or less.
For $n=0$, $m=1$, a similar logic applies, since $c_k s_{k}\ket{W} = \frac{c_k}{\sqrt{N}} \ket{\bar{0}}$, which can only be cancelled when $\sum_k c_k=0$.
The result of this analysis is given in the third and fourth row of Tab.~\ref{tab:Lterms}.
All such terms can be decomposed as $c_k = d_k - d_{k-1}$ on a PBC chain, hence $\sum_k c_k s_k = \sum_k d_k (s_k - s_{k+1})$, i.e., spannable by range-2 such terms $s_k - s_{k+1}$ that are annihilators of $\ket{W}$.
Finally, let us analyze the case when $n=m=1$ such as the term $c^j_k  s_j^\dag s_k$, where we have reduced the problem to a simple hopping matrix. For the condition in Eq.~\eqref{eq:conditionW} to be satisfied, the applied operators must not change the particle number of $\ket{W}$, where the only such term are other hopping terms such that $\sum_k c^j_{k} = \lambda,\forall j$, as is shown in the fifth row of Tab.~\ref{tab:Lterms}.
\subsection{Conditions on Hermitian terms}\label{sec:Hconditions}
Now, let us reconsider the effects of Hermiticity and locality.
Let us define Hermitian $h_X$ terms, used in Eq.~\eqref{eq:HW}, which are type I operators.
In the hard-core boson language, these are made of finite-range $R$ hoppings, i.e.,
\begin{equation}
h_X=\sum_{{\rm supp}(L)\in X} c_L L+h.c.,
\end{equation}
where $c_L\in\mathbb{C}$, $L$ (given in Def.~\ref{def:opbasis}) is supported on at most $R$ contiguous sites, and the entire operator is supported on a region of size $|X|\leq R_{\text{max}}$ for some finite $R_{\text{max}}$, and
\begin{align}
    h_X\ket{W}=h_X\ket{\bar{0}}=0\quad.
    \label{eq:annihilationcond}
\end{align}
Note that we will show in the following analysis that one can safely choose $R_{\text{max}}=2R$.
\subsubsection{Operators forbidden due to Hermitian conjugation}
Re-analyzing the terms in Tab.~\ref{tab:Lterms}, the Hermiticity requirement implies that $m\geq 1$ and $n=0$ terms are forbidden since it's Hermitian conjugate involves only creation operators, which were prohibited from the analysis in Sec.~\ref{sec:nHconditions}. 
Similarly, the $n\geq2$ and $m=1$ also encapsulates the $n=1$ and $m\geq 2$ case due to Hermiticity.
Since the Hermitian conjugated part automatically annihilates the state, all extensive-local Hermitian operators with $n\geq2$ and $m=1$ still remain spannable by finite range $h_X$ terms with $n\geq2$ and $m=1$.
Similarly, terms with $n,m\geq 2$ still remain annihilators of $\ket{W}$, and are fully spannable by their $h_X$ counterparts on finite $|X|\leq 2R$.
We summarize the above statements in the first two rows of Tab.~\ref{tab:Lterms}, and preemptively state that this is the exhaustive list which span all type I operators, as will be shown in subsequent sections where we show that $\hat{N}_{\rm tot}$ and $H_{\rm ImHop}$ are type III and type II operators, respectively.
\subsubsection{Operators with $n = m = 1$ and deriving $\hat{N}_{\tot}$ and $H_{\rm ImHop}$}
The question whether terms in $n=m=1$ are spannable by Hermitian terms $h_X$ with $X\leq R_{\max}$, given in Tab.~\ref{tab:HIterms}, remains.
Let us analyze this case for a general $n=m=1$ operator $\mathcal{O}$ that satisfies $\mathcal{O}\ket{W}=\lambda\ket{W}$ with $\lambda\in \mathbb{R}$ due to the Hermiticity of $\mathcal{O}$. Such an operator can be written as
    $$\mathcal{O}=\sum_{\substack{|j-k|\leq R}} c_{j,k} s_j^\dag s_k
    \quad,$$
    where $c_{j,k}=c_{k,j}^*$ for Hermiticity.
    These coefficients must satisfy $\sum_k c_{j,k}=\lambda$ $\forall j$.
    We may consider the real and imaginary components of $c_k$ separately, i.e.,
    \begin{align}
        \sum_k {\rm Re}(c_{jk}) = \lambda ~, \qquad \sum_k {\rm Im}(c_{jk}) = 0,\nonumber
    \end{align}
    such that we may consider decomposing $\mathcal{O}=\mathcal{O}_{\rm Re}+\mathcal{O}_{\rm Im}$, where $\mathcal{O}_{\rm Re}$ and $\mathcal{O}_{\rm Im}$ contain the real and imaginary coefficients, respectively.
    
    Let us first consider $\mathcal{O}_{\rm Re}$, by assuming $c_{jk}\in \mathbb{R}$ which implies a symmetric matrix $c_{jk}=c_{kj}$.
    In this case, all possible finite-range hoppings can be built from type I Hermitian operators $\{P_{j,j+\alpha}^{\rm Re}\}$, given by
    \begin{align}
        P_{j,j+\alpha}^{\rm Re}&:= s_j^\dag s_{j+\alpha}+s_k^\dag s_j-s_j^\dag s_j-s_{j+\alpha}^\dag s_{j+\alpha}\quad,
        \label{eq:PRe}
    \end{align}
    for $1<\alpha\leq R$.
    Note that $P_{jj+\alpha}^{\rm Re}\ket{W}=0$. These terms are all linearly-independent of each other.
    To show that $\mathcal{O}_{\rm Re}$ can be spanned by $\{P_{j,j+\alpha}^{\rm Re}\}$, consider the following procedure where we can `clean up' the $c_{jk}$ matrix: one can first remove the maximum range $R$ jumps by adding the corresponding $-c_{j,j+R}P_{j,j+R}^{\rm Re}$ terms.
    This term effectively removes the range $R$ jumps at the expense of adding entries in the diagonal at $c_{jj}$ and $c_{j+R \,j+R}$.
    In fact, since the above Hermitian operators are linearly-independent, we can repeat this removal process for all range $\alpha$ non-zero $c_{jj+\alpha}>0$ jumps with $R\geq\alpha\geq 1$ by adding a $-c_{j,j+\alpha}P_{j,j+\alpha}^{\rm Re}$ term to cancel the jump term while adding entries in the diagonal at $c_{j,j}$ and $c_{j+\alpha, j+\alpha}$.
    At the end of this process, we will only have terms in the diagonal $c_{j,j}$ entries. However, recall that at every step, we obeyed $\sum_k c_{j,k}=\lambda$ (since the coefficients in $P_{j,k}^{\rm Re}$ obeys $\sum_k c_{j,k}=0$), which implies that $c_{j,j}=\lambda$. This means that we have found a way to decompose any $\mathcal{O}_{\rm Re}$ as
    \begin{align}
        \mathcal{O}_{\rm Re}
        =\lambda \hat{N}_{\rm tot}+\sum^N_{\substack{j=1\\0<\alpha\leq R}} c_{j,j+\alpha}P_{j,j+\alpha}^{\rm Re}\quad,
        \label{eq:ORehop}
    \end{align}
    for $c_{j,k}\in \mathbb{R}$.
    Now, consider $O_{\rm Im}$ by setting $c_{j,k}$ to be purely imaginary such that we have an antisymmetric matrix $c_{j,k}=-c_{k,j}$. In this case the type I Hermitian operator basis $\{P_{j,j+\alpha}^{\rm Im}\}$ is given by
    \begin{align}
        P_{j,j+\alpha}^{\rm Im}&:= i s_j^\dag s_{j+\alpha}-i\sum_{n=1}^{\alpha}s_{j-1+n}^\dag s_{j+n}+h.c.\quad,
        \label{eq:PIm}
    \end{align}
    where $2\leq\alpha\leq R$.
    Notice that, unlike the real case, there is no non-zero Hermitian term such as $s_j^\dag s_j$.
    We may once again aim to reduce the size of $\mathcal{O}_{\rm Im}$ by the procedure of removing the range $R$ jumps by adding $-c_{j,j+R}P_{j,j+R}^{\rm Im}$ at the expense of adding range 1 hoppings (instead of diagonal terms as was the case in $\mathcal{O}_{\rm Re}$).
    Again, due to the linear-independence of the basis, we can follow this line of logic and remove all $>1$ jumps by replacing them with range 1 hoppings.
    Due to $\sum_k c_{j,k}=0$ and $c_{j,k}=-c_{k,j}$, the only possibility for what remains is
    \begin{align}
        H_{\rm ImHop}= \frac{i}{2}\sum_j \left(s_j^\dag s_{j+1}-s_{j+1}^\dag s_j\right)\quad,
    \end{align}
    up to some coefficient $t\in\mathbb{R}$.
    It follows that $\mathcal{O}_{\rm Im}$ can always be written as
    \begin{align}
        \mathcal{O}_{\rm Im}=t H_{\rm ImHop}+\sum_{\substack{j,0<\alpha\leq R}} c_{j,j+\alpha}P_{j,j+\alpha}^{\rm Im}\quad.
        \label{eq:OImhop}
    \end{align}
    This analysis has exhausted all operator possibilities and thus, putting together Eq.~\eqref{eq:ORehop} and \eqref{eq:OImhop} with Tab.~\ref{tab:HIterms}, we have established the general form of $H$ to be the one in Eq.~\eqref{eq:HW}, where we have reincluded the identity operator. As promised, we have shown that $h_X$ with $|X|\leq R_{\text{max}}$ with $R_{\text{max}}=2R$ can additively span all possible extensive-local Hermitian operators, except $\hat{N}_{\rm tot}$ and $H_{\rm ImHop}$, which we will show in the main text cannot be spanned with Hermitian $h_X$ even when $|X|<L$.
    Note that the Hamiltonian $H$ given in an undetermined basis, such as the one in Eq.~\eqref{eq:localHam}, can always be rewritten in terms of creation and annihilation operators (i.e. the hard-core boson language) via the basis in Definition~\ref{def:opbasis}.
    This process may result in an identity element of $O(N)$, but all hard-core boson operators will possess a finite norm, and thus the above analysis holds. \qed
    %

%
\section{$W$ state as a ground state}\label{app:nonuniqueGS}
\subsubsection{Prop.~\ref{prop:Wgroundstate}: $\ket{W}$ is never the unique ground state of any local Hamiltonian}
\begin{proof}
    W.l.o.g., we can set $\Omega = 0$ in Eq.~(\ref{eq:HW}).
    We can prove Eq.~(\ref{eq:W2energy}) by first noticing that $\ket{W^2}$ is an exact eigenstate of $\hat{N}_{\rm tot}$ and $H_{\rm ImHop}$ with eigenvalues $2$ and $0$ (note that the latter is somewhat non-trivial, but known through computations of the Dzyaloshinskii-Moriya term, e.g., in Refs.~\cite{Mark2020Eta, Moudgalya_2022, PRXQuantum.2.020330}), respectively, such that
    \begin{equation}
    \bra{W^2}\left(\omega\hat{N}_{\rm tot}+t H_{\rm ImHop}\right)\ket{W^2}=2\omega.
    \end{equation}
    So we are left with evaluating
    $\bra{W^2}\sum_{X, |X|\leq R_{\text{max}}} h_X\ket{W^2}$
    with $h_X$'s that are of bounded norm and annihilate $\ket{\bar{0}}$ and $\ket{W}$, given by terms in Tab.~\ref{tab:HIterms}, and there are at most an $O(N)$ sum of them.
    Consider a given term $h_X$.
    Since it has range $|X|\leq R_{\text{max}}$, it helps to perform a Schmidt decomposition of $\ket{W^2}$ for the partitioning into the region $X$ and its compliment $X^c$, i.e.,
    \begin{equation}
    \label{eq:schmidtW2}
    \ket{W^2}=\frac{1}{\sqrt{\binom{N}{2}}}\bigg[\sqrt{|X||X^c|}\ket{W}_X\otimes \ket{W}_{X^c} 
    + \sqrt{\binom{|X^c|}{2}}\ket{\bar{0}}_X\otimes\ket{W^2}_{X^c}+\sqrt{\binom{|X|}{2}}\ket{W^2}_X\otimes \ket{\bar{0}}_{X^c}\bigg],
    \end{equation}
    where $\ket{\psi}_{X}$ is the $\ket{\psi}\in\{\ket{W},\ket{W^2},\ket{\bar{0}}\}$ state defined on the $X$ region with the appropriate normalization, and similarly for $\ket{\psi}_{X^c}$. 
    Now, notice that all $h_X$ terms, by construction, annihilate the first two terms in Eq.~(\ref{eq:schmidtW2}), leaving only the last term to possibly contribute a non-zero expectation value for $h_X$.
    Since the norm of $h_X$ is bounded to be $O(1)$, this means that
    \begin{equation}
    \left| \bra{W^2}h_X\ket{W^2} \right| \leq \norm{h_X}\frac{\binom{|X|}{2}}{\binom{N}{2}}=O(N^{-2}),
    \end{equation}
    where $\norm{\cdots}$ is the operator norm and $|X|$ is finite.
    Hence, the expectation value of an $O(N)$ sum of $h_X$-type terms is bounded by
    \begin{equation}
    \big| \bra{W^2}\sum_{|X|  \leq R_{\text{max}}} h_X\ket{W^2} \big| \leq O(N^{-1}).
    \end{equation}
    Putting all the above pieces together, we arrive at Eq.~\eqref{eq:W2energy}.
\end{proof}

\section{Proofs of Type II-ness}\label{app:typeIIness}
\subsubsection{Prop.~\ref{prop:HII}: $H_{\rm ImHop}$ is type II --- proof via $W_q$ ``dispersion''}
\label{app:HImtypeII}
\begin{proof}
Consider a one-particle state with wavevector $q$,
\begin{equation}
\ket{W_q} 
\defn \frac{1}{\sqrt{N}} \sum_j e^{-i q j} s_j^\dagger \ket{\bar{0}}  
= e^{-i q \sum_j j \hat{n}_j} \ktW ~,
\end{equation}
with $q = \frac{2\pi m}{N},~ m \in \mathbbm{Z}$, and $\hat{n}_j = s_j^\dag s_j$.
This state can be also viewed as a twisted $\ktW$ state.

Suppose $h_X$ is a strictly local annihilator of $\ktW$, with support inside a finite contiguous region $X$ of size $|X| \leq R_{\text{max}}$ for some $R_{\text{max}}$.
Denoting the complement to $X$ in the chain of length $N$ as $X^c$, with $|X^c| = N - |X|$, we can write
\begin{equation}
\ktW = \sqrt{\frac{|X|}{N}} \ktW_X \otimes \ktO_{X^c} + \sqrt{\frac{|X^c|}{N}} \ktO_X \otimes \ktW_{X^c} ~,
\label{eq:WXXc}
\end{equation}
with naturally defined normalized states $\ktW_A, \ktO_A$ on subregions $A = X, X^c$.
Clearly,
\begin{equation}
h_X \ktW = 0 \quad \implies \quad h_X \ktW_X = 0~, \quad h_X \ktO_X = 0~.
\end{equation}

We will show that the expectation value of a single $h_X$ in the twisted state $\ket{W_q}$ scales as $\sim \frac{q^2}{N}$ at small $q$.
We first write, similarly to Eq.~(\ref{eq:WXXc}),
\begin{equation}
\ket{W_q} = \sqrt{\frac{|X|}{N}} \ket{W_q}_X \otimes \ktO_{X^c} + \sqrt{\frac{|X^c|}{N}} \ktO_X \otimes \ket{W_q}_{X^c} ~,
\end{equation}
with naturally defined normalized states $\ket{W_q}_A$, $A = X, X^c$.
Since $h_X \ktO_X = 0$ and $\bra{W_q}_{X^c} \ket{\bar{0}}_{X^c} = 0$, we have:
\begin{equation}
\bra{W_q} h_X \ket{W_q} = \frac{|X|}{N} \bra{W_q}_X h_X \ket{W_q}_X ~.
\end{equation}
Suppose $j_0$ is a location inside the region $X$, e.g., in the middle of the region, then we can write $\ket{W_q}_X = e^{-i q \sum_{j \in X} j \hat{n}_j} \ket{W}_X = e^{-i q j_0} e^{-i q \sum_{j \in X} (j-j_0) \hat{n}_j} \ket{W}_X$, since $\sum_{j \in X} \hat{n}_j \ket{W}_X = \ket{W}_X$.
Denoting $\hat{B}_X = \sum_{j \in X} (j-j_0) \hat{n}_j$, we have
\begin{equation}
\begin{aligned}
\bra{W_q}_X h_X \ket{W_q}_X
= \bra{W}_X e^{i q \hat{B}_X} \, h_X \, e^{-i q \hat{B}_X} \ket{W}_X \approx \bra{W}_X \left( h_X + i q \left[ \hat{B}_X, h_X \right] - \frac{q^2}{2} \left[ \hat{B}_X, \left[ \hat{B}_X, h_X \right] \right] \right) \ket{W}_X ~,
\end{aligned}
\label{eq:Wq_hX_Wq_expand}
\end{equation}
where we have assumed that $q$ is small and have used Baker-Campbell-Hausdorff expansion including $O(q^2)$ terms.
Using $h_X \ket{W}_X = 0$ and, {\it crucially, hermiticity} of $h_X$, we see that $O(q^0)$ and $O(q^1)$ terms vanish.
Taking $j_0$ as a mid-point of $X$ and using $\|\hat{n}_j\| = 1$, we have $\| \hat{B}_X \| \leq \sum_{j \in X} |j-j_0| \approx |X|^2/4$ (ignoring fine details of the discrete sums).
Since $\| \ket{W}_X \| = 1$ and we assume that $h_X$ has a bounded norm, we have $\bra{W_q}_X h_X \ket{W_q}_X \sim q^2$ and hence $\bra{W_q} h_X \ket{W_q} \sim \frac{q^2}{N}$, as claimed.
We can obtain a more accurate estimate as far as dependence on $|X|$ is concerned by using the detailed structure of $\hat{B}$ and $\ktW_X$.
First, since $h_X \ktW_X = 0$, we have
$\bra{W}_X [\hat{B}, [\hat{B}, h_X]] \ktW_X = -2 \bra{W}_X \hat{B} h_X \hat{B} \ktW_X$.
Next, we have
\begin{equation}
\begin{aligned}
& |\bra{W}_X \hat{B} h_X \hat{B} \ktW_X | \leq \| h_X \| \| \hat{B} \ktW_X \|^2 = \| h_X \| \frac{1}{|X|} \sum_{j \in X} (j-j_0)^2 \approx \| h_X \| \frac{|X|^2}{12},
\end{aligned}
\end{equation}
where for the sake of illustration, we are doing the discrete sum only schematically, appropriate for largish $|X|$ but easy to extend to small $|X|$.
Putting everything together, we obtain
\begin{equation}
\left| \bra{W_q}_X h_X \ket{W_q}_X \right| \lesssim \frac{q^2 |X|^2 \|h_X\|}{12}
\quad \implies \quad \left| \bra{W_q} h_X \ket{W_q} \right| \lesssim \frac{q^2 |X|^3 \|h_X\|}{12 N} ~.
\label{eq:bound_Wq_hX_Wq}
\end{equation}

For an example to develop some familiarity, consider a Hermitian strictly local annihilator
\begin{align}
h_X = \frac{1}{2} (\ket{10} - \ket{01})(\bra{10} - \bra{01}&)_{j,j+m}=\frac{1}{2}(s_j^\dag s_j+s_{j+m}^\dag s_{j+m}-s_{j+m}^\dag s_j-s_j^\dag s_{j+m}) ~, \nonumber\\
\bra{W_q} h_X \ket{W_q} &= \frac{1 - \cos(qm)}{N} \approx \frac{q^2 m^2}{2N} ~.
\end{align}
For $m = 1$, this nearest-neighbor singlet projector (equivalent to ferromagnetic Heisenberg interaction), is closely related to terms in $H_{\text{ReHop}}$:
$h_X = (n_j + n_{j+1} - s_j^\dagger s_{j+1} - s_{j+1}^\dagger s_j )/2 - n_j n_{j+1}.$
The scaling at small $q$ is in agreement with the general bound.

On the other hand, consider a non-Hermitian strictly local annihilator
\begin{align}
g_X = \frac{i}{2} ( \ket{10} + \ket{01})(  \bra{01}-  \bra{10})_{j,j+m}&=\frac{i}{2}(s_j^\dag s_{j+m}-s_{j+m}^\dag s_j-s_j^\dag s_j +s^\dag_{j+m} s_{j+m})\nonumber\\
\bra{W_q} g_X \ket{W_q} &= \frac{\sin(qm)}{N} \approx \frac{qm}{N} ~. 
\end{align}
Going over the derivation of the $O(q^2)$ scaling in the Hermitian case, we see that the key difference in the non-Hermition case is that an $O(q)$ contribution would remain in an analog of Eq.~(\ref{eq:Wq_hX_Wq_expand}) because $g_X^\dagger \ket{W}_X \neq 0$.
We are now ready to argue that the translationally-invariant pure imaginary hopping Hamiltonian $H_{\text{ImHop}}$ is not type I Hamiltonian in the following sense (which is stronger/more sharply specified than our general definition of type I in Sec.~\ref{subsec:classes}):
We say that a Hamiltonian is type I on a chain of length $N$ if there exist fixed numbers $R_{\text{max}}$, $M$, $C$ such that its $N$-site incarnation for any $N$ can be written as 
\begin{equation}
H_{\text{I}}^{(N)} = \sum_{|X| \leq R_{\text{max}}}^{N_{\text{terms}} \leq C N} h_X ~,
\end{equation}
where $h_X$ are strictly local Hermitian annihilators of $\{ \ktO, \ktW \}$ of range $|X| \leq R_{\text{max}}$ and of norm bounded by $\|h_X\| \leq M$, while $N_{\text{terms}}$ is the total number of terms in the sum and is bounded by $C N$, with $R_{\text{max}}$, $M$, and $C$ independent of $N$.
Indeed, in this case, using Eq.~(\ref{eq:bound_Wq_hX_Wq}), we have
\begin{align}
\left| \bra{W_q} H_{\rm I}^{(N)} \ket{W_q} \right| \lesssim \frac{C}{12} q^2 R_{\text{max}}^3 M ~.
\label{eq:limitHIWq}
\end{align}
On the other hand, $\ket{W_q}$ is an exact eigenstate of $H_{\rm ImHop}$ with expectation value
\begin{align}
\bra{W_q} H_{\rm ImHop} \ket{W_q} = \sin(q) ~.
\end{align}
If we assume that $H_{\text{ImHop}}$ is type I (with fixed but otherwise arbitrary $R_{\text{max}}$, $M$, and $C$), taking $q = 2\pi/N$ and sufficiently large $N$, we would have a contradiction.
Hence $H_{\text{ImHop}}$ is not type I.
On the other hand, $H_{\rm ImHop}$ is type II since we can write it in terms of the non-Hermitian (superscript `\text{non-herm}') local operators
\begin{equation}
P_j^{\text{non-herm}} = \frac{i}{2} (s_j^\dag s_{j+1} -  s_{j+1}^\dag s_j -  s_j^\dag s_j +  s_{j+1}^\dag s_{j+1}) ~,\end{equation}
where $P_j^{\text{non-herm}} \ket{W} = 0$, however $(P_j^{\text{non-herm}})^\dag \ktW \neq 0$.
Using these non-Hermitian annihilators, we see that $H_{\text{ImHop}}$ is given by
\begin{equation}
H_{\text{ImHop}}= \sum_j P_j^{\text{non-herm}} ~.
\end{equation}

\end{proof}

\subsubsection{Prop. \ref{prop:HImhop2}: $H^{(2)}_{\rm ImHop}$ is type II ---proof via boundary action}
\label{app:HImhop2}

\begin{proof}
The following proof follows exactly the same steps as the proof in Sec.~\ref{sec:CuttingHImHop} for the case with $\ket{\bar{0}}$ and $\ket{W}$ as eigenstates.
A natural restriction of $H_{\rm ImHop}^{(2)}$ to a segment $\Lambda = [\ell, \dots, r]$ is
\begin{equation}
\HImHopLambda^{(2)} = \frac{i}{2} \sum_{j=\ell}^{r - 2} (s_j^\dag s^\dag_{j+1} s_{j+1}s_{j+2} - s_{j+2}^\dag s_{j+1}^\dag s_{j+1} s_j).
\end{equation}
It is easy to verify that
\begin{equation}
\HImHopLambda^{(2)} \ket{W^2} = \frac{i}{2} (\hat{n}_\ell \hat{n}_{\ell+1} - \hat{n}_{r-1} \hat{n}_r) \ket{W^2}. 
\label{eq:HImHop_Lambda_ktW2}
\end{equation}
That is, the action of $\HImHopLambda$ on $\ket{W^2}$ can be represented as a boundary action with non-Hermitian operators near the corresponding boundaries.
However, we will show by contradiction that it cannot be represented using Hermitian boundary operators.

We will demonstrate by contradiction that no Hermitian operators $A_\ell$ and $B_r$, acting on finite regions near $\ell$ and $r$ respectively, can realize the boundary action of $H^{(2)}_{{\rm ImHop},\Lambda}$ on $\ket{W^2}$ in Eq.~\eqref{eq:HImHop_Lambda_ktW2}:
\begin{align}
     \frac{i}{2} (\hat{n}_\ell \hat{n}_{\ell+1} - \hat{n}_{r-1} \hat{n}_r) \ket{W^2}=(A_\ell+B_r)\ket{W^2}\quad.
     \label{eq:HImhop2LambdaAB}
\end{align}
Here, we assume that the ranges of $A_\ell$ and $B_r$ are bounded by a fixed number $R_{\text{max}}$, while the size of $\Lambda$ is allowed to be large.
Specifically, denoting the supports of $A_\ell$ and $B_r$ as $X_\ell$ and $X_r$ respectively, and assuming w.l.o.g.\ that $\ell, \ell+1 \in X_\ell$ and $r-1, r \in X_r$, it suffices for $\Lambda$ to be large enough such that $X_\ell$ and $X_r$ are disjoint. For example, if the ranges of $A_\ell$ and $B_r$ are bounded by $R_{\text{max}}$, it suffices to have $|r - \ell| \geq 2 R_{\text{max}}$.

Let us assume that such Hermitian $A_\ell$ and $B_r$ exist and calculate overlap of both sides of Eq.~\eqref{eq:HImhop2LambdaAB} with $\ket{W^2}_{X_\ell} \otimes \ktO_{X_\ell^c}$, where $X_\ell^c$ is the compliment of $X_\ell$. We obtain
\begin{equation}
c_1 \frac{i}{2}\bra{W^2}_{X_\ell}  \hat{n}_\ell \hat{n}_{\ell+1} \ket{W^2}_{X_\ell}= c_1 \bra{W^2}_{X_\ell} A_\ell\ket{W^2}_{X_\ell}+c_1 \bra{\overline{0}}_{X_\ell^c} B_r \ktO_{ X_\ell^c}  ~,
\end{equation}
with $c_1=\sqrt{\binom{|X_\ell|}{2}/\binom{N}{2}}$,
where we have used $\hat{n}_{r-1}\hat{n}_r\ket{\bar{0}}_{X_\ell^c}=0$. Now, notice that the R.H.S.\ is real due to the assumed Hermiticity of $A_\ell$ and $B_r$, whereas the L.H.S.\ is purely imaginary, since $\bra{W^2}_{X_\ell}  \hat{n}_\ell \hat{n}_{\ell+1} \ket{W^2}_{X_\ell}\neq0$ and is also real.
Thus, the assumption that $\HImHopLambda^{(2)}$ on $\ket{W^2}$ has boundary action represented by Hermitian operators is not valid, and hence it is a type-II Hamiltonian.
Now, to show that $H_{\rm ImHop}^{(2)}$ is in a different type II equivalence class than $H_{\rm ImHop}$, observe that $\ket{W_q}$, defined in Eq.~\eqref{eq:Wqboost}, has eigenvalue of
\begin{align}
    H_{\rm ImHop}\ket{W_q}=\sin q \ket{W_q},
\end{align}
while we deduced in Eq.~(\ref{eq:limitHIWq}) that for type I Hamiltonians [of norm $O(N)$] we have
$\bra{W_q} H_{\text{type-I}} \ket{W_q} = O(q^2).$
On the other hand, The eigenvalue of $H_{\rm ImHop}^{(2)}$ for the state $\ket{W_q}$ is
\begin{align}
    H_{\rm ImHop}^{(2)}\ket{W_q}=0\quad.
\end{align}
Thus, we see that there is no way for an addition of a type I operator to $H_{\rm ImHop}^{(2)}$ to ever match the linear $q$ dependence in the eigenvalue of $H_{\rm ImHop}$, meaning that $H_{\rm ImHop}^{(2)}$ and $H_{\rm ImHop}$ are not in the same equivalence class.
\end{proof}
We can note an immediate generalization of the first part of the proof:
\begin{equation}
H_{\rm ImHop}^{(p)} := i\sum_j (s_j^\dag s^\dag_{j+1} \dots s^\dag_{j+p-1} s_{j+1} s_{j+2} \dots s_{j+p} - \text{H.c.})
= i\sum_j (\ketbra{11 \dots 10}{01 \dots 11} - \text{H.c.})_{j,\dots,j+p} 
\label{eq:HImHop_p}
\end{equation}
is a type II Hamiltonian in when $\ket{W^{p'}}$ states are eigenstates for $p \leq p'$.
We can show that $H_{\rm ImHop}^{(p)}$ annihilates all $\ket{W^m}$ states, $m = 0, 1, \dots, N$ (i.e., all Dicke states, or equivalently the ferromagnetic tower of states).
This is clear for $m < p$, while for $m \geq p$ one can use same methods that show that the ferromagnetic tower is annihilated by the DMI Hamiltonian~\cite{Mark2020Eta}, which is essentially $H_{\rm ImHop}^{(1)}$.
We conjecture that $\{ H_{\rm ImHop}^{(p')}, 1 \leq p' \leq p\}$ are independent type II Hamiltonians for families of models that have the $\ket{W^p}$ state as an exact scar, but leave studying this for future work.
%

\section{Asymptotic QMBS for the $W$ state}
\label{app:asymptoticscar}
\subsubsection{Prop. \ref{prop:asymptoticscar}: If $\ket{W}$ is an exact QMBS, then $\ket{W_q}$ is an asymptotic QMBS with lifetime $\gtrsim O(N)$}
\begin{proof}
    We will show that the energy variance $\Delta H_{W_q}^2$ of $\ket{W_q}$ is bounded as
    \begin{align}
        \Delta H_{W_q}^2 :=\langle H^2\rangle_q-\langle H\rangle_q^2\leq O(q^2)\quad,
    \end{align}
    where $H$ is the general Hamiltonian given in Eq.~(\ref{eq:HW}) and $\langle \dots \rangle_q := \bra{W_q} \dots \ket{W_q}$.
    First, let us note that $\ket{W_q}$ is an eigenstate of $t H_{\rm ImHop}+\omega\hat{N}_{\rm tot}+\Omega\mathds{1}$ with the eigenvalue $t\sin q +\omega+\Omega$.
    Hence, these do not enter the energy variance, which is then given by
    \begin{align}
        \Delta H_{W_q}^2 = \big\langle \big( \sum_{|X|\leq R_{\text{max}}}h_X \big)^2 \big\rangle_q - \big\langle \sum_{|X|\leq R_{\text{max}}} h_X \big\rangle_q^2 \quad \leq \big\langle \big( \sum_{|X|\leq R_{\text{max}}}h_X \big)^2 \big\rangle_q.
    \end{align}
    To obtain further bounds, recall from Prop.~\ref{cor:W0deg} that any term in $h_X$, decomposed in the normal-ordered basis in Eq.~(\ref{eq:Lbasis}), contains $\geq 1$ creation operators composed with $\geq 1$ annihilation operators.
    This implies that a product $h_X h_Y$ with $X \cap Y = \emptyset$ must contain two separated annihilation operators that will annihilate $\ket{W_q}$, i.e., $h_X h_Y \ket{W_q} = 0$.
    When $X\cap Y\neq\emptyset$, then it can happen that a non-trivial combination of creation and annihilation operators at some sites $j \in X\cap Y$ does not result in immediate annihilation of $\ket{W_q}$.
    Using this logic, we can write
    \begin{align}
        \big\langle \big(\sum_{|X|\leq R_{\text{max}}}h_X \big)^2 \big\rangle_q 
        =  \big\langle \big(\sum_{\substack{|X|\leq R_{\text{max}},~ |Y| \leq R_{\text{max}},\\ X \cap Y \neq \emptyset}} h_X h_Y \big) \big\rangle_q 
        = \big\langle \sum_{|\widetilde{X}| < 2R_{\text{max}}} \widetilde{h}_{\widetilde{X}} \big\rangle_q\quad,
    \end{align}
    for some $O(N)$ finite-range and bounded Hermitian operators $\widetilde{h}_{\widetilde{X}}$
    (where in the second expression $h_X \neq h_Y$, we can combine the term with its partner to obtain a hermitian $\widetilde{h}_{\widetilde{X}} = h_X h_Y + h_Y h_X$ on $\widetilde{X} = X \cup Y$).
    Again using the derivation of Prop.~\ref{prop:HII}, we can conclude that
    \begin{equation}
    \big\langle \sum_{|\widetilde{X}| < 2R_{\text{max}}} \widetilde{h}_{\widetilde{X}} \big\rangle_q 
    \leq O(q^2).
    \end{equation}
    Combining all these observations, we have the energy variance
    \begin{align}
        \Delta H_{W_q}^2 \leq O(q^2) \quad,
    \end{align}
    which means that for $q= \frac{2\pi m}{N}$ for $m\in \mathbb{Z}$, $|m| \ll N$, the energy variance tends to zero as $1/N^2$ in the thermodynamic limit, which implies that the lifetime of $\ket{W_q}$ goes as $\sim N$.
\end{proof}

\subsubsection{Prop.~\ref{prop:W2}: If $\ket{W}$ is an exact scar, then $\ket{W^2}$ is an asymptotic scar with lifetime $\gtrsim O(N^{1/2})$}
\label{app:W2}

\begin{proof}
Let us show that the energy variance of $\ket{W^2}$ is given by
    \begin{align}
        \Delta H_{W^2}^2 \defn \langle H^2 \rangle_{W^2} - \langle H \rangle_{W^2}^2 \leq O(N^{-1}) \quad,
    \end{align}
    where $H$ is the general Hamiltonian given in Eq.~(\ref{eq:HW}) [with $\Omega = 0$ w.l.o.g.] and $\langle \dots \rangle_{W^2} \defn \bra{W^2} \dots \ket{W^2}$.
    Recall from the proof of Prop.~\ref{prop:Wgroundstate} in App.~\ref{app:nonuniqueGS} that $\ket{W^2}$ is an eigenstate of $\mathds{1}$, $\hat{N}_{\rm tot}$, and $H_{\rm ImHop}$, hence the energy variance simplifies to
    \begin{equation}
    \Delta H_{W^2}^2 = \bigg\langle \bigg(\sum_{|X|\leq R_{\text{max}}} h_X \bigg)^2 \bigg\rangle_{W^2} -\bigg\langle \sum_{|X|\leq R_{\text{max}}} h_X \bigg\rangle_{W^2}^2\quad \leq \bigg\langle \bigg(\sum_{|X|\leq R_{\text{max}}} h_X \bigg)^2 \bigg\rangle_{W^2},
    \end{equation}
    where $h_X$ annihilate the $\ket{W}$ and $\ket{\bar{0}}$ states (and are comprised of contributions given in Tab.~\ref{tab:HIterms}).
    To further bound the variance, observe that for a given $h_X$, using the Schmidt-decomposed form of $\ket{W^2}$ over regions $X$ and its compliment $X^c$, given in Eq.~\eqref{eq:schmidtW2}, we arrive at
    \begin{align}
        h_X \ket{W^2} = \sqrt{\frac{\binom{|X|}{2}}{\binom{N}{2}}}\, h_X \ket{W^2}_X \otimes \ket{\bar{0}}_{X^c}\quad,
        \label{eq:hX_W2}
    \end{align}
    where we have used the fact that $h_X \ket{W}_X = h_X \ket{\bar{0}}_X = 0$.
    Now, we notice that when we apply another annihilator term $h_Y$ to this equation, if there is no overlap between regions $X$ and $Y$, i.e., $X \cap Y = \emptyset$, then
    $h_Y h_X \ket{W^2} = 0$.
    However, if $X\cap Y\neq\emptyset$, then using Eq.~(\ref{eq:hX_W2}) the terms are still bounded by
    \begin{equation}
    |\langle h_Y h_X\rangle_{W^2}|\leq \norm{h_X}\norm{h_Y}\frac{\sqrt{\binom{|X|}{2} \binom{|X\cup Y|}{2}}}{\binom{N}{2}},
    \end{equation}
    where we have also used Schmidt decomposition of $\bra{W^2}$ for the bipartition into $X \cup Y$ and its complement.\footnote{Note that we could employ the hermiticity of $h_Y$, which has not been used so far, to improve this bound replacing $|X \cup Y| \to |Y|$ by using action of $h_Y$ on the $\bra{W^2}$ in the spirit of Eq.~(\ref{eq:hX_W2}).}
    This means that
    \begin{align}
        \bigg\langle \bigg(\sum_{|X|\leq R_{\text{max}}}h_X \bigg)^2\bigg\rangle_{W^2}= \bigg\langle \sum_{\substack{|X|\leq R_{\text{max}},\\ |Y|\leq R_{\text{max}},\\ X\cap Y\neq\emptyset}}h_Y h_X\bigg\rangle_{W^2}\leq O({N}^{-1})\,,
    \end{align}
    since the sum in the second expression is over an $O(N)$ number of terms.
    Thus, we have derived Eq.~(\ref{eq:W2variance}), which implies that the lifetime of $\ket{W^2}$ is $\sim\sqrt{N}$.
\end{proof}

\subsubsection{Prop.~\ref{prop:Wpasymptoticscar}: If $\ket{W}$ is an exact scar, then $\ket{W^p}$ is an asymptotic scar with lifetime $\gtrsim O(N^{1/2})$}
\label{app:Wpasymptoticscar}
\begin{proof}
    Our arguments roughly follow the proof of Prop.~\ref{prop:W2} in the main text of $\ket{W^2}$ as an asymptotic scar, with new technical steps needed for $p \geq 3$.
    We start with writing of the parent Hamiltonian given by Eq.~(\ref{eq:HW}).
    Consider a given $h_X$ with bounded range and bounded norm.
    The Schmidt decomposition of $\ket{W^p}$ for a bipartition into regions $X$ and its compliment $X^c$ is given by
    \begin{equation}
        \ket{W^p} = \sum_{l=0}^{p}  f_{l}^{X,N} \ket{W^l}_X\otimes \ket{W^{p-l}}_{X^c} ~, \qquad
        f_{l}^{X,N} \defn \sqrt{\frac{\binom{|X|}{l}\binom{N-|X|}{p-l}}{\binom{N}{p}}} \approx \sqrt{\frac{\binom{|X|}{l} p!}{(p-l)!}}\, \frac{1}{\sqrt{N^l}} ~,
        \label{eq:flapprox}
    \end{equation}
    where we have assumed for simplicity $|X| \geq p$ (otherwise, the sum over $l$ terminates at $|X|$).
    In the second equation, we have also shown the behavior of the amplitudes $f_{l}^{X,N}$ for large $N$, assuming $|X|$ and $p$ are fixed ($l \leq \text{min}\{|X|,p\}$).
    Upon application of $h_X$, recall that $h_X\ket{\bar{0}}_X=h_X\ket{W}_X=0$, such that
    \begin{equation}
    h_X \ket{W^p} = \sum_{l=2}^{p} f_{l}^{X,N} h_X \ket{W^l}_X\otimes \ket{W^{p-l}}_{X^c}\,\;\text{and}\;\;\langle h_X \rangle_{W^p} = \sum_{l=2}^p (f_{l}^{X,N})^2 \bra{W^l}_X h_X \ket{W^l}_X ~,
    \label{eq:hXWp}
    \end{equation}
    where $\langle...\rangle_{W^p}:=\bra{W^p}...\ket{W^p}$.
    This is upper-bounded by
    \begin{equation}
    |\langle h_X \rangle_{W^p}| \leq \|h_X\| \sum_{l=2}^p (f_{l}^{X,N})^2 \approx \|h_X\| (f_{2}^{X,N})^2
    \approx \|h_X\| \frac{|X|(|X|-1) p(p-1)}{2 N^2}
    \end{equation}
    for large $N$.
    Hence,
    \begin{equation}
    \big|\big\langle \sum_X h_X \big\rangle_{W^p}\big| \leq \frac{C}{N} ~.
    \label{eq:h_XWpexp}
    \end{equation}
    Thus, the energy of $\ket{W^p}$ (as a trial state) can be shifted from $p \omega$ by an amount that decreases as $1/N$ (where we have used that $H_{\text{ImHop}} \ket{W^p} = 0$ and $N_{\text{tot}} \ket{W^p} = p \ket{W^p}$).
    Consider now the calculation of the variance.
    In the case of $p=2$ in the previous section, we had $h_Y h_X \ket{W^p} = 0$ if $X \cap Y = \emptyset$.
    This holds also for $p = 3$ but is no longer true for $p \geq 4$.
    Nevertheless, we can controllably analyze such contribution as follows.
    Assuming $X \cap Y = \emptyset$ we apply $h_Y$ to the left equation in Eq.~(\ref{eq:hXWp}) to obtain
    \begin{equation}
        h_Y h_X\ket{W^p} = \sum_{l,m=2}^{p} f_{l}^{X,N}f_{m}^{Y,N-|X|} h_X\ket{W^l}_X \otimes h_Y\ket{W^m}_Y \otimes \ket{W^{p-l-m}}_{(X\cup Y)^c} ~,
    \end{equation}
    where we have successively Schmidt decomposed the state, first into $X$ and $X^c$, and then we have further partitioned $X^c$ into $Y$ and $(X \cup Y)^c$.
    It follows that
    \begin{equation}
        \langle h_Y h_X \rangle_{W^p} = \sum_{l,m=2}^p f^{X\cup Y,N}_{l+m}f_{l}^{X,N}f^{Y,N-|X|}_{m} \,
        \bra{W^{l+m}}_{X\cup Y}h_Y h_X \ket{W^l}_X \otimes \ket{W^m}_Y ~.
    \end{equation}
    The above equation for $X \cap Y = \emptyset$ is upper-bounded by
    \begin{equation}
        |\langle h_Y h_X \rangle_{W^p}| \leq \|h_X\| \|h_Y\| \sum_{l,m=2}^p f^{X\cup Y,N}_{l+m}f_{l}^{X,N}f^{Y,N-|X|}_{m} 
        = O(N^{-4}) ~,
        \label{eq:hyhxnooverlap}
    \end{equation}
    where the leading scaling with $N$ comes from the $l = m = 2$ term.

    On the other hand, when $X \cap Y \neq \emptyset$, we can write, e.g., 
    \begin{equation}
        h_Y h_X \ket{W^p} = \sum_{l=2}^p f_l^{X\cup Y,N} h_Y h_X \ket{W^l}_{X\cup Y} \otimes \ket{W^{p-l}}_{(X\cup Y)^c} ~,
    \end{equation}
    such that
    \begin{equation}
        \langle h_Y h_X\rangle_{W^p}=\sum_{l=2}^{p} \left(f_{l}^{X\cup Y,N}\right)^2\bra{W^l}_{X\cup Y}  h_Y h_X\ket{W^l}_{X\cup Y}.
    \end{equation}
    Such an expectation value is upper-bounded by
    \begin{equation}
        |\langle h_Y h_X\rangle_{W^p}| \leq \sum_{l=2}^{p} \left(f_{l}^{X\cup Y,N}\right)^2 \|h_Y h_X\|
        \approx \left(f_{2}^{X\cup Y,N}\right)^2  \|h_Y h_X\| = O(N^{-2}) ~.
        \label{eq:hyhxoverlap}
    \end{equation}
    Putting Eqs.~(\ref{eq:hyhxnooverlap}) and (\ref{eq:hyhxoverlap}) together, we can upper-bound
    \begin{equation}
        \big|\big\langle \big(\sum_X h_X \big)^2\big\rangle_{W^p}\big| \leq O(N^{-1}) ~,
    \end{equation}
    since there are $O(N^2)$ contributions from $X \cap Y = \emptyset$, each bounded by $O(N^{-4})$, and $O(N)$ contribution from $X \cap Y \neq \emptyset$, each bounded by $O(N^{-2})$.
    The energy variance of $\ket{W^p}$, calculated using the above equation and Eq.~(\ref{eq:h_XWpexp}), is
    \begin{align}
        \big\langle \big(\sum_X h_X \big)^2\big\rangle_{W^p}-\big\langle \sum_X h_X \big\rangle_{W^p}^2 \leq O(N^{-1}) \quad.
    \end{align}
    This gives a lifetime of $\ket{W^p}$ to also be $\sim \sqrt{N}$, where $p \ll N$ to have controlled approximations such as in Eq.~(\ref{eq:flapprox}).
\end{proof}

\section{Details on the dynamical signatures of Hamiltonian types}
\label{app:dynamics}
\subsection{Early time dynamics of the $W$ droplet evolution} \label{app:earlytime}
Here we analyze behavior of the overlap reduction $\Upsilon_{G=0}(t,M)$, defined in Eq.~(\ref{eq:defUpsilon}), at very small $t$.
Using Eq.~(\ref{eq:Upsilon_tM}), we obtain:
\begin{equation*}
\lim_{t \to 0} \Upsilon_{0}(t,M) = \frac{1}{M} \int_{-\pi}^\pi \frac{dq}{2\pi} \, \frac{\sin^2(qM/2)}{\sin^2(q/2)} \,\left(\frac{\epsilon_q^2}{2} t^2 + i \epsilon_q t \right) ~,
\end{equation*}
with $O(t^4)$ corrections to the real part and $O(t^3)$ to the imaginary part.
We now specialize to the dispersions in Eqs.~(\ref{eq:epsReHop})-(\ref{eq:epsCHop}).
At very early times, i.e., $wt \ll 1$, we can easily evaluate the $t \to 0$ behaviors:
\begin{align}
& \lim_{t \to 0} \Upsilon_0^{\text{ReHop}}(t,M) = i\frac{w t}{M} + \frac{w^2 t^2}{2M} ~, \\
& \lim_{t \to 0} \Upsilon_0^{\text{ImHop}}(t,M) = \frac{w^2 t^2}{2M} ~, \\
& \lim_{t \to 0} \Upsilon_0^{\text{CHop}}(t,M) = i\frac{\alpha w t}{M} +\frac{(\alpha^2 + \beta^2) w^2 t^2}{2M}
~\label{eq:earlytimes},
\end{align}
where for simplicity we have assumed $M \geq 2$ in each case.
We can also understand these results using the very early time expansion 
\begin{equation}
\braket{\phi_0}{\phi(t)} \approx 1 - i t \bra{\phi_0} H \ket{\phi_0} - \frac{t^2}{2} \bra{\phi_0} H^2 \ket{\phi_0} ~,
\end{equation}
employing real-space expressions for $H = H_{\text{ReHop}}$ in Eq.~(\ref{eq:HI}) and $H = H_{\text{ImHop}}$ particularly in Eq.~(\ref{eq:HImHop_nonherm_annih}).
The overall $1/M$ factor survives in Eq.~(\ref{eq:earlytimes}) even though $H$ is extensive because $H_{\text{ReHop}}$ is type I and $H_{\text{ImHop}}$ is type II, and hence both can be written as sums of local annihilators of the $\ket{W}, \ket{\bar{0}}$ states (see Sec.~\ref{subsec:Wdifferenttypes} and the quoted real-space forms).
Hence the action  of $H$ on $\ket{\phi_0}$ in both cases reduces to actions only near the boundaries of the initial $W$ state domain, and one can verify the above early-time results using this real-space picture.
The annihilators are Hermitian in the first case and non-Hermitian in the second case, and the details of the annihilators matter in the specific calculations of $\bra{\phi_0} H \ket{\phi_0}$ and $\bra{\phi_0} H^2 \ket{\phi_0} = \|H \ket{\phi_0}\|^2$, but the overall prediction for the fidelity $|\!\braket{\phi_0}{\phi(t)} \!|^2 \approx 1 - \frac{a}{M} t^2$ is similar in both cases, and also in the more general type II case $H = H_{\text{CHop}}$.
[In the $H_{\text{ReHop}}$ case, we see a pure imaginary piece $\sim t$ in the overlap, and this reduces somewhat the decrease in the fidelity compared to the $H_{\text{ImHop}}$ case; on the other hand, the $H_{\text{CHop}}$ lies between the two cases.]
All in all, the early-time behavior of the overlap with the initial state does not appear to differentiate qualitatively between the type I and type II Hamiltonians, and it is not clear from this analysis if there can appear a qualitative difference between the two types at later times.
We finally expect this structure to generalize to arbitrary extensive-local Hamiltonians that have $\ktW, \ktO$ as exact eigenstates, including interacting ones, since for any such Hamiltonian we know from Eq.~(\ref{eq:HW}) that its action on $\ket{\phi_0}$ reduces to actions only near the domain boundaries.
One may ask about early time analysis of $\braket{T_G\phi_0}{\phi(t)}$, Eq.~(\ref{eq:Upsilon_tM}), perhaps revealing the difference between the two types.
However, it is not clear how to make $\lim_{t\to 0}$ analysis physically interesting/meaningful, either with fixed $G \neq 0$ or attempting $G = ut$ (problematic since $G$ must be an integer).
We can see a more significant early-time difference by examining observable $n_j(t)$,
\begin{equation}
n_j(t) = \bra{\phi_0} e^{iHt} n_j e^{-iHt} \ket{\phi_0} \approx \bra{\phi_0} (n_j + it[H, n_j] - \frac{t^2}{2}[H, [H, n_j]]) \ket{\phi_0} ~. 
\end{equation}
Focusing on the $O(t)$ piece, it can also be related to expectation values of the appropriate local current operators, defined as $i[H, n_j] \defn J_{j-1} - J_j$, which for the two cases are $J_j^{\text{ReHop}} = -\frac{i}{2} (s_j^\dagger s_{j+1} - \text{H.c.})$ and $J_j^{\text{ImHop}} = \frac{1}{2} (s_j^\dagger s_{j+1} + \text{H.c.})$.
We have
\begin{equation}
\bra{\phi_0} J_j^{\text{ReHop}} \ket{\phi_0} = 0~, \;\; \forall j~; \qquad
\bra{\phi_0} J_j^{\text{ImHop}} \ket{\phi_0} = \frac{1}{M} \delta_{j \in [1, \dots, M-1]} ~. 
\end{equation}
Hence, in the $H_{\text{ReHop}}$ case there is no $O(t)$ time dependence in $n_j(t)$ for all sites.
On the other hand, in the $H_{\text{ImHop}}$ case, since the expectation values of the local current operators are non-vanishing only within the droplet, we get $dn_j(t)/dt|_{t=0} = \frac{1}{M} (\delta_{j,M} -\delta_{j,1})$, which one can think of as an early time precursor of the ballistic motion of the droplet observed at $w t \gtrsim 1$ in this case.
Note that while we obtained these results more simply by appealing to the evaluations of the corresponding current operators, the fact that only the sites near the boundary are affected is related to being able to write $H$ as a sum of local annihilators of $\ktW$ and $\ktO$, either Hermitian or non-Hermitian. 
\subsection{Intermediate time dynamics of the $W$ droplet evolution}\label{app:inttime}
\begin{figure*}
\includegraphics[scale=1]{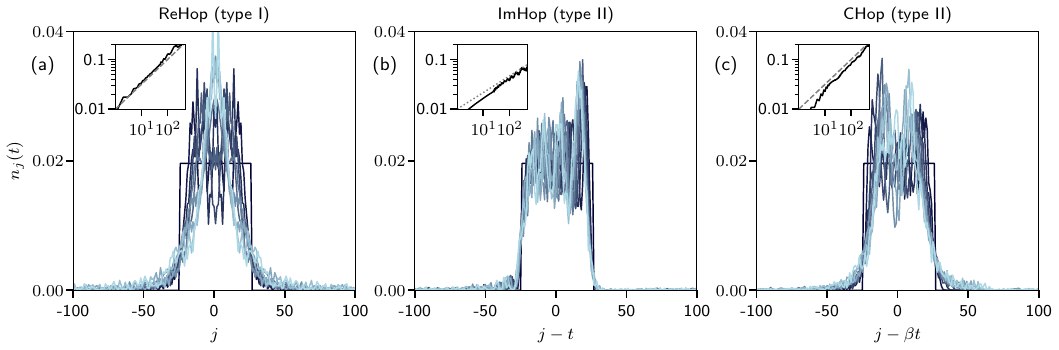}
\label{fig:Wdropletorbitals}
\caption{Evolution of the orbitals $\ket{\phi_j(t)}$ visualized using $n_j(t)$ under the Hamiltonians (a) $H_{\rm ReHop}$, (b) $H_{\rm ImHop}$, and (c) $H_{\rm CHop}$ (with $\alpha = \beta = 0.5$).
Data shown for the initial droplet of size $M=51$ in the PBC chain of length $N = 201$ (same as in Fig.~\ref{fig:Wdropletnumerics}), with uniform time steps from $t = 0$ (darkest) to $t = 200$ (lightest) in steps of $20$.
Note the shift in the $x$ axes in the ImHop and CHop panels that compensates for the ballistic propagation of the droplet seen in Fig.~\ref{fig:Wdropletnumerics}.
The insets show the ``leakage''  of the particle number from the initial domain in the ReHop case and the appropriately shifted domains in the ImHop and CHop cases as a function of time [i.e., $\sum_{j,\, |j - G(t)| > M/2} n_j(t)$ with appropriate $G(t)=0$, $wt$, and $\beta wt$ for the three cases respectively].
The dashed guidelines in the insets of (a) and (c) denote $\sim \sqrt{t}$, and the dotted guideline in the inset of (b) denotes $\sim \sqrt[3]{t}$, which is consistent with the analysis in the text.
}
\end{figure*}
Here we present the details on the intermediate times $wt \gtrsim 1$, where we find qualitative difference between the two types.
Note that the discussion up to Eq.~(\ref{eq:Upsilon_tM}) in the main text is general and valid for any dispersion $\epsilon_q$.
Here we will distinguish the type I ($H_{\text{ReHop}}$) vs type II ($H_{\text{ImHop}}$ and $H_{\rm CHop}$) Hamiltonians.
\begin{figure*}
\centering
\includegraphics[scale=1]{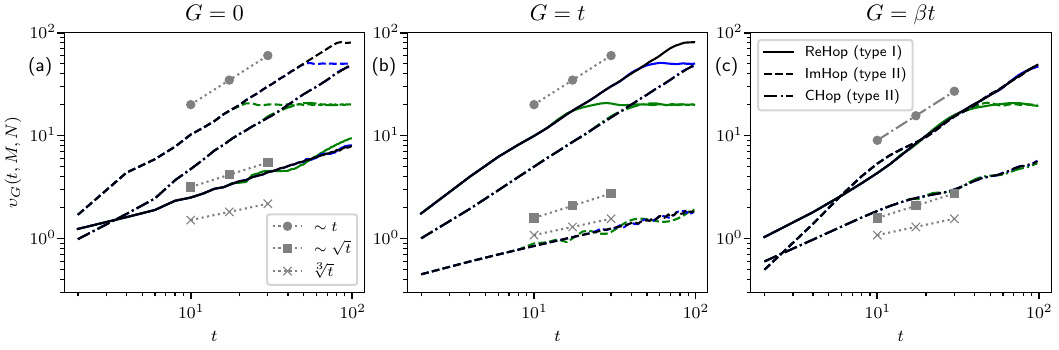}
\caption{
Plots of $\upsilon_G(t, M, N) \defn M \Upsilon_G(t, M, N)$ as evaluated exactly from Eq.~(\ref{eq:defUpsilon}) for times $t$ chosen such that indicated $G(t) \in \mathbb{Z}$.
Data shown for $N = 200$ and different values of $M = 20, 50, 80$ (green, blue, black) for three kinds of Hamiltonians $H_{\text{ReHop}}$, $H_{\text{ImHop}}$, and $H_{\text{CHop}}$ (with $\alpha = \beta = 0.5$) for three different choices of $G(t)$. 
(a) $G = 0$ shows that $\upsilon_G^{\text{ImHop}}$ and $\upsilon_G^{\text{CHop}}$ exhibit linear growth due to ballistic motion, and $\upsilon_G^{\text{ReHop}}$ exhibits a $\sqrt{t}$ growth due to boundary diffusive melting, as discussed in the text.
(b) [resp.\ (c)] $G(t) = wt$ [resp.\ $G(t) = \beta wt$] shows that the ballistic motion under $H_{\rm ImHop}$ [resp.\ $H_{\rm CHop}$] can be canceled by studying the overlap w.r.t.\ a moving droplet with this $G(t)$, $\upsilon_G^{\rm ImHop}$ [resp.\ $\upsilon_G^{\rm CHop}$] then exhibits a $\sqrt[3]{t}$ [resp.\ $\sqrt{t}$] growth, indicating subdiffusive [resp.\ diffusive] boundary melting, as discussed in the text;
on the other hand, $\upsilon_G^{\rm ReHop}$ and $\upsilon_G^{\rm CHop}$ [resp.\ $\upsilon_G^{\rm ReHop}$ and $\upsilon_G^{\rm ImHop}$] exhibit linear growths since they are moving ballistically w.r.t.\ this moving droplet.
Note that legends are common for all the panels, and are shown in only one of them to avoid cluttering.}
\label{fig:scalings}
\end{figure*}
Recall from the discussion in Sec.~\ref{subsubsec:intermediate_time} that for largish $t$ we expect 
that important wave-vectors in the calculation of $\Upsilon_G(t,M)$, Eq.~(\ref{eq:Upsilon_tM}), are small $q \sim q_*(t)$ set by $|\epsilon_{q_*(t)} t - q_* G| \sim O(1)$.
We furthermore assume that $M$ is large enough such that $q_*(t) M \gg 1$.
In this regime the factor $\sin^2(qM/2)$ in Eq.~(\ref{eq:Upsilon_tM}) oscillates very quickly over the important $q$ integration range, and we can replace it by its average value of $1/2$, obtaining:
\begin{equation}
\Upsilon_G[t, M \gg q_*^{-1}(t)] \approx \frac{1}{M} \upsilon_G(t) ~, \;\;\;
\upsilon_G(t) = \int_{-\pi}^\pi \frac{dq}{2\pi} \, \frac{1 - \cos(\epsilon_q t - q G) + i \sin(\epsilon_q t - q G)}{2 \sin^2(q/2)} ~.
\label{eq:Upsilon_t_largeM}
\end{equation}
Note that $\upsilon_G(t)$ is independent of $M$ (anticipating implicitly that a potentially time-dependent $G$ is also independent of $M$), and we can then think of it as characterizing the behavior of the domain boundary (here two boundaries, one at the left and the other at the right end of the domain), in the regime $N \to \infty$ (thermodynamic limit) first and then $M \to \infty$ (very large initial domain size) second.
For finite $M \gg 1$, we implicitly assume that the physics actions near the two domain boundaries are ``independent" in the considered time regime.
\subsubsection{Overlaps with the initial droplet:
Diffusive boundary melting for $H_{\text{ReHop}}$}
Let us first consider $G = 0$ calculations.
We will discuss the type I and type II cases separately, analyzing the appropriate scaling behavior in each case.
We can also verify these scalings numerically by exact evaluations of expressions in Eq.~\eqref{eq:defUpsilon} for $G = 0$, as shown in Fig.~\ref{fig:scalings}.
We start with $H_{\text{ReHop}}$.
In this case $\epsilon_q^{\text{ReHop}} \approx \frac{1}{2} w q^2$ for small $q$, and we see that important $q_*(t) \sim 1/\sqrt{wt} \ll 1$ for $\sqrt{wt} \gg 1$.
The condition on $M$ justifying the large domain size regime in Eq.~(\ref{eq:Upsilon_t_largeM}) is then $M \gg \sqrt{wt}$. 
Since we are in the largish time regime, we can also extend the range of $q$ integration to the whole real line, and upon simple variable rescaling obtain
\begin{equation}
\upsilon^{\text{ReHop}}_0(t) \approx \sqrt{\frac{wt}{2}} \int_{-\infty}^\infty \frac{d\tilde{q}}{\pi}\, \frac{1 - \cos(\tilde{q}^2) + i \sin(\tilde{q}^2)}{\tilde{q}^2} = \sqrt{\frac{wt}{\pi}} (1 + i) ~.
\label{eq:upsilonReHop}
\end{equation}
Thus, in the scaling regime $1 \ll wt \ll M^2$, the wavefunction overlap reduction from $1$ is proportional to $\frac{1}{M} \sqrt{wt}$, which we can loosely view as some ``diffusion-like'' physics happening with the domain boundaries.
We can also see similar scaling with time in the leakage analysis shown in the inset of Fig.~\ref{fig:Wdropletorbitals}(a), where we define the leakage as the summed up $n_j(t)$ over $j$'s outside of the initial droplet location.
Turning to $H_{\text{ImHop}}$, we have $\epsilon_q^{\text{ImHop}} \approx wq$ for small $q$.
Hence, we see that the important $q_*(t) \sim 1/wt \ll 1$ for $wt \gg 1$.
The large domain condition in Eq.~(\ref{eq:Upsilon_t_largeM}) is then $M \gg wt$.
For such largish times we obtain
\begin{equation}
\upsilon^{\text{ImHop}}_0(t) \approx wt \int_{-\infty}^\infty \frac{d\tilde{q}}{\pi} \, \frac{1 - \cos(\tilde{q})}{\tilde{q}^2} = wt ~,
\label{eq:upsilonImHop}
\end{equation}
where we have used the fact that $\sin(\epsilon_q t)$ is odd under $q \rightarrow -q$, and hence does not contribute.
The scaling regime here is $1 \ll wt \ll M$, and the wavefunction reduction from $1$ is proportional to $\frac{1}{M} wt$.
We can loosely view this as some ``ballistic-like'' physics happening with the domain boundaries as far as the wavefunction overlap is concerned, and this is confirmed through exact numerics in Fig.~\ref{fig:scalings}.
We will get a more precise picture of what is happening with the $W$ droplet in this case by considering non-zero time-dependent $G$ in the next subsection.
We thus see that already at the level of calculating the overlap with the initial state, there is a qualitative difference in the behaviors of the $H_{\text{ReHop}}$ versus $H_{\text{ImHop}}$ models in such $W$ droplet quench setting.
We can also argue that for $H_{\rm CHop}$ of Eq.~(\ref{eq:CHopdefn}), as soon as $\beta \neq 0$, the behavior in the regime of interest [formally, large $t$ upon $\lim_{M\to \infty} (\lim_{N \to \infty} \cdots)$] is controlled by the $H_{\text{ImHop}}$ part, hence the same ``ballistic'' decrease of the overlap.
We can also see this numerically, as demonstrated for $H_{\rm CHop}$ in Fig.~\ref{fig:scalings}.
\subsubsection{Chiral ballistic motion and subdiffusive boundary melting under $H_{\text{ImHop}}$}
Motivated by the observation of the ballistic motion of the $W$ droplet in the type II cases in Fig.~\ref{fig:Wdropletnumerics}, we now consider the overlap of $\ket{\phi(t)}$ with $\ket{T_G\phi_0}$, choosing optimal $t$-dependent $G$.
We consider $H_{\text{ImHop}}$ first.
In this case, since $\epsilon^{\text{ImHop}}_q$ is odd in $q$, $\upsilon_G(t)$ has only the real part.
Expecting that the largish $t$ behavior is dominated by small $q$, we Taylor-expand
\begin{equation}
\epsilon_q^{\text{ImHop}} t - q G \approx (wt - G) q - \frac{1}{6} w t q^3 ~.
\end{equation}
To obtain small $\upsilon_G(t)$ [hence large $\braket{T_G\phi_0}{\phi(t)}$], we want to make the numerator in Eq.~(\ref{eq:Upsilon_t_largeM}) small at small $q$ to suppress the effects of the vanishing denominator, and we can achieve this by choosing $G(t) = wt$.
In this case $\epsilon_q^{\text{ImHop}} t - q G(t) \sim w t q^3$, and we see that important $q_\star(t) \sim 1/(w t)^{1/3}$ for this calculation, assuming $(w t)^{1/3} \gg 1$.
The condition on $M$ justifying the large domain size regime in Eq.~(\ref{eq:Upsilon_t_largeM}) is then $M \gg (w t)^{1/3}$.
For such largish times, we can extend the range of $q$ integration to $\mathbb{R}$, and upon simple variable rescaling obtain
\begin{equation}
\upsilon_{G(t)=wt}^{\text{ImHop}}(t) \approx \left(\frac{wt}{6}\right)^{1/3} \int_{-\infty}^\infty \frac{d\tilde{q}}{\pi} \, \frac{1 - \cos(\tilde{q}^3)}{\tilde{q}^2} = A (wt)^{1/3} ~,
\label{eq:upsilon_Gwt_ImHop}
\end{equation}
with numerical constant $A = 3^{1/6} \Gamma(2/3)/(2^{1/3} \pi) \approx 0.411$.
This scaling form is valid for $1 \ll wt \ll M^3$.
We see that the overlap $\braket{T_{G(t)=wt} \phi_0}{\phi(t)} \approx 1 - A (wt)^{1/3}/M$ is close to unity in this regime, supporting the picture that the $W$ droplet is moving as a whole with velocity $w$.
We can think of this aspect of the droplet dynamics as ballistic, in agreement with the discussion of the overlap with the initial state in Eq.~(\ref{eq:upsilonImHop}).
The decrease of the overlap $\braket{T_{G(t)=wt} \phi_0}{\phi(t)}$ suggests that the droplet ``melts'' near the boundaries, but this process is sub-diffusive.
We can see this also in the leakage analysis shown in the inset of Fig.~\ref{fig:Wdropletorbitals}(b), where we measure the leakage as the summed up $n_j(t)$ over $j$'s outside of the corresponding ``reference droplet'' given by $\ket{T_{G(t)=wt} \phi_0}$.
Note that there is no contradiction between $\upsilon_0^{\text{ImHop}}(t)$ in Eq.~(\ref{eq:upsilonImHop}) being valid only for $wt \ll M$ and $\upsilon_{G(t)=wt}^{\text{ImHop}}(t)$ in Eq.~(\ref{eq:upsilon_Gwt_ImHop}) being valid for much longer times $wt \ll M^3$; these different conditions were needed for the corresponding approximations of the full expression in Eq.~(\ref{eq:Upsilon_tM}) to be valid for the different $G=0$ and $G=wt$ calculations.
In fact, from the latter result and the picture of the ballistic motion of the $W$ droplet, we can infer that at time $t < M/w$, there is a geometric overlap with the initial droplet over the region of size $M - wt$, and hence $\braket{\phi_0}{\phi(t)} \approx 1 - wt/M$, in agreement with the former result for $wt \ll M$.
From this picture, we expect this to be a good approximation for all $wt \lesssim M$, which is confirmed by our numerical calculations in Fig.~\ref{fig:scalings}.
Once $wt > M$, the approximations giving $\upsilon_0^{\text{ImHop}}(t)$ in Eq.~(\ref{eq:upsilonImHop}) cannot not be valid, and we expect $\braket{\phi_0}{\phi(t)} \approx 0$ (we do not aim to provide a detailed description of the decay to zero in time).
Finally, we note that the chiral ballistic motion of the $W$ droplet as a whole cannot continue to arbitrary long time if $M$ is finite.
The reason it persists for such long times up to $wt \sim M^3$ is that the initial $W$ droplet forms a particular ``wavepacket'' around $q=0$ (hence moving with group velocity $d\epsilon/dq|_{q=0}$\footnote{We expect generic non-interacting type II Hamiltonians to have non-zero such velocity.
As an example, a combination of the nearest-neighbor $H_{\rm ImHop}$ and the second-neighbor pure imaginary hopping (which can be shown to be type II in the same equivalence class as $H_{\rm ImHop}$) has $\epsilon_q = w_1 \sin(q) + w_2 \sin(2q)$ and hence $d\epsilon/dq|_{q=0} = w_1 + 2 w_2$.
This vanishes at special $w_2 = -w_1/2$, but in this case the total Hamiltonian can be rewritten in terms of local Hermitian annihilators of the $W$ state, namely using three-site ``loops'' $i (s_j^\dagger s_{j+1} + s_{j+1}^\dagger s_{j+2} + s_{j+2}^\dagger s_j - \text{H.c.})$, and is hence type I.}) that ``melts'' very slowly, because the melting happens only from the droplet boundaries.
We close with a remark on how this might generalize to interacting Hamiltonians.
Note that the non-interacting model $H_{\text{ImHop}}$ is integrable, and the total particle current $J_{\text{tot}}^{\text{ImHop}} = \frac{w}{2} \sum_j (s_j^\dagger s_{j+1} + \text{H.c.})$ is an integral of motion in the PBC chain and cannot be degraded.
Hence, since the initial $W$ droplet is a state with non-zero $J_{\text{tot}}^{\text{ImHop}}$, even after the droplet melts, the ``melted parts'' continue going around the chain in circles.
(For simplicity we are mentioning only one integral of motion, while there are also longer-range extensive local integrals of motion, which can be viewed as specific linear combinations of the conserved occupation numbers of the $k$-space orbitals.)
Such persistence of the directional flow to infinite time is special to the non-interacting case, but will go away for generic even small interactions.
On the other hand, we conjecture that the initial directional motion, until the domain melts, is primarily driven by the Hamiltonian action near the boundaries and not by the model integrability, and hence conjecture that it would be present also for more general type II Hamiltonians.
\subsubsection{Chiral ballistic motion and diffusive boundary melting under $H_{\text{CHop}}$}
We now discuss a more general type II Hamiltonian, namely the $H_{\text{CHop}}$ with the dispersion $\epsilon_q^{\text{CHop}}$ of Eq.~(\ref{eq:epsCHop}).
In this case, Eq.~(\ref{eq:Upsilon_t_largeM}) will have both real and imaginary parts.
To study the motion of the droplet as a whole, we again consider small $q$ behavior
\begin{equation}
\epsilon_q^{\text{CHop}} t - q G \approx (\beta w t - G) q + \frac{1}{2}\alpha w t q^2 - \frac{1}{6} \beta w t q^3 ~.
\end{equation}
We see that as long as $\alpha \neq 0$, the $O(q^3)$ term is subdominant and can be dropped when analyzing the longish time behavior.
To minimize $\upsilon_G(t)$, we again choose $G$ such as to cancel the $O(q)$ term, which is achieved with $G(t) = \beta w t$.
The subsequent analysis then becomes identical to that for the diffusive melting for the $H_{\text{ReHop}}$ that gave Eq.~(\ref{eq:upsilonReHop}), with a simple substitution $w \to \alpha w$.
Hence, we obtain
\begin{align}
\upsilon^{\text{CHop}}_{G(t)=\beta w t}(t) \approx \sqrt{\frac{\alpha wt}{\pi}} (1 + i) ~,
\label{eq:upsilon_G_CHop}
\end{align}
which is valid for $1 \ll \alpha w t \ll M^2$.
Thus, for more general type II hopping, the melting is diffusive rather than subdiffusive found for the pure $H_{\text{ImHop}}$ in Eq.~(\ref{eq:upsilon_Gwt_ImHop}).
However, all discussion following Eq.~(\ref{eq:upsilon_Gwt_ImHop}) still holds with this substitution, i.e., using the diffusive melting.
%

\subsection{Generalization to QMBS models with $W^p$ scars}
\label{app:Wp_droplet_dyn}

In Sec.~\ref{sec:dynamical}, we showed that for QMBS models with the $\ket{W}$, $\ket{\bar{0}}$ states as exact scars, quenching from a $W$-like domain inside the vacuum provides a qualitative dynamical distinction between type I and type II Hamiltonians.
In this Appendix, we conjecture a simple generalization to QMBS models with the higher-particle-number $\ket{W^p}$ in Eq.~(\ref{eq:Whigher}) as QMBS, for $p \leq p_{\max}$ with $p_{\max}$ independent of system size.
We could also in principle allow $p$ to be as large as $N$, in which case these QMBS are just the states of the ferromagnetic tower, a.k.a.\ Dicke states.
Note that we have not fully characterized such QMBS models for $p \geq 2$, but we know some type I and type II Hamiltonian instances that readily connect to the $p = 1$ study in the main text and allow a nice reuse/extension of results.
We first remind that the many-body Hamiltonian $H_{\text{ImHop}}$ remains a type II QMBS Hamiltonian for any $p$, a.k.a.\ the Dzyaloshinskii-Moriya Interaction (DMI) Hamiltonian $H_{\text{DMI}}$~\cite{Mark2020Eta, moudgalya2023exhaustive}, which, in the spin language can be written as
\begin{equation}
    H_{\text{DMI}} = \sum_j{(\vec{S}_j \times \vec{S}_{j+1})\cdot \hat{\alpha}} ~, \quad \alpha \in \{x, y, z\}.
\label{eq:HDMI}
\end{equation}
On the other hand, $H_{\text{ReHop}}$ is not a QMBS Hamiltonian for $p \geq 2$.
However, we can rectify this by a simple modification that connects to the spin-1/2 Heisenberg Hamiltonian:
\begin{equation}
\begin{aligned}
& H_{\text{ReHop}} - \sum_j n_j n_{j+1} 
 = \frac{1}{2} \sum_j \left(n_j + n_{j+1} - 2 n_j n_{j+1} - s_j^\dagger s_{j+1} - s_{j+1}^\dagger s_j \right) \\
&= \frac{1}{2} \sum_j (\ket{10} - \ket{01}) (\bra{10} - \bra{01})_{j,j+1} = \sum_j \left(\frac{1}{4} - \vec{S}_j \cdot \vec{S}_{j+1} \right) = H_{\text{Heis}} ~.
\end{aligned}
\end{equation} 
Clearly, $H_{\text{ReHop}}$ and $H_{\text{Heis}}$ act identically in the single-particle subspace, in particular on the $\ket{W}$ and $\ket{W_q}$ states.
The Heisenberg model is a well-known type I Hamiltonian for the full ferromagnetic tower of states, with the individual terms in the above writing annihilating all these states~\cite{Mark2020Eta, moudgalya2023exhaustive}.
In the context of such spin-1/2 Hamiltonians, we can view the $\ket{\bar{0}}$ state as a particular fully polarized state, and then the $\ket{W}$ and $\ket{W_q}$ states are zero-momentum and general-momentum magnon ``excitations,'' which are exact eigenstates of both $H_{\text{Heis}}$ and $H_{\text{DMI}}$.
In this picture, we can view the dispersions $\epsilon_q^{\text{ReHop}}$ and $\epsilon_q^{\text{ImHop}}$ as the corresponding magnon dispersions under the two Hamiltonians, in particular capturing the distinct and all-important small $q$ behaviors in the two cases.
Furthermore, we can qualitatively view the $W^p$ state as a Bose-Einstein Condensate (BEC) of $p$ such zero-momentum magnons:
In general, the magnons are not free, e.g., two magnons cannot be on the same site; however, starting with a naive ``free-magnon'' BEC wavefunction and simply correcting it by excluding contributions with multiply-occupied sites happens to give the exact $W^p$ state, and the BEC approximation is even quantitatively accurate in the very small density limit, $\rho = p/N \to 0$.
This is similar to what is done in the Holstein-Primakoff transformation of the ferromagnetic Heisenberg model~\cite{HolsteinPrimakoff1940}.
We can then consider initializing the system in a state containing a droplet of the $W^p$ state in the vacuum background,
\begin{equation}
\ket{\psi_0} = \ket{W^p}_{[1\dots M]} \otimes \ket{\bar{0}}_{[M+1 \dots N]} ~,
\end{equation}
and evolving the system under the $H_{\text{Heis}}$ vs $H_{\text{DMI}}$ Hamiltonians as the simplest instances of type I vs type II Hamiltonians, or $\alpha H_{\text{Heis}} + \beta H_{\text{DMI}}$ as a more general type II Hamiltonian.
Most importantly, from the preceding discussion, we can view the $W^p$ state droplet in the low-density limit (small $p/M$) as a BEC of $p$ essentially non-interacting magnon particles into the same ``zero momentum (bosonic) orbital'' on the domain $[1 \dots M]$, which is the same as the single-particle $W$ state on that domain whose dynamics we studied in Sec.~\ref{sec:dynamical}.
Hence, we can reuse the free-particle calculation from the main text and expect
\begin{align}
\braket{T_G \psi_0}{\psi(t)} &\approx \big( \braket{T_G \phi_0}{\phi(t)} \big)^p = \big[1 -  \Upsilon_G(t,M) \big]^p = \left( 1 -  \frac{1}{M} \upsilon_G(t) \right)^p ~.
\label{eq:Wpoverlap}
\end{align}
The overlap calculation would be exact for a BEC droplet of free bosons evolving under dispersion $\epsilon_q$.\footnote{Here we refer to the free-boson result that if $\ket{\psi}$ is a normalized BEC wavefunction for condensing $p$ bosons into an orbital $\ket{\phi}$, and similarly $\ket{\psi'}$ is a BEC for $p$ bosons in $\ket{\phi'}$, then $\braket{\psi}{\psi'} = \left(\braket{\phi}{\phi'}\right)^p$.
We also note that the calculation for the BEC droplet quench is meaningful even when the free-boson Hamiltonian has negative energies and the BEC is not its ground state.}
$\Upsilon_G(t,M)$ is from Eq.~(\ref{eq:Upsilon_tM}), which is already in the thermodynamic limit. 
In the last expression, we are considering the ``largish $t$'' and large $M$ scaling regime given in Eq.~(\ref{eq:Upsilon_t_largeM}), and we expect this expression to be accurate in such regime while keeping $p$ fixed (since the droplet is then in the very low density limit).
We can then obtain the final results for different questions of interest by reusing the corresponding single-particle $\upsilon_G(t)$:
$\upsilon^{\text{ReHop}}_0(t)$ in Eq.~(\ref{eq:upsilonReHop}) and $\upsilon^{\text{ImHop}}_0(t)$ in Eq.~(\ref{eq:upsilonImHop}) to calculate overlaps with the initial state in the $H_{\text{Heis}}$ and $H_{\text{DMI}}$ cases respectively;
and $\upsilon^{\text{ImHop}}_{G(t)=wt}(t)$ in Eq.~(\ref{eq:upsilon_Gwt_ImHop}) and $\upsilon^{\text{CHop}}_{G(t)=\beta wt}(t)$ in Eq.~(\ref{eq:upsilon_G_CHop}) to describe the droplet melting in the $H_{\text{DMI}}$ and more general CHop type II Hamiltonians.
Furthermore, we conjecture that the picture of the $W^p$ state as the BEC is still a qualitatively good approximation at a finite but small density of particles in the initial domain, $0 < \rho = p/M \ll 1$.
In this case, we obtain for the overlap with the initial state (using $\upsilon_{G=0} =: \upsilon_0$)
\begin{equation}
\braket{\psi_0}{\psi(t)} \approx \left(1 - \frac{1}{M} \upsilon_{0}(t) \right)^{\rho M} \approx e^{-\rho \upsilon_{0}(t)} ~.
\end{equation}
Note that the domain size $M$ dropped out of the very last expression, and only dependence on the intensive property $\rho$ of the domain remains.
[The coefficient of $\upsilon_{0}(t)$ in the exponent is not quantitatively accurate except for very low $\rho$, but we conjecture that the qualitative character of the result is more general.]
We can then view this expression as an intrinsic dynamical property of the domain boundaries---here, two essentially independent boundaries, one at the left end and the other at the right end of the domain---between the $\ket{W^p}$ state at particle density $\rho$ and the vacuum.
Using Eqs.~(\ref{eq:upsilonReHop}),~(\ref{eq:upsilonImHop}), we see that the resulting decay behaviors in time are qualitatively different between the two types: 
\begin{equation}
\braket{\psi_0}{\psi(t)} \approx
\begin{cases}
e^{-(C'+iC'') \rho \sqrt{wt}} \qquad \text{for}~\epsilon_q^{\text{ReHop}}~\text{(type I)}, \\
e^{-C \rho wt} \qquad \text{for}~\epsilon_q^{\text{ImHop}}~\text{or}~\epsilon_q^{\text{CHop}}~\text{(type II)} ~.
\end{cases}
\end{equation}

We know that the faster decay of the overlap with the initial state in the type II case reflects the ballistic motion of the domain.
If instead we overlap with the reference $W^p$ droplet shifted by $G(t) = \alpha w t$ (considering only $\epsilon^{CHop}_q$ for more generic behavior), we have
\begin{equation}
\braket{T_{G(t) = \alpha w t} \psi_0}{\psi(t)} \approx \left(1 - \frac{1}{M} \upsilon_{G}(t) \right)^{\rho M} \approx e^{-\rho \upsilon_{G}(t)} \approx e^{-(C' + i C'') \rho \sqrt{\alpha w t}} ~.
\end{equation}

While the calculations used the nearly-free-particle approximate model treatments, we conjecture that the above qualitatively different behaviors will hold also for general extensive-local interacting QMBS Hamiltonians that are type I vs type II.
It would be interesting to test these conjectures using direct numerical simulations in future work.
\section{Proof of Theorem~\ref{thm:Hermitiancut}}
\label{app:cuttingproof}
\begin{proof}
We assume a one-dimensional system with $N$ sites, and assume PBC.
Let us expand $H$ in terms of Pauli strings
\begin{equation}
    H = \sum_{j = 1}^N{P_{[j]}},\;\;\;
    P_{[j]} \defn \sum_{q = 1}^{R}\sum_{\mu \in \mP_q}{P^{\mu}_{[j,j+q-1]}},
\label{eq:Hfull}
\end{equation}
where $\mP_q$ is some subset of Pauli strings of length $q$, and to avoid overcounting we assume the leftmost and rightmost sites on each string are non-trivial Paulis (i.e., not identity); here and below, we also implicitly include amplitudes in $P^{\mu}_{[j,j+q-1]}$.
Hence $P_{[j]}$ is the sum of all non-trivial Pauli strings with their left ends at site $j$ (more explicitly in the case of PBC: for $N \gg R$ the ``left-most'' is defined in the sense of circulation $1 \to 2 \to 3 \dots \to N-1 \to N \to 1$). 
Using this expression, we can define the meaning of $H_{\Lambda_{\ell,r}}$ restricted to a patch $\Lambda_{\ell,r} = [\ell,\dots,r]$ as
\begin{equation}
    H_{\Lambda_{\ell,r}} = \sum_{j = \ell}^{r}{\sum_{q = 1}^{\min(R, r - j +1)}\sum_{\mu \in \mP_q}{P^{\mu}_{[j,j+q-1]}}},
\label{eq:Hlamfull}
\end{equation}
where we have ensured that all Pauli strings in $H$ that are completely within the patch $\Lambda_{\ell,r}$ are included in $H_{\Lambda_{\ell,r}}$, and anything that is not completely within that patch is excluded.\footnote{Note that defining a Hamiltonian restriction to a region depends on the basis choice in the single-site operator space, e.g., a length-2 Pauli string $\sigma_j^z \sigma_{j+1}^z = 4\hat{n}_j \hat{n}_{j+1} - 2\hat{n}_j - 2\hat{n}_{j+1} + 1$ contains length-2 and length-1 strings if the basis $\hat{n}_j \defn (1+\sigma_j^z)/2, \sigma_j^{0,x,y}$ is used instead of $\sigma_j^z,\sigma_j^{0,x,y}$.
This is not an issue as long as we fix single-site Hermitian operator basis and definitions of operator strings and their non-trivial (i.e., non-identity) constituents once and for all arguments.}
We now prove that if $H$ is type I, it satisfies the condition of Eq.~(\ref{eq:hermitianrelation}) with explicitly constructed boundary operator satisfying the stated properties.
By definition of type I, the Hamiltonian $H$ can be written as a sum of strictly local Hermitian terms that individually have $\{\ket{\psi_n}\}$ as eigenstates with eigenvalues $\{\varepsilon_{[j],n}\}$, i.e.,
\begin{equation}
    H = \sum_{j = 1}^N{h_{[j]}},\;\;\;h_{[j]}\ket{\psi_n} = \varepsilon_{[j],n}\ket{\psi_n}~, \quad \forall n.
\label{eq:typeIdefn}
\end{equation}
Note that there can be different such writings with all contributions annihilating $\{\ket{\psi_n}\}$ and not labelled by $j$ in general, though they can always be grouped by their locality into $\{h_{[j]}\}$; any non-uniqueness in the final writing is not important below as long as we keep the same choice throughout.
Assuming w.l.o.g. the individual strictly local terms $h_{[j]}$ are localized in a patch of size $R_{\max} \geq R$ near the corresponding $j$, for any patch $\Lambda_{\ell,r}$ of size greater than $2 R_{\max}$, we can always write
\begin{equation}
    H_{\Lambda_{\ell,r}} = A_{[\ell]} + B_{[r]} + \sum_{j,~\text{supp}(h_{[j]}) \in \Lambda_{\ell,r}}{h_{[j]}},
\label{eq:Hlamexpand}
\end{equation}
where $A_{[\ell]}$ and $B_{[r]}$ are terms with support at most on sites $[\ell, \ell + R_{\max} -1]$ and $[r - R_{\max} + 1, r]$, i.e., on the ``boundaries" of the patch $\Lambda$, and the $h_{[j]}$'s appearing in Eq.~(\ref{eq:Hlamexpand}) are in the ``bulk" of the patch.
This can be done by expanding each $h_{[j]}$ in Pauli strings, and defining the ``internal" $h_{[j]}$'s as the ones whose Pauli strings are all in $\Lambda_{\ell,r}$.
Then $A_{[\ell]} + B_{[r]}$ contains all Pauli strings within $\Lambda_{\ell,r}$ from all $h_{[j]}$'s which have some \textit{but not all} Pauli strings within $\Lambda$; these are then uniquely split between $\ell$ and $r$ (from $h_{[j]}$'s near $\ell$ and $r$ respectively, unique for large enough $|r-\ell|$) to give separate $A_{[\ell]}$ and $B_{[r]}$, and these can be defined each independent of the other edge: $A_{[\ell]}$ contains all strings within $\Lambda_{\ell,r}$ from all $h_{[j]}$ whose some \textit{but not all} strings lie withing $\Lambda_{\ell,r}$ and reside near $\ell$.
Note that this procedure defines the same $H_{\Lambda_{\ell,r}}$ irrespective of the specific $\{h_{[j]}\}$ used to write $H$ in Eq.~(\ref{eq:typeIdefn}):  Indeed, all Pauli strings in the intermediate steps, potentially appearing multiple times, are examined for their spatial content relative to $\Lambda_{\ell,r}$ and are aggregated in exactly the same way to reproduce the unique decomposition of $H$ in the Pauli strings and its restriction to $\Lambda_{\ell,r}$.
The specific $A_{[\ell]}$ and $B_{[r]}$ defined by the above procedure do depend on the choice of $\{ h_{[j]} \}$ used in the writing of $H$, but that is alright since we can fix one choice and use it throughout.
Hence, using the condition of Eq.~(\ref{eq:typeIdefn}) we obtain that 
\begin{equation}
    H_{\Lambda_{\ell,r}}\ket{\psi_n} = (A_{[\ell]} + B_{[r]} + f_{n,\Lambda_{\ell,r}})\ket{\psi_n} \defn (T_{\partial \Lambda}+ f_{n, \Lambda} )\ket{\psi_n}, 
\label{eq:Hlambound}
\end{equation}
where $T_{\partial \Lambda}$ is an operator with support strictly on $\mathcal{O}(1)$ sites on the boundaries of the patch $\Lambda$, and $f_{n,\Lambda_{\ell,r}} =\sum_{j,~\text{supp}(h_{[j]}) \in \Lambda_{\ell,r}}{\varepsilon_{[j],n}}$ is the sum of the eigenvalues of $\ket{\psi_n}$ under the $h_{[j]}$s that appear in Eq.~(\ref{eq:Hlamexpand}).
Note that the $A_{[\ell]}$'s and $B_{[r]}$'s are unique, and by the very construction they are Hermitian and are individually independent of the location of the other boundary for large enough $|r-\ell|$.
We can also prove the converse, i.e., if for sufficiently large patches $\Lambda_{\ell,r}$, we are able to write 
\begin{equation}
    H_{\Lambda_{\ell,r}}\ket{\psi_n} = (A_{[\ell]} + B_{[r]} + f_{n, \Lambda_{\ell, r}})\ket{\psi_n},
\label{eq:Lambdaconsteqn}
\end{equation}
where $A_{[\ell]}$ and $B_{[r]}$ are Hermitian terms with support at most on sites $[\ell, \ell + R_A -1]$ and $[r - R_B+1, r]$ respectively where $R_A$ and $R_B$ are $\mathcal{O}(1)$ and $A_{[\ell]}$ and $B_{[r]}$ are independent of $r$ and $\ell$ respectively for sufficiently large $|r-\ell|$, and $f_{n, \Lambda_{\ell, r}}$ is a constant that can depend on $n$ and $\Lambda_{\ell, r}$, we show that $H$ is type I. 
Since we assume Eq.~(\ref{eq:Lambdaconsteqn}) holds for all sufficiently large patches, for a smaller patch $\Lambda_{\ell+1,r}$ we have
\begin{equation}
    H_{ \Lambda_{\ell+1,r}}\ket{\psi_n} = (A_{[\ell+1]} + B_{[r]} +f_{n, \Lambda_{\ell+1, r}})\ket{\psi_n},  
\label{eq:Lambdabarconsteqn}
\end{equation}
where we have assumed that the right edge operator $B_{[r]}$ is independent of the location of the left edge since the patch $\Lambda_{\ell+1,r}$ is still sufficiently large. 
Combining Eqs.~(\ref{eq:Lambdaconsteqn}) and (\ref{eq:Lambdabarconsteqn}), we obtain
\begin{gather}
    (H_{\Lambda_{\ell,r}} - H_{ \Lambda_{\ell+1,r}})\ket{\psi_n} = (A_{[\ell]} - A_{[\ell+1]} + f_{n, \Lambda_{\ell, r}} - f_{n, \Lambda_{\ell+1, r}})\ket{\psi_n} \nn \\
    \implies (P_{[\ell]} + A_{[\ell+1]} - A_{[\ell]} + f_{n, \Lambda_{\ell+1, r}} - f_{n, \Lambda_{\ell, r}})\ket{\psi_n} = 0,  
\label{eq:Hlamcondition}
\end{gather}
where we have assumed that the size of the patches $\Lambda_{\ell,r}$ and $\Lambda_{\ell+1,r}$ are larger than $R$, the length of the largest Pauli string in $H$, and hence following Eqs.~(\ref{eq:Hfull}) and (\ref{eq:Hlamfull}) we have $H_{\Lambda_{\ell,r}} - H_{ \Lambda_{\ell+1,r}} = P_{[\ell]}$.
Using Eqs~(\ref{eq:Hfull}) and (\ref{eq:Hlamcondition}), we can directly show that $H$ is type I, since
\begin{equation}
    H = \sum_{j = 1}^L{h_{[j]}},\;\;\;h_{[j]} \defn P_{[j]} + A_{[j+1]} - A_{[j]},\;\;\;h_{[j]}\ket{\psi_n} = (f_{n, \Lambda_{j, r}} - f_{n, \Lambda_{j+1, r}})\ket{\psi_n} \defn \varepsilon_{[j], n}\ket{\psi_n}, 
\label{eq:TypeIproof}
\end{equation}
where we have used the assumption that the left edge operator $A_{[\ell]}$ from the defining equation Eq.~(\ref{eq:Lambdabarconsteqn}) is independent of the right edge.
Note that since $\{ A_{[\ell]} \}$ are Hermitian, the obtained strictly local symmetric terms $\{h_{[j]}\}$ are also Hermitian, thus completing the proof for type I.
Note that in the definition of $\varepsilon_{[j],n}$ in Eq.~(\ref{eq:TypeIproof}), in particular the fact that it should not depend on $r$ should be automatically follow from the fact that Eq.~(\ref{eq:Hlamcondition}) for an extensive local Hamiltonian, which should impose some structure on $f_{n, \Lambda_{\ell, r}}$. However, for our purposes, it is not necessary to get into the details of this structure.
The proof for type II part,  i.e.,  showing that a Hamiltonian $H$ is type II if and only if its action can be written as a boundary non-Hermitian action,  proceeds similarly. 
Since by definition of type II, any Hamiltonian $H$ can be written as a sum of strictly local non-Hermitian terms that individually have $\{\ket{\psi_n}\}$ as eigenstates,  we can again write Eq.~(\ref{eq:typeIdefn}) with general non-Hermitian $h_{[j]}$. 
Eqs.~(\ref{eq:Hlamexpand}) and (\ref{eq:Hlambound}) follow similarly,  with the only change being that $A_{[\ell]}$ and $B_{[r]}$ are now not guaranteed to be Hermitian. 
The converse proof also follows similarly, where if $A_{[\ell]}$ and $B_{[r]}$ \textit{cannot be made} Hermitian in Eq.~(\ref{eq:Lambdaconsteqn}), then we can guarantee a rewriting of the Hamiltonian as Eq.~(\ref{eq:TypeIproof}),  which is a sum of strictly local non-Hermitian terms.
\end{proof}

%
\section{Proof of Theorem \ref{thm:transfermat}}
\label{app:transfermatrix}
\begin{proof}
To prove this, we first show that the action of a truncated symmetry operator $U^\theta_\Lambda$ on a sufficiently large contiguous region $\Lambda = [\ell, \cdots, r]$ cannot be written as a product of two strictly local \textit{unitary} operators $S^\theta_{X_\ell}$ and $\widetilde{S}^\theta_{X_r}$ that have supports on boundary subregions $X_\ell = [\ell, \cdots, x_\ell]$ and $X_r = [x_r, \cdots, r]$, if the transfer matrix of the MPS is full-rank.
That is, even though we know from Eq.~\eqref{eq:Uthetafinal} that there are generically \textit{non-unitary} boundary operators that reproduce the action of $U^\theta_\Lambda$, we show that
\begin{equation}
        U^\theta_\Lambda\ket{\psi} \neq  S^\theta_{X_\ell} \widetilde{S}^\theta_{X_r}\ket{\psi}\;\text{if $E$ is full-rank},
\label{eq:Uthetacondition}
\end{equation}
for any unitary $S^\theta_{X_\ell}$ and $\widetilde{S}^\theta_{X_r}$.
This also means that under such conditions, $W^\theta_{X_\ell}$ and $W^\theta_{X_r}$ in Eq.~(\ref{eq:Uthetafinal}) cannot be unitary, and hence $O_{X_\ell}$ and $O_{X_r}$ cannot be Hermitian. 
We first note that the action of the L.H.S. in Eq.~(\ref{eq:Uthetacondition}) on the MPS gives us Eq.~(\ref{eq:Uthetaaction}).
We can perform a Schmidt decomposition of the injective MPS over the region $\Lambda$ and its compliment to show that the following conditions are equivalent
\begin{gather}
    U^\theta_\Lambda \ket{\psi} = S^\theta_{X_\ell} \widetilde{S}^\theta_{X_r}\ket{\psi}\; 
\iff\;U^\theta_\Lambda \ket{\psi^{\alpha_\ell, \alpha_{r+1}}_\Lambda} = S^\theta_{X_\ell} \widetilde{S}^\theta_{X_r}\ket{\psi^{\alpha_\ell, \alpha_{r+1}}_\Lambda},
\label{eq:twocond}
\end{gather}
where $\{\ket{\psi^{\alpha_\ell, \alpha_{r+1}}_{\Lambda}}\}$ are the truncated MPS states on the region $\Lambda$, defined as
\begin{equation}
    \ket{\psi^{\alpha_\ell,\alpha_{r+1}}_{\Lambda}} \defn \sum_{\{s_j\}, \{\alpha_{j \neq \ell, r+1}\}}{A^{s_{\ell}}_{\alpha_\ell,\alpha_{\ell+1}} \cdots A^{s_r}_{\alpha_r, \alpha_{r+1}}}\ket{s_\ell \cdots s_r}.
\label{eq:psitrunc}
\end{equation}
Note that below for brevity we will suppress the auxiliary indices such as $\{\alpha_j\}$ on the states $\ket{\psi^{\alpha_\ell,\alpha_{r+1}}_{\Lambda}}$ and refer to their collection simply as $\ket{\psi_{\Lambda}}$.
Using Eq.~(\ref{eq:Uthetaaction}), we have
\begin{equation}
    U^\theta_\Lambda \ket{\psi_{\Lambda}} = \sum_{\{s_j\}}{V(\theta) A^{s_{\ell}} \cdots A^{s_r} V(\theta)^\dagger}\ket{s_\ell \cdots s_r} ~.
\label{eq:truncstate}
\end{equation}
If the equality in Eq.~(\ref{eq:twocond}) has to hold, we can further argue that as long as $|\Lambda - (X_\ell \cup X_r)|$ is larger than the injectivity length $R_{\rm inj}$, it has to be the case that\footnote{The essential argument is that for fixed matrices $M_1$, $M_2$, $M_3$, $M_4$ if we have the matrix relation $M_1 N M_2 = M_3 N M_4$ , that holds for \textit{all} matrices $N$, then we must have $M_1 = \lambda M_3$ and $M_2 = \lambda^{-1} M_4$ for some number $\lambda$.
In this context, denoting the dimensions of the on-site physical and auxiliary Hilbert spaces as $d$ and $D$, we interpret Eq.~(\ref{eq:twocond}) as such a matrix relation choosing $N$ to be the MPS in the region $\Lambda - (X_\ell \cup X_r)$ (which, depending on the choices of the physical indices, can be any $D \times D$ matrix due to injectivity), and choose $M_1$ and $M_3$ to be the left multipliers on both sides in Eq.~(\ref{eq:twocond}) [both interpreted as $D d^{|X_\ell|} \times D$ matrices] leading to Eq.~(\ref{eq:Xltrunc}), and $M_2$ and $M_4$ to be the right multipliers [both interpreted as $D \times D d^{|X_r|}$ matrices] leading to Eq.~(\ref{eq:Xrtrunc}).}
\begin{align}
    &S^\theta_{X_\ell}\ket{\psi_{X_{\ell}}} = \sum_{\{s_j\}}{V(\theta) A^{s_{\ell}} \cdots A^{s_{x_\ell}}}\ket{s_\ell \cdots s_{x_\ell}},\label{eq:Xltrunc}\\
    &\widetilde{S}^\theta_{X_r}\ket{\psi_{X_{r}}} = \sum_{\{s_j\}}{A^{s_{x_r}} \cdots A^{s_r}} V(\theta)^\dagger\ket{s_{x_r}  \cdots s_{r}}\label{eq:Xrtrunc}
\end{align}
where $\ket{\psi_{X_\ell}}$ is the state $\ket{\psi}$ truncated on the region $X_\ell$, similar to Eq.~(\ref{eq:psitrunc}), and similarly for $\ket{\psi_{X_r}}$.
We then consider the overlaps of the states on L.H.S. and R.H.S. on themselves in Eq.~(\ref{eq:Xltrunc}) by contracting the physical indices in the MPS block and retaining the suppressed auxiliary indices.
If $S^\theta_{X_{\ell}}$ is unitary, we obtain
\begin{equation}
    E^{|X_\ell|} = [V(\theta) \otimes V^\ast(\theta)]E^{|X_\ell|},\qquad E \defn \sum_{s}{A^s \otimes (A^s)^\ast},
\label{eq:transferMPS}
\end{equation}
where the $\otimes$ is defined over the auxiliary Hilbert space, and $E$ is the \textit{transfer matrix} of the MPS tensor $A$. 
If $E$ is full-rank, i.e., if all eigenvalues of $E$ are non-zero, it follows that $V(\theta) \otimes V^\ast(\theta) = \mathds{1}$, which can only occur if $V(\theta) = e^{i\theta\alpha}\mathds{1}$, i.e., if the MPS tensor is locally symmetric under $U^\theta$.
Hence, given any symmetric MPS that is not locally symmetric, if all eigenvalues of its transfer matrix are non-zero, $S^\theta_{X_\ell}$ cannot be unitary.
It follows that $\mathcal{O}_{X_\ell}$ and $\widetilde{\mathcal{O}}_{X_r}$ in Eq.~(\ref{eq:ULambdaboundary}) cannot be Hermitian, which by Thm.~\ref{thm:Hermitiancut} implies that $\sum_j{L_j}$ is a type II parent Hamiltonian of $\ket{\psi}$. 
Note that for the transfer matrix of an injective MPS (which can be shown to have a unique largest eigenvalue $1$), the full rank condition implies that the correlation length (which is defined as $-1/\log |\xi_2|$, where $\xi_2$ is the second largest eigenvalue of $E$) is non-zero .
\end{proof}
\end{document}